\documentclass[10pt,reqno,a4paper]{amsart}
%
% Time-stamp: <2018-05-21 15:58:47 jmf>
%
\usepackage[british]{babel}
\usepackage{amssymb,amstext,amsthm,eucal,bbm,mathrsfs,amscd}
\usepackage{bbold}%,relsize}
\usepackage{stmaryrd}
\usepackage[hmargin=1in]{geometry}
\usepackage{array,color,slashed}
\usepackage[utf8]{inputenc}
\usepackage[small]{eulervm}
\usepackage{tgpagella}
\usepackage{enumitem,multicol,cite}
\usepackage[unicode]{hyperref}
\hypersetup{%
  pdftitle   = {Killing superalgebras for lorentzian six-manifolds},
  pdfkeywords = {Spencer cohomology, Lie superalgebras, superconformal symmetry, six dimensions},
  pdfauthor  = {Paul de Medeiros, José Figueroa-O'Farrill, Andrea Santi},
  pdfcreator = {\LaTeX\ with package \flqq hyperref\frqq}
}
\theoremstyle{plain}
\newtheorem{lemma}{Lemma}
\newtheorem{proposition}[lemma]{Proposition}
\newtheorem{theorem}[lemma]{Theorem}
\newtheorem{corollary}[lemma]{Corollary}
\newtheorem{definition}[lemma]{Definition}
\theoremstyle{definition}
\newtheorem*{remark}{Remark}
%
%--- some definitions
%
\newcommand{\half}{\tfrac{1}{2}}
\newcommand{\fg}{\mathfrak{g}}
\newcommand{\fa}{\mathfrak{a}}

\newcommand{\fh}{\mathfrak{h}}
\newcommand{\fk}{\mathfrak{k}}

\newcommand{\fp}{\mathfrak{p}}
\newcommand{\fphat}{\mathfrak{\hat p}}
\newcommand{\fkhat}{\mathfrak{\hat k}}
\newcommand{\fr}{\mathfrak{r}}

\newcommand{\fso}{\mathfrak{so}}
\newcommand{\fsp}{\mathfrak{sp}}
\newcommand{\RR}{\mathbb{R}}
\newcommand{\HH}{\mathbb{H}}
\newcommand{\CC}{\mathbb{C}}
\newcommand{\ZZ}{\mathbb{Z}}
\renewcommand{\SS}{\mathbb{S}}
\newcommand{\be}{\boldsymbol{e}}
\newcommand{\bep}{\boldsymbol{\epsilon}}
\newcommand{\Cl}{C\ell}
\newcommand{\1}{\mathbb{1}}
\newcommand{\sbar}{\overline{s}}
\newcommand{\eD}{\mathscr{D}}

\newcommand{\eE}{\mathscr{E}}
\newcommand{\eL}{\mathscr{L}}

\newcommand{\eR}{\mathscr{R}}
\newcommand{\eT}{\mathscr{T}}
\newcommand{\eU}{\mathscr{U}}
\newcommand{\eV}{\mathscr{V}}
\DeclareMathOperator{\Id}{Id}
\DeclareMathOperator{\tr}{tr}

\DeclareMathOperator{\Spin}{Spin}
\DeclareMathOperator{\AdS}{AdS}
\DeclareMathOperator{\Sph}{S}
\DeclareMathOperator{\SL}{SL}
\DeclareMathOperator{\Sp}{Sp}

\DeclareMathOperator{\End}{End}
\DeclareMathOperator{\Hom}{Hom}

\DeclareMathOperator{\diag}{diag}
\definecolor{orange}{rgb}{1.0, 0.5, 0.0}
\allowdisplaybreaks[1]
\setlength{\parindent}{0pt}
%
%--- the document proper
%
\begin{document}
\title{Killing superalgebras for lorentzian six-manifolds}
\author[de~Medeiros]{Paul de Medeiros}
\author[Figueroa-O'Farrill]{José Figueroa-O'Farrill}
\author[Santi]{Andrea Santi}
\address[PdM]{Department of Mathematics and Physics,
  University of Stavanger, 4036 Stavanger, Norway}
\address[JMF]{Maxwell Institute and School of Mathematics, The University of
  Edinburgh, Edinburgh EH9 3FD, Scotland}
  \address[AS]{Dipartimento di Matematica, Università di Bologna, Piazza di Porta San Donato 5, 40126, Bologna, Italy}
%\email{paul.demedeiros@uis.no,j.m.figueroa@ed.ac.uk,asanti.math@gmail.com}
\date{\today}
\begin{abstract}
  We calculate the Spencer cohomology of the $(1,0)$ Poincaré
  superalgebras in six dimensions: with and without R-symmetry.  As the cases of four and eleven dimensions taught us, we may read off from this calculation a Killing
  spinor equation which allows the determination of which geometries
  admit rigidly supersymmetric theories in this dimension.  We prove
  that the resulting Killing spinors generate a Lie superalgebra and
  determine the geometries admitting the maximal number of such Killing
  spinors. They are divided in two branches. One branch consists of the 
  lorentzian Lie groups with bi-invariant metrics and, as a special
  case, it includes the lorentzian Lie groups with a self-dual Cartan
  three-form which define the maximally supersymmetric backgrounds of $(1,0)$ Poincaré supergravity in six dimensions. The notion of Killing
  spinor on the other branch does not depend on the choice of a three-form
  but rather on a one-form valued in the R-symmetry algebra. In this case, we
  obtain  three different (up to local isometry) maximally
  supersymmetric backgrounds, which are distinguished by the causal type
  of the one-form.
\end{abstract}
\maketitle
\tableofcontents

\section{Introduction}
\label{sec:introduction}

There has been considerable interest over recent years in the systematic
exploration of curved backgrounds that support some amount of rigid
(conformal) supersymmetry. The primary motivation being that quantum
field theories on such backgrounds are often amenable to the powerful
techniques of supersymmetric localisation, typically revealing
interesting new insights and exact results \cite{Pestun:2007rz,
  Kapustin:2009kz, Drukker:2010nc, Jafferis:2010un, Jafferis:2011zi,
  Kallen:2012cs, Hosomichi:2012he, Kallen:2012va, Kim:2012ava}.   

By far the most successful strategy in this direction was initiated by
Festuccia and Seiberg \cite{Festuccia:2011ws}, originally for rigidly
supersymmetric backgrounds in four dimensions but subsequently
generalised \cite{Jia:2011hw, Samtleben:2012gy, Klare:2012gn,
  Dumitrescu:2012ha, Cassani:2012ri, Liu:2012bi, deMedeiros:2012sb,
  Samtleben:2012ua} to other dimensions in both euclidean and lorentzian
signatures. Their method takes advantage of the existence of some
locally supersymmetric supergravity theory coupled to one or more field
theory supermultiplets. In any such theory, it is possible to take a
certain rigid limit in which the Planck mass tends to infinity and the
degrees of freedom from the gravity supermultiplet are effectively
frozen out. What remains after taking this limit is a rigidly
supersymmetric field theory on a bosonic supersymmetric background of
the original supergravity theory. The Killing spinor equations which
characterise this supersymmetric background are simply read off from the
supersymmetry variation of the gravitino in the rigid limit. It is
important to emphasise that these supersymmetric backgrounds need not
solve the supergravity field equations. For example, in four dimensions,
the old minimal off-shell formulation of Poincaré supergravity contains
auxiliary fields which are all set to zero by the field equations.
However, many interesting rigidly supersymmetric backgrounds of this
theory are not solutions because they are supported by one or more
non-zero auxiliary fields \cite{Festuccia:2011ws, deMedeiros:2016srz}.

The precise details of the rigid supersymmetry supported by any bosonic
supersymmetric supergravity background are encoded by its Killing
superalgebra \cite{FigueroaO'Farrill:2004mx, FigueroaO'Farrill:2007ar,
  FigueroaO'Farrill:2007ic, FigueroaO'Farrill:2008ka,
  FigueroaO'Farrill:2008if,FigueroaO'Farrill:2012fp}. The Killing
superalgebra is a Lie superalgebra whose odd part consists of all the
Killing spinors supported by the background and whose even part contains
Killing vectors which preserve the background. For supergravity theories
with a non-trivial R-symmetry, the even part of the Killing superalgebra
may also contain R-symmetries which preserve the background. While the
appearance of the Killing superalgebra may seem somewhat peripheral in
relation to the rigid limit described above, it is clearly an object of
fundamental significance in the description of rigid supersymmetry and
in understanding special geometrical properties of the backgrounds which
support it.

So much so that, somewhat in the spirit of the Erlangen program, one
might prefer to take the classification of Killing superalgebras as the
central question, with no prior knowledge of supergravity, and then
deduce as a by product all the possible rigidly supersymmetric
backgrounds (which may or may not correspond to backgrounds of some
known supergravity theory). This is certainly the philosophy we have
adopted in some of our previous works, which has led to the
classification of Killing superalgebras for maximally supersymmetric
lorentzian backgrounds in dimensions eleven
\cite{Figueroa-OFarrill:2015rfh, Figueroa-O'Farrill:2015utu} and four
\cite{deMedeiros:2016srz}. The key property of Killing superalgebras
that permits such a classification is the fact that they are all
filtered deformations (in a certain technical sense which we review in
\S\ref{sec:KSAfd}) of some subalgebra of the Poincaré superalgebra,
possibly extended by R-symmetries. As one might expect, there is a
natural cohomology theory (a generalised version of Spencer cohomology)
which governs these filtered deformations at the infinitesimal level,
and the essence of the classification is the calculation of a certain
Spencer cohomology group in degree two. In dimensions eleven and four
\cite{Figueroa-OFarrill:2015rfh, Figueroa-O'Farrill:2015utu,
  deMedeiros:2016srz}, this calculation actually prescribes a Killing
spinor equation which is in precise agreement with the Killing spinor
equation that characterises bosonic supersymmetric backgrounds of
minimal Poincaré supergravity in these respective dimensions (more
accurately, in the \lq old minimal' off-shell formulation in four
dimensions \cite{deMedeiros:2016srz}). So, at least in these cases, all
the rigidly supersymmetric backgrounds are indeed backgrounds of a known
Poincaré supergravity theory.

In this paper, we shall extend these considerations to look at Killing
superalgebras for lorentzian backgrounds in six dimensions. There are
several reasons that make dimension six especially interesting. Recall
that the Lie superalgebra $\mathfrak{osp}(6,2|N)$ is isomorphic to the
$N$-extended conformal superalgebra of $\RR^{5,1}$, and that conformal
superalgebras do not exist in higher dimensions (at least, not in the
traditional sense of Nahm \cite{Nahm:1977tg}). Furthermore the
$\mathfrak{sp}(N)$ R-symmetry subalgebra of $\mathfrak{osp}(6,2|N)$ is
nonabelian, for any $N>0$. Now let $N=1$. By omitting the dilatations
and special conformal transformations in the even part of
$\mathfrak{osp}(6,2|1)$, together with the special conformal
supercharges in the odd part, we obtain a Lie superalgebra that we will
denote by ${\hat \fp}$. If we also omit the $\mathfrak{sp}(1)$
R-symmetry in ${\hat \fp}$, we recover the ordinary $(1,0)$ Poincaré
superalgebra in six dimensions (without R-symmetry) that we will denote
by $\fp$.

Our first goal in this paper will be to calculate the relevant Spencer
cohomology groups for both $\fp$ and ${\hat \fp}$, see Theorems
\ref{thm:spencer} and \ref{thm:spencer-R}. In marked contrast with the
situation in dimensions eleven and four, where the inclusion of
R-symmetry is immaterial, here in dimension six we will find that the
relevant Spencer cohomology groups for $\fp$ and ${\hat \fp}$ are
different. In both cases, we then go on to use the explicit expression
for a particular component of the Spencer cocycle representative to
prescribe an appropriate Killing spinor equation. For the case where the
R-symmetry is not gauged, this Killing spinor equation is given by
Definition~\ref{def:KillingSpinor} in Section~\ref{sec:killing-spinors}
while, for the case of gauged R-symmetry, the Killing spinor equation is
defined by \eqref{eq:KillingSpinorIndices} in
Section~\ref{sec:killing-superalgebras}. In both cases, on a
six-dimensional spin manifold $M$ equipped with a lorentzian metric $g$,
we find that the extra background data needed to define this Killing
spinor equation consists of a three-form $H$ and an
$\mathfrak{sp}(1)$-valued one-form $\varphi$. (For the case of gauged
R-symmetry, one must also specify a flat $\mathfrak{sp}(1)$ connection
$C$.) The important distinction is that $H$ must be self-dual for $\fp$
whereas, for ${\hat \fp}$, its anti-self-dual component $H^-$ need not
be zero. It is important to stress that the Killing spinor equation we
deduce from Spencer cohomology agrees with the Killing spinor equation
for bosonic supersymmetric backgrounds of $(1,0)$ Poincaré supergravity
in six dimensions {\emph{only}} when $H$ is self-dual and $\varphi =0$.
(For the case of gauged R-symmetry, one can remove the flat connection
$C$ by an appropriate choice of gauge.) It is an intriguing question as
to whether our more general Killing spinor equation can be recovered
from supergravity, perhaps via superconformal compensators.

We then proceed to the construction of Killing superalgebras based on
these Killing spinors.  The geometric content of the cocycle conditions
for Spencer cohomology is that the Dirac current of a Killing spinor
(derived from Spencer cohomology) is a Killing vector and that the Lie
derivative along the Dirac current annihilates the Killing spinor
itself.  In order to prove that Killing spinors generate a Lie
superalgebra, the only additional requirement is that the Lie derivative
along the Dirac current of any Killing spinor preserves the space of
Killing spinors.  This is guaranteed if the connection $\eD$ defining
the notion of a Killing spinor is invariant along the flow generated by
the Dirac current of any Killing spinor.

An equivalent condition for the invariance of $\eD$ is the invariance of
the other geometric data defining $\eD$: the three-form $H$ and
$\fsp(1)$-valued one-form $\varphi$. For $\fp$, we establish the
existence of a Killing superalgebra provided $H$ is closed and $\varphi$
is coclosed, see Theorem \ref{thm:KSAI}. For ${\hat \fp}$, if
$\varphi = 0$, we find that a Killing superalgebra exists provided $H$
is closed and $H^-$ is parallel with respect to the metric connection
with skew-symmetric torsion given by $H^+$, see Theorem \ref{thm:KSAII}.

Finally, we present in Theorem  \ref{thm:maxsusy} the classification (up
to local isometry) of all backgrounds which admit the maximal number of
Killing spinors. In addition to Minkowski space $\RR^{5,1}$, we find
that there are two distinct branches of maximally supersymmetric
backgrounds. All backgrounds on the first branch are conformally flat
and have $H=0$ with $\varphi = a \otimes R$, where $a$ is a non-zero
parallel one-form and $R$ is a non-zero element of $\mathfrak{sp}(1)$.
Up to local isometry, there are three different backgrounds on this
branch which depend only on the causal type of $a$:
\begin{itemize}
\item $\AdS_5 \times \RR$, if $a$ is spacelike;
\item $\RR \times \Sph^5$, if $a$ is timelike;
\item the symmetric plane wave with metric $g_-$ in \eqref{eq:ppwave}, if $a$ is null.
\end{itemize}
All backgrounds on the second branch have $\varphi =0$ and a non-zero
$H$ which is identified with the parallel Cartan three-form of a
six-dimensional Lie group $M$ with bi-invariant lorentzian metric $g$.
Up to local isometry, the list of different backgrounds on this branch
is as follows:
\begin{itemize}
\item $\AdS_3\times \Sph^3$;
\item $\AdS_3 \times \RR^3$;
\item $\RR^{2,1}\times \Sph^3$;
\item the symmetric plane wave with metric $g_-$ in \eqref{eq:ppwave}.
\end{itemize}
In the first three cases, $H$ can be any linear combination (with
non-zero coefficients) of the volume forms on the respective $\AdS_3$
and $\Sph^3$ factors. If $H$ is non-zero and self-dual, only the first
and fourth cases above are viable, and we recover precisely the
classification \cite{CFOSchiral,Gutowski:2003rg} of maximally
supersymmetric backgrounds of $(1,0)$ Poincaré supergravity in six
dimensions.

%in \cite{Meessen}

In conclusion, we verify that all of these maximally supersymmetric
backgrounds do indeed admit a Killing superalgebra and that different
backgrounds have different associated Killing superalgebras, in the
sense of filtered deformations.

This paper is organised as follows. In Section~\ref{sec:conventions} we
introduce our six-dimensional spinor conventions, set the notation and
prove a number of algebraic results that we will use in the rest of the
paper. In Section~\ref{sec:complex-1-0-poincare} we introduce the
Spencer cohomology complexes associated to the $(1,0)$ Poincaré
superalgebra $\fp$ and its extension $\fphat$ by the R-symmetry. The
relevant cohomology groups are computed in
Section~\ref{sec:calculation-h2-2fp} for $\fp$ and in
Section~\ref{sec:calc-h2-2hatfp} for $\fphat$. From these calculations
we extract the Killing spinor equations and in
Section~\ref{sec:killing-spinors} we show that, subject to some
additional conditions on the geometric data given by the Spencer
cohomology, these Killing spinors generate a Lie superalgebra. This is
revisited in Section~\ref{sec:killing-superalgebras} in a slightly
different formalism, paying particularly close attention to the case of
gauged R-symmetry. Finally, in Section~\ref{sec:maxim-supersymm-back} we
determine the geometries admitting the maximal number of Killing
spinors. These are then the candidate six-dimensional lorentzian
manifolds on which to construct rigidly supersymmetric theories with
eight real supercharges.

\section{Conventions}
\label{sec:conventions}

Let $(V,\eta)$ be a six-dimensional (``mostly plus'') lorentzian vector
space.  We may choose a $\eta$-orthonormal basis
$(\be_0,\be_1,\dots,\be_5)$ for $V$ relative to which
$\eta(\be_\mu,\be_\nu) = \eta_{\mu\nu} = \diag(-1,+1,\dots,+1)$.

We will let $\flat : V \to V^*$ and $\sharp: V^* \to V$ denote
the musical isomorphisms:
\begin{equation}
  v^\flat(w) = \eta(v,w) \qquad\text{and}\qquad \xi(v) =
  \eta(\xi^\sharp,v).
\end{equation}
It follows, as usual, that $\flat$ and $\sharp$ are mutual inverses.  We
let $\fso(V)$ be the Lie algebra of $\eta$-skew-symmetric
endomorphisms of $V$:
\begin{equation}
  \fso(V) = \left\{L : V \to V ~ \middle | ~ \eta(L v, w) = - \eta(v, Lw)
    \quad\forall v,w \in V\right\}.
\end{equation}
There is a vector space (in fact, an $\fso(V)$-module) isomorphism
$\fso(V) \cong \Lambda^2 V$.  If $L \in \fso(V)$, we define $\omega_L
\in \Lambda^2V$ by
\begin{equation}
  L v = - \iota_{v^\flat}\omega_L.
\end{equation}
Conversely, if $\omega \in \Lambda^2V$, we define $L_\omega \in \fso(V)$
by the same relationship: namely,
\begin{equation}
  L_\omega v = - \iota_{v^\flat} \omega.
\end{equation}
It then follows that these two maps are mutual inverses: $L_{\omega_L} =
L$ and $\omega_{L_\omega} = \omega$.  Relative to the basis $(\be_\mu)$ for
$V$, we find that
\begin{equation}
  \omega_L = \tfrac12 L^{\mu\nu} \be_\mu \wedge
  \be_\nu\qquad\text{where}\qquad L \be_\mu = \be_\nu L^\nu{}_\mu.
\end{equation}

We define the Clifford algebra $\Cl(V)$ by the Clifford relations
\begin{equation}
  v \cdot v = \eta(v,v) \1.
\end{equation}
As a real unital associative algebra, $\Cl(V) \cong \HH(4)$, whereas we
have an isomorphism of Lie groups $\Spin(V) \cong \SL(2,\HH)$.  There
is, up to isomorphism, a unique irreducible Clifford module $\Sigma$
which is quaternionic and of dimension $4$.  We prefer to think of
$\Sigma$ as an $8$-dimensional complex vector space with an invariant
quaternionic structure.  As a representation of $\Spin(V)$ it breaks up
as $\Sigma = \Sigma_+ \oplus \Sigma_-$, where $\Sigma_\pm$ are
irreducible representations: either quaternionic of dimension $2$ or,
equivalently, complex $4$-dimensional with an invariant quaternionic
structure.  Let $\Delta$ denote the fundamental representation of
$\Sp(1)$: it can be thought of as a complex $2$-dimensional
representation with an invariant quaternionic structure.  The tensor
product $\Sigma_+ \otimes_\CC \Delta$ is the complexification of a real
representation of $\fso(V)$ we call $S$.  In other words, $S\otimes\CC =
\Sigma_+ \otimes_{\CC} \Delta$.  In practice we prefer to work with $S
\otimes \CC$; although we will not mention this explicitly.

There is a dual pairing $\left<-,-\right>$ between $\Sigma_+$ and
$\Sigma_-$ relative to which,
\begin{equation}
  \left<v \cdot s_+, s_-\right> = - \left<s_+, v \cdot s_-\right>
\end{equation}
for all $v \in V$ and $s_\pm \in \Sigma_\pm$.  We may extend it to a
\emph{symmetric} inner product on $\Sigma = \Sigma_+ \oplus
\Sigma_-$, also denoted $\left<-,-\right>$, in such a way that
$\Sigma_\pm$ are (maximally) isotropic subspaces.  We will use the
notation $\sbar = \left<s,-\right>$, so that $\sbar_1 s_2 = \left<s_1,
  s_2\right>$.

If we let $\bep_A$, $A = 1,2$, denote a basis for $\Delta$, any $s \in
S$ can be written as $s = s^A \bep_A$;  we will often just work with the
components $s^A \in \Sigma_+$.  On $\Delta$ we have an invariant
symplectic structure $\epsilon$, normalised to $\epsilon_{12} =
\epsilon^{12} = 1$. In \S\ref{sec:killing-spinors}, we will also make
use of the skew-symmetric bilinear form $\left(-,-\right)$ on
$\Sigma\otimes_{\CC}\Delta$ given by the tensor product of
$\left<-,-\right>$ and $\epsilon$.

We use the Northeast convention to raise and lower
indices with $\epsilon$:
\begin{equation}
  u_A = \epsilon_{AB} u^B \qquad\text{and}\qquad u^A = u_B
  \epsilon^{BA},
\end{equation}
from where it follows that $\epsilon_{AB} \epsilon^{AC} = \delta_B^C$.
Since $\Delta$ is $2$-dimensional, $\Lambda^2 \Delta^*$ is one-dimensional
and spanned by $\epsilon$.  Any bivector
$B\in\otimes^2\Delta\cong\otimes^2\Delta^*$ can be decomposed
into $B_{AB} = (B_0)_{AB} + \epsilon_{AB} b$, where $(B_0)_{AB} =
(B_0)_{BA}$ and $b = \tfrac12 B_{AB}\epsilon^{AB}$.  The notation
suggests that if we think of the bivector as an endomorphism of
$\Delta$, then $B^A{}_B = (B_0)^A{}_B + b \delta^A_B$ and $B_0$ is
traceless.

The basis element $\be_\mu$ for $V$ is sent to the endomorphism of $\Sigma$
denoted $\Gamma_\mu$, where
\begin{equation}
  \Gamma_\mu \Gamma_\nu = \Gamma_{\mu\nu} + \eta_{\mu\nu}.
\end{equation}
We define the volume in the Clifford algebra by $\Gamma_7 :=
\Gamma_{012345}$ and the projection operators $P_\pm = \tfrac12( \1 \pm
\Gamma_7)$. Then $\Gamma_7 s^A = s^A$ and hence $P_+ s^A = s^A$.

We can make two kinds of bilinears from $s \in S$: the \emph{Dirac
  current} $\kappa=\kappa(s,s)\in V$ with components
\begin{equation}
  \kappa^\mu = \epsilon_{AB} \sbar^A \Gamma^\mu s^B \iff \sbar^A
  \Gamma^\mu s^B = \tfrac12 \epsilon^{AB} \kappa^\mu,
\end{equation}
and a family $\omega$ of $3$-forms given by
\begin{equation}
\label{eq:fam3}
  \omega_{\mu\nu\rho}{}^{AB} = \sbar^A \Gamma_{\mu\nu\rho} s^B.
\end{equation}

\begin{lemma}
$\omega \in \Lambda^3_+ V \otimes \odot^2 \Delta$.
\end{lemma}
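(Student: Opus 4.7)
The statement has two ingredients: that $\omega$ is symmetric in the $\Delta$-indices $A,B$, and that it is self-dual as a 3-form on $V$. I would prove them independently, using only (i) the symmetry of the pairing $\langle-,-\rangle$ together with the Clifford invariance rule $\langle v\cdot s_+,s_-\rangle=-\langle s_+,v\cdot s_-\rangle$, and (ii) the fact that $\Gamma_7 s^A = s^A$.

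For the symmetry in $A,B$, I would unfold the definition as
\begin{equation}
\omega_{\mu\nu\rho}{}^{AB}=\sbar^{A}\Gamma_{\mu\nu\rho}s^{B}=\langle s^{A},\Gamma_{\mu\nu\rho}s^{B}\rangle,
\end{equation}
observe that $\Gamma_{\mu\nu\rho}s^{B}\in\Sigma_{-}$ because an odd product of $\Gamma$'s anticommutes with $\Gamma_{7}$ and so flips chirality, and then exploit that $\langle-,-\rangle$ is symmetric to swap the two arguments. Moving $\Gamma_{\mu\nu\rho}$ across the pairing one gamma at a time produces a factor of $(-1)^{3}=-1$ and reverses the order of the three indices; for $\mu,\nu,\rho$ pairwise distinct (which is automatic in the antisymmetrised product) this reversal absorbs another sign since $\Gamma_{\rho\nu\mu}=-\Gamma_{\mu\nu\rho}$. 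The two signs cancel and one obtains $\omega_{\mu\nu\rho}{}^{AB}=\omega_{\mu\nu\rho}{}^{BA}$, hence $\omega\in\Lambda^{3}V\otimes\odot^{2}\Delta$.

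For the self-duality, I would use $P_{+}s^{A}=s^{A}$ (equivalently $\Gamma_{7}s^{A}=s^{A}$) to rewrite
\begin{equation}
\Gamma_{\mu\nu\rho}s^{B}=\Gamma_{\mu\nu\rho}\Gamma_{7}s^{B},
\end{equation}
and then apply the standard six-dimensional Hodge-duality identity inside the Clifford algebra, namely
\begin{equation}
\Gamma_{\mu\nu\rho}\Gamma_{7}=\tfrac{1}{3!}\,\epsilon_{\mu\nu\rho}{}^{\sigma\tau\lambda}\Gamma_{\sigma\tau\lambda},
\end{equation}
which one derives once and for all from $\Gamma_{7}=\Gamma_{012345}$ and the Clifford relations (the orientation and signature determine the sign, and I would fix the sign by the explicit check $\Gamma_{012}\Gamma_{7}=\Gamma_{345}$). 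Inserting this identity into the definition of $\omega$ gives $\omega_{\mu\nu\rho}{}^{AB}=\tfrac{1}{3!}\epsilon_{\mu\nu\rho}{}^{\sigma\tau\lambda}\omega_{\sigma\tau\lambda}{}^{AB}$, i.e.\ $\omega\in\Lambda^{3}_{+}V\otimes\odot^{2}\Delta$.

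The only delicate step is really the sign in the duality identity for $\Gamma_{\mu\nu\rho}\Gamma_{7}$: getting the wrong sign would flip the conclusion from self-dual to anti-self-dual. Everything else is a direct application of the Clifford invariance of $\langle-,-\rangle$ and of the chirality constraint on $s^A$, so no further algebraic input is needed.
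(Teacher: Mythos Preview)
Your proof is correct and follows essentially the same approach as the paper's. The paper proves only the self-duality (remarking that ``only the self-duality needs proof'' and leaving the symmetry in $A,B$ implicit), by exactly the same insertion of $\Gamma_7$ and the identity $\Gamma_{\mu\nu\rho}\Gamma_7=\tfrac{1}{3!}\varepsilon_{\mu\nu\rho\lambda\sigma\tau}\Gamma^{\lambda\sigma\tau}$ that you use.
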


\begin{proof}
  Only the self-duality needs proof.  We calculate
  \begin{equation}
    \begin{split}
      \omega_{\mu\nu\rho}{}^{AB} &= \sbar^A \Gamma_{\mu\nu\rho} s^B\\
          &= \sbar^A \Gamma_{\mu\nu\rho}\Gamma_7  s^B\\
          &= \tfrac16 \varepsilon_{\mu\nu\rho\lambda\sigma\tau}\sbar^A \Gamma^{\lambda\sigma\tau} s^B\\
          &= \tfrac16 \varepsilon_{\mu\nu\rho\lambda\sigma\tau} \omega^{\lambda\sigma\tau\,AB}\\
          &= (\star\omega)_{\mu\nu\rho}{}^{AB}.
    \end{split}
  \end{equation}
\end{proof}

A very useful identity is the \emph{Fierz identity}, which says
\begin{equation}
  \label{eq:Fierz}
  s^A \sbar^B = -\tfrac18 \left(\epsilon^{AB} \kappa + \omega^{AB}\right)P_-,
\end{equation}
for all $s = s^A \bep_A\in S$.
An immediate consequence of this identity is that the Dirac current of
$s$ Clifford-annihilates $s$.

\begin{lemma}
  \label{lem:kappakills}
  $(\sbar^A \Gamma^\mu s^B) \Gamma_\mu s^C = 0$.
\end{lemma}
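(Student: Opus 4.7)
The first move is to eliminate the outer bilinear. Substituting the defining identity $\sbar^A \Gamma^\mu s^B = \tfrac12 \epsilon^{AB}\kappa^\mu$ reduces the left-hand side to $\tfrac12 \epsilon^{AB}\kappa^\mu \Gamma_\mu s^C$, so the lemma will follow as soon as I show that the Dirac current Clifford-annihilates every component spinor, i.e., $\slashed{\kappa}\, s^C = 0$ for $C = 1, 2$, where $\slashed{\kappa} := \kappa^\mu \Gamma_\mu$. This is in fact the assertion already announced in the text as an immediate consequence of the Fierz identity, so the real task is to unpack that consequence.

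To extract it from \eqref{eq:Fierz}, I would contract the Fierz identity with $\epsilon_{AB}$. Since $\omega^{AB}$ is symmetric in its $\Delta$-indices by the preceding lemma while $\epsilon_{AB}$ is antisymmetric, the $\omega$-term drops out; using $\epsilon_{AB}\epsilon^{AB} = 2$, one obtains
\begin{equation*}
\epsilon_{AB}\, s^A\, \sbar^B \;=\; -\tfrac14\, \slashed{\kappa}\, P_-\,.
\end{equation*}
Applying both sides as operators to $\Gamma_\nu s^C$, using the Dirac current formula together with $\epsilon_{AB}\epsilon^{BC} = -\delta_A^C$ on the left, and the chirality relations $P_-\Gamma_\nu = \Gamma_\nu P_+$ and $P_+ s^C = s^C$ on the right, yields the key identity $\slashed{\kappa}\,\Gamma_\nu s^C = 2\kappa_\nu s^C$ for every $\nu$.

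To conclude, the Clifford anticommutator $\{\slashed{\kappa},\Gamma_\nu\} = 2\kappa_\nu$ lets one rewrite this as $\Gamma_\nu(\slashed{\kappa}\, s^C) = 2\kappa_\nu s^C - \slashed{\kappa}\Gamma_\nu s^C = 0$ for each $\nu$; contracting with $\Gamma^\nu$ and using $\Gamma^\nu \Gamma_\nu = 6$ gives $\slashed{\kappa}\, s^C = 0$, and hence the left-hand side of the lemma vanishes.

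I do not foresee any serious obstacle: the computation reduces to standard symplectic-index contractions and six-dimensional Clifford identities. The only mildly delicate point is the order of operations — feeding the Dirac current formula in at the outer bilinear first, rather than trying to simplify inside with Fierz, makes the whole argument a one-line consequence of Fierz plus a single Clifford contraction.
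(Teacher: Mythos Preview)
Your proof is correct, but it takes a somewhat different route from the paper's. Both arguments reduce to showing $\kappa\cdot s^C=0$ via the Fierz identity, but the paper proceeds by rewriting $(\sbar^A\Gamma^\mu s^B)\Gamma_\mu s^C=\Gamma_\mu(s^C\sbar^A)\Gamma^\mu s^B$, substituting Fierz for $s^C\sbar^A$, and invoking the gamma-trace identities $\Gamma^\mu\kappa\Gamma_\mu=-4\kappa$ and $\Gamma^\mu\omega\Gamma_\mu=0$ to obtain the self-referential relation $\kappa\cdot s^C=-\tfrac12\kappa\cdot s^C$ directly. Your approach instead contracts Fierz with $\epsilon_{AB}$ to get the operator identity $\epsilon_{AB}s^A\sbar^B=-\tfrac14\slashed{\kappa}P_-$, applies it to $\Gamma_\nu s^C$ to derive the intermediate relation $\slashed{\kappa}\Gamma_\nu s^C=2\kappa_\nu s^C$, and finishes with the Clifford anticommutator. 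Your version avoids the gamma-trace identities \eqref{eq:gammatraces} (trading them for the anticommutator step), and the intermediate identity you establish is a mildly stronger statement that could be reused; the paper's version is a touch shorter and gets to the self-consistency equation in one move.
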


\begin{proof}
  Since $(\sbar^A \Gamma^\mu s^B) \Gamma_\mu s^C = \tfrac12 \epsilon^{AB}
  \kappa \cdot s^C$, it is enough to show that $\kappa \cdot s^C = 0$.
  We calculate
  \begin{equation}
    \begin{split}
      (\sbar^A \Gamma^\mu s^B) \Gamma_\mu s^C &= \Gamma_\mu (s^C \sbar^A) \Gamma^\mu s^B\\
      &= -\tfrac18 \Gamma_\mu \left( \epsilon^{CA} \kappa +
        \omega^{CA}\right) \Gamma^\mu s^B\\
      &= \tfrac12 \epsilon^{CA} \kappa \cdot s^B,
    \end{split}
  \end{equation}
  where we have used the useful identities
  \begin{equation}
    \label{eq:gammatraces}
    \Gamma^\mu \kappa \Gamma_\mu = -4 \kappa \qquad\text{and}\qquad
    \Gamma^\mu \omega^{AB} \Gamma_\mu = 0.
  \end{equation}
  We now contract both sides with $\epsilon_{AB}$ to arrive at
  \begin{equation}
    \kappa \cdot s^C = \tfrac12 \epsilon^{CA} \epsilon_{AB} \kappa \cdot
    s^B = - \tfrac12 \kappa \cdot s^C,
  \end{equation}
  which shows that $\kappa\cdot s^C = 0$.
\end{proof}

The $3$-form $\omega^{AB}$ associated to $s$ also Clifford-annihilates $s$.  In fact, more
generally, we have the following

\begin{lemma}
  \label{lem:omegakills}
  Let $\Xi \in \Lambda_+^3V$ and $s \in \Sigma_+$.  Then $\Xi \cdot s = 0$.
\end{lemma}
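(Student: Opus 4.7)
The plan is to run a short index computation exploiting the chirality identity $\Gamma_7 s = s$ for $s\in\Sigma_+$, combined with the same Clifford-algebra duality relation already used in the proof of the previous lemma (where self-duality of $\omega$ was established).

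First, I would expand the Clifford action componentwise as
\begin{equation*}
\Xi\cdot s = \tfrac{1}{6}\,\Xi^{\mu\nu\rho}\,\Gamma_{\mu\nu\rho}\,s,
\end{equation*}
insert a free $\Gamma_7$ on the right at no cost (since $\Gamma_7 s=s$), and then apply the identity
\begin{equation*}
\Gamma_{\mu\nu\rho}\,\Gamma_7 \;=\; \tfrac{1}{6}\,\varepsilon_{\mu\nu\rho\lambda\sigma\tau}\,\Gamma^{\lambda\sigma\tau}
\end{equation*}
to rewrite the expression as $\tfrac{1}{36}\,\Xi^{\mu\nu\rho}\,\varepsilon_{\mu\nu\rho\lambda\sigma\tau}\,\Gamma^{\lambda\sigma\tau}\,s$.

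Next, I would re-recognise the coefficient as the Hodge star of $\Xi$. Specifically, $\tfrac{1}{6}\,\varepsilon_{\mu\nu\rho\lambda\sigma\tau}\,\Xi^{\mu\nu\rho}$ equals $(\star\Xi)_{\lambda\sigma\tau}$ up to the sign produced by cyclically reordering the triple $(\mu\nu\rho)$ past $(\lambda\sigma\tau)$; this is nine transpositions, hence an overall $-1$. Invoking the hypothesis $\Xi\in\Lambda^3_+V$, i.e.\ $\star\Xi=\Xi$, one obtains
\begin{equation*}
\Xi\cdot s \;=\; -\tfrac{1}{6}\,\Xi_{\lambda\sigma\tau}\,\Gamma^{\lambda\sigma\tau}\,s \;=\; -\,\Xi\cdot s,
\end{equation*}
which forces $\Xi\cdot s=0$.

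The main obstacle is purely bookkeeping: tracking the single sign introduced by the reordering of the Levi-Civita indices and making sure that the Hodge star convention agrees with the one implicit in the proof of the previous lemma (where the same identity $\Gamma_{\mu\nu\rho}\Gamma_7 = \tfrac{1}{6}\varepsilon_{\mu\nu\rho\lambda\sigma\tau}\Gamma^{\lambda\sigma\tau}$ was used to show $\omega=\star\omega$). Conceptually the argument is exactly the dual of that earlier one: the $\Gamma_7$-trick forces any $3$-form, when acting on $\Sigma_+$, to coincide (up to a sign) with the action of its Hodge dual, so self-dual $3$-forms annihilate $\Sigma_+$ and anti-self-dual $3$-forms annihilate $\Sigma_-$.
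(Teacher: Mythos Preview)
Your proof is correct and follows essentially the same route as the paper's: insert $\Gamma_7$ using chirality, apply the identity $\Gamma_{\mu\nu\rho}\Gamma_7 = \tfrac{1}{6}\varepsilon_{\mu\nu\rho\lambda\sigma\tau}\Gamma^{\lambda\sigma\tau}$, and conclude $\Xi\cdot s = -(\star\Xi)\cdot s$, whence self-duality forces $\Xi\cdot s=0$. The paper simply packages the index manipulation you spell out as the single statement $\Xi\Gamma_7 = -\star\Xi$, but the argument is the same.
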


\begin{proof}
  The identity
  \begin{equation}
    \label{eq:CliffHodge}
    \Gamma_{\mu\nu\rho} \Gamma_7 = \tfrac1{3!}
    \varepsilon_{\mu\nu\rho\lambda\sigma\tau}
    \Gamma^{\lambda\sigma\tau},
  \end{equation}
  implies that if $\Xi \in \Lambda^3V$, then $\Xi \Gamma_7 = -
  \star\Xi$, so that if $s \in \Sigma_+$,
\begin{equation}
  \Xi \cdot s = \Xi \Gamma_7 \cdot s = -(\star\Xi) \cdot s
  \implies (\Xi + \star\Xi) \cdot s = 0.
\end{equation}
If $\Xi$ is self-dual, the result follows.
\end{proof}

In the same way one can show that the Clifford product of two self-dual
(or antiself-dual) $3$-forms vanishes.

\begin{lemma}
  \label{lem:wwzero}
  If $\Xi_1,\Xi_2 \in \Lambda^3_\pm V$, then $\Xi_1 \Xi_2 = 0$.
\end{lemma}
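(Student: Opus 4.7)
The plan is to mimic the method of Lemma \ref{lem:omegakills} one level higher, working directly in the Clifford algebra rather than on spinors. The key input is the identity $\Xi \Gamma_7 = -\star\Xi$ for any $3$-form $\Xi$, which was already extracted from \eqref{eq:CliffHodge} in the proof of Lemma \ref{lem:omegakills}. In six dimensions $\Gamma_7$ anticommutes with any odd-degree element of the Clifford algebra, so this also gives the companion identity $\Gamma_7 \Xi = \star\Xi$ on $3$-forms. A direct computation, using $\eta_{\mu\nu} = \diag(-1,+1,\dots,+1)$, yields $\Gamma_7^2 = \1$.

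The main step is then to insert $\Gamma_7^2 = \1$ between the two factors and use both identities:
\begin{equation}
  \Xi_1 \Xi_2 = \Xi_1 \Gamma_7 \Gamma_7 \Xi_2 = (\Xi_1 \Gamma_7)(\Gamma_7 \Xi_2) = (-\star\Xi_1)(\star\Xi_2).
\end{equation}
If both $\Xi_i \in \Lambda^3_+V$ then $\star\Xi_i = \Xi_i$, and the right-hand side equals $-\Xi_1\Xi_2$; if both lie in $\Lambda^3_-V$ then $\star\Xi_i = -\Xi_i$ and the product of the two signs is again $-1$, so again $-\Xi_1\Xi_2$. In either case we conclude $2\Xi_1\Xi_2 = 0$, hence $\Xi_1\Xi_2 = 0$. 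There is no real obstacle; the only delicate point is keeping track of the signs in the two chirality conventions, but both cases collapse to the same sign after the insertion of $\Gamma_7^2$.

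As a conceptually appealing alternative, one can deduce the result from Lemma \ref{lem:omegakills} itself: a (say) self-dual $3$-form annihilates $\Sigma_+$, and since Clifford multiplication by a $3$-form swaps chirality, $\Xi_2 \cdot \Sigma_- \subseteq \Sigma_+$ is then annihilated by $\Xi_1$. Thus $\Xi_1 \Xi_2$ acts as zero on all of $\Sigma = \Sigma_+ \oplus \Sigma_-$, and since $\Sigma$ is a faithful module over the simple algebra $\Cl(V) \cong \HH(4)$, the product must already vanish in the Clifford algebra.
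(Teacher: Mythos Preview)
Your main argument is correct and essentially identical to the paper's: both insert $\Gamma_7^2 = \1$ between the two factors and exploit $\Xi\Gamma_7 = -\star\Xi$, the only cosmetic difference being that the paper anticommutes the second $\Gamma_7$ past $\Xi_2$ rather than invoking your companion identity $\Gamma_7\Xi = \star\Xi$. Your alternative via Lemma~\ref{lem:omegakills} and faithfulness of the irreducible Clifford module $\Sigma$ is also sound and offers a pleasant perspective not in the paper.
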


\begin{proof}
  The identity \eqref{eq:CliffHodge} says that if $\Xi \in
  \Lambda^3_\pm V$, then $\Xi \Gamma_7 = \mp\Xi$.  Now we calculate
  \begin{equation}
    \Xi_1 \Xi_2 = \Xi_1 \Gamma_7^2 \Xi_2 = (\Xi_1
    \Gamma_7)(- \Xi_2 \Gamma_7) = -\Xi_1 \Xi_2.
  \end{equation}
\end{proof}

The Dirac current and the $3$-form satisfy the following properties.

\begin{lemma}
  \label{lem:linindep}
  Let $\kappa$ and $\omega^{AB}$ be the Dirac current and $3$-form
  associated to a non-zero $s \in S$.  Then
  \begin{enumerate}[label=(\roman*)]
  \item $\kappa$ is a non-vanishing null vector, and
  \item the $3$-forms $\omega^{AB}$, $A,B=1,2$, are linearly independent (in particular they do not vanish).
  \end{enumerate}
\end{lemma}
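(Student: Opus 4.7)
The plan is to derive both parts from the Fierz identity \eqref{eq:Fierz}, the key additional ingredient being the following fact, which I would establish first: for any nonzero $s \in S$, the two $\Sigma_+$-valued components $s^1, s^2$ are linearly independent over $\CC$. Indeed, the real structure on $\Sigma_+\otimes_\CC\Delta$ that defines $S$ is the tensor product of the quaternionic structures on the two factors, so the reality condition on $s$ takes the form $s^2 = j(s^1)$ for an antilinear $j\colon \Sigma_+ \to \Sigma_+$ with $j^2 = -\Id$. A hypothetical proportionality $s^2 = \lambda s^1$ with $\lambda \in \CC$ would, upon applying $j$ once more and using antilinearity, force $|\lambda|^2 = -1$, which is impossible unless $s^1 = 0$ (in which case $s^2 = j(0) = 0$ as well).

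For part (i), the nullness of $\kappa$ is immediate: Lemma~\ref{lem:kappakills} gives $\kappa \cdot s^C = 0$, and acting with $\kappa$ once more together with the Clifford relation $\kappa^2 = \eta(\kappa,\kappa)\,\1$ yields $\eta(\kappa,\kappa)\,s^C = 0$, which forces $\eta(\kappa,\kappa) = 0$ since some $s^C$ must be nonzero. For non-vanishing, I would write \eqref{eq:Fierz} for $(A,B) = (1,2)$ and $(2,1)$ and subtract; using $\omega^{21} = \omega^{12}$ and $\epsilon^{21} = -\epsilon^{12}$, this gives $s^1\sbar^2 - s^2\sbar^1 = -\tfrac14\,\kappa\,P_-$. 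If $\kappa = 0$, then $s^1\sbar^2 = s^2\sbar^1$ as operators, and evaluating on an arbitrary $t \in \Sigma_-$ produces $\langle s^2,t\rangle\,s^1 = \langle s^1,t\rangle\,s^2$. The linear independence of $s^1, s^2$ then forces $\langle s^1,t\rangle = \langle s^2,t\rangle = 0$ for every $t \in \Sigma_-$, and the non-degeneracy of $\langle-,-\rangle$ between $\Sigma_+$ and $\Sigma_-$ finally gives $s^1 = s^2 = 0$, contradicting $s \neq 0$.

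For part (ii), suppose $\alpha_{AB}\omega^{AB} = 0$ as a $3$-form. Since $\omega^{AB}$ is symmetric in $A, B$, only the symmetric part of $\alpha$ is constrained, so I may take $\alpha_{AB} = \alpha_{BA}$. Contracting \eqref{eq:Fierz} with $\alpha_{AB}$, the $\epsilon^{AB}\kappa$ term drops out by the symmetry-antisymmetry mismatch while the $\omega^{AB}$ term vanishes by assumption (using the injectivity of the Clifford embedding $\Lambda^\bullet V \hookrightarrow \Cl(V)$), yielding $\alpha_{AB}\,s^A\sbar^B = 0$ as an operator on $\Sigma$. Evaluating on $t \in \Sigma_-$ gives $\sum_A\bigl(\sum_B \alpha_{AB}\langle s^B,t\rangle\bigr)\,s^A = 0$, and linear independence of $s^1, s^2$ combined with the non-degeneracy of $\langle-,-\rangle$ then force all $\alpha_{AB}$ to vanish. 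The main subtle point, which I flag as the principal obstacle, is that the argument genuinely requires $s$ to be real in $S$: for a generic complex element of $S \otimes \CC$ one could take $s^2 = 0$ to obtain $\kappa = 0$ with all $\omega^{AB}$ collinear, so the reality hypothesis on $s$ is precisely what makes both conclusions hold.
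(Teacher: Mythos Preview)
Your proof is correct. Part~(i) follows the same line as the paper: both use the Fierz identity to show that $\kappa=0$ would make $s$ decomposable in $\Sigma_+\otimes_\CC\Delta$, which the reality of $s$ forbids. You are simply more explicit than the paper about why reality rules out decomposability, spelling out the quaternionic-structure argument $s^2=j(s^1)$ that the paper leaves as the one-line remark ``This is not possible, since $s$ is a real spinor.''

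Part~(ii) is where you take a genuinely different route. The paper first derives the quadratic identity
\[
\omega^{AB}_{\mu [ \nu}{}^\tau \omega^{CD}_{\rho\sigma ]\tau} = -\tfrac{1}{3} \kappa_\mu \bigl(\epsilon^{A(C} \omega^{D)B}_{\nu\rho\sigma} + \epsilon^{B(C} \omega^{D)A}_{\nu\rho\sigma}\bigr)
\]
from the Fierz identity, specialises it to the three pairs $(\omega^{11},\omega^{12})$, $(\omega^{22},\omega^{12})$, $(\omega^{11},\omega^{22})$, and then argues by contradiction that a linear dependence among the $\omega^{AB}$ would force $\omega^{12}=0$. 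Your argument is more direct: you contract the Fierz identity with a symmetric $\alpha_{AB}$, so the $\epsilon^{AB}\kappa$ term drops out and the hypothesis $\alpha_{AB}\omega^{AB}=0$ kills the other term, leaving $\alpha_{AB}s^A\sbar^B=0$; evaluating on $t\in\Sigma_-$ and invoking the linear independence of $s^1,s^2$ (which you have already established) then forces $\alpha=0$. This is shorter and avoids the case analysis; the paper's approach has the side benefit of producing the explicit quadratic identity above, though that identity is not used elsewhere in the paper.
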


\begin{proof}
  The Dirac current $\kappa$ is a null vector from Lemma
  \ref{lem:kappakills}. By the Fierz identity \eqref{eq:Fierz} we have
  \begin{equation}
    s^A \sbar^B-s^B \sbar^A = -\tfrac14 \epsilon^{AB} \kappa P_-,
  \end{equation}
  and $s=s^A\bep_A$ is a decomposable element of the tensor product
  $\Sigma_+\otimes_\CC\Delta$ if $\kappa=0$. This is not possible, since
  $s$ is a real spinor. The non-vanishing of any $\omega^{AB}$ is proved similarly.

  Now, making use of the Fierz identity, it is a simple matter to deduce
  the identity
  \begin{equation}
    \label{eq:Omega2IdentityIntro}
    \omega^{AB}_{\mu [ \nu}{}^\tau \omega^{CD}_{\rho\sigma ]\tau} = -\tfrac{1}{3} \kappa_\mu (  \epsilon^{A(C} \omega^{D)B}_{\nu\rho\sigma} + \epsilon^{B(C} \omega^{D)A}_{\nu\rho\sigma} )~.
  \end{equation}
  Consequently,
  \begin{equation}
    \label{eq:Omega2Identity111222Intro}
    \omega^{11}_{\mu [ \nu}{}^\tau \omega^{12}_{\rho\sigma ]\tau} = -\tfrac{1}{3} \kappa_\mu \omega^{11}_{\nu\rho\sigma} \; , \quad \omega^{22}_{\mu [ \nu}{}^\tau \omega^{12}_{\rho\sigma ]\tau} = \tfrac{1}{3} \kappa_\mu \omega^{22}_{\nu\rho\sigma} \; , \quad \omega^{11}_{\mu [ \nu}{}^\tau \omega^{22}_{\rho\sigma ]\tau} = -\tfrac{2}{3} \kappa_\mu \omega^{12}_{\nu\rho\sigma}~,
  \end{equation}
  are all non-vanishing. Now, assume $\omega^{11}$ is a linear
  combination of $\omega^{12}$ and $\omega^{22}$ and substitute the
  former on the LHS of the first equation in
  \eqref{eq:Omega2Identity111222Intro}. Using again
  \eqref{eq:Omega2IdentityIntro} with $A=C=1$ and $B=D=2$, we get that
  $\omega^{11}$ and $\omega^{22}$ are proportional. Plugging this back
  into the last equation in \eqref{eq:Omega2Identity111222Intro} implies
  $\omega^{12}=0$, which is absurd.
\end{proof}

Finally (for now), we have two additional algebraic relations between the
Dirac current and the $3$-form.

\begin{lemma}
  \label{lem:kwzero}
  Let $\kappa$ and $\omega^{AB}$ be the Dirac current and $3$-form
  associated to $s \in S$.  Then
  \begin{enumerate}[label=(\roman*)]
  \item $\iota_\kappa \omega^{AB} = 0$, and
  \item $\kappa^\flat \wedge \omega^{AB} = 0$.
  \end{enumerate}
\end{lemma}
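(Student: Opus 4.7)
My plan is to obtain (i) from a direct Clifford computation combining Lemma~\ref{lem:kappakills} with the Dirac current identity $\sbar^A\Gamma_\mu s^B=\tfrac12\epsilon^{AB}\kappa_\mu$, and then to derive (ii) as an immediate corollary of (i) together with the self-duality of $\omega^{AB}$ established earlier in the excerpt.

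For (i), I would compute $\kappa^\mu\omega^{AB}_{\mu\nu\rho}=\sbar^A(\kappa^\mu\Gamma_{\mu\nu\rho})s^B$ by reducing the Clifford element $\kappa^\mu\Gamma_{\mu\nu\rho}$ using the expansion
\begin{equation*}
\Gamma_\mu\Gamma_{\nu\rho}=\Gamma_{\mu\nu\rho}+\eta_{\mu\nu}\Gamma_\rho-\eta_{\mu\rho}\Gamma_\nu.
\end{equation*}
Contracting with $\kappa^\mu$ yields $\kappa^\mu\Gamma_{\mu\nu\rho}=\kappa\,\Gamma_{\nu\rho}-\kappa_\nu\Gamma_\rho+\kappa_\rho\Gamma_\nu$, where, as in Lemma~\ref{lem:kappakills}, the $\kappa$ on the right denotes the Clifford element $\kappa^\mu\Gamma_\mu$. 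Commuting this $\kappa$ past $\Gamma_{\nu\rho}$ via $[\kappa,\Gamma_{\nu\rho}]=2(\kappa_\nu\Gamma_\rho-\kappa_\rho\Gamma_\nu)$ and invoking Lemma~\ref{lem:kappakills} to kill $\kappa\cdot s^B$ reduces $\kappa^\mu\sbar^A\Gamma_{\mu\nu\rho}s^B$ to $\kappa_\nu\sbar^A\Gamma_\rho s^B-\kappa_\rho\sbar^A\Gamma_\nu s^B$; substituting the Dirac current identity then collapses this to $\tfrac12\epsilon^{AB}(\kappa_\nu\kappa_\rho-\kappa_\rho\kappa_\nu)=0$.

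For (ii), I would invoke the self-duality $\star\omega^{AB}=\omega^{AB}$ together with the standard relation $\iota_v\star\alpha=\pm\star(v^\flat\wedge\alpha)$ on forms (the sign depending only on degree and signature) in order to write
\begin{equation*}
\iota_\kappa\omega^{AB}=\iota_\kappa\star\omega^{AB}=\pm\star(\kappa^\flat\wedge\omega^{AB}).
\end{equation*}
Part (i) and the fact that $\star$ is an isomorphism then force $\kappa^\flat\wedge\omega^{AB}=0$. The only place where care is required is the Clifford-algebra bookkeeping in (i); once that cancellation is obtained, (ii) is essentially forced by self-duality, so I do not expect any substantive obstacle.
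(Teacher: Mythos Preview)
Your proof is correct and follows essentially the same approach as the paper. The only cosmetic difference is that the paper contracts $\kappa$ against the \emph{last} index of $\Gamma_{\mu\nu\rho}$, writing $\Gamma_{\mu\nu\rho}=\Gamma_{\mu\nu}\Gamma_\rho-\eta_{\nu\rho}\Gamma_\mu+\eta_{\mu\rho}\Gamma_\nu$, so that $\kappa$ lands directly on $s^B$ and the commutator step is avoided; your extra commutation of $\kappa$ past $\Gamma_{\nu\rho}$ is a harmless detour to the same cancellation, and part~(ii) via self-duality is identical to the paper's argument.
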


\begin{proof}
  To prove the first identity, we compute
  \begin{equation}
    \begin{split}
      \kappa^\rho \omega_{\mu\nu\rho}{}^{AB} &= \kappa^\rho \sbar^A
      \Gamma_{\mu\nu\rho} s^B\\
      &=  \kappa^\rho \sbar^A \left(\Gamma_{\mu\nu} \Gamma_\rho -
        \eta_{\nu\rho} \Gamma_\mu + \eta_{\mu\rho} \Gamma_\nu\right) s^B\\
      &= \sbar^A \Gamma_{\mu\nu} \kappa \cdot s^B - \kappa_\nu \sbar^A
      \Gamma_\mu s^B + \kappa_\mu \sbar^A \Gamma_\nu s^B.
    \end{split}
  \end{equation}
  The first term vanishes because of Lemma~\ref{lem:kappakills} and the last
  two terms precisely cancel each other.  The second identity follows
  from the first due to the self-duality of $\omega^{AB}$:
  \begin{equation}
    \kappa^\flat \wedge \omega^{AB} = \kappa^\flat \wedge \star
    \omega^{AB} = \star (\iota_\kappa \omega^{AB}) = 0.
  \end{equation}
\end{proof}

\section{Spencer complexes associated to the $(1,0)$ Poincaré superalgebra}
\label{sec:complex-1-0-poincare}

The $d{=}6$ $(1,0)$ Poincaré superalgebra is the $\ZZ$-graded Lie
superalgebra
\begin{equation}
  \label{eq:psa}
  \fp = \fp_{-2} \oplus \fp_{-1} \oplus \fp_0 = V \oplus S \oplus \fso(V),
\end{equation}
with nonzero Lie brackets
\begin{equation}
  \label{eq:10ps}
  [L,M] = LM - ML,\qquad [L,v] = Lv, \qquad [L,s] = \tfrac12 \omega_L
  \cdot s \qquad\text{and}\qquad [s,s] = \kappa,
\end{equation}
for all $L,M \in \fso(V)$, $v\in V$ and $s \in S$.  We will also
consider the extended Poincaré superalgebra $\fphat$ where in degree
zero we have $\fphat_0 = \fso(V) \oplus \fr$, where $\fr \cong \fsp(1)$
is the R-symmetry.

In this section we describe the (generalised) Spencer complexes
associated to $\fp$ and $\fphat$.  As we briefly recall, they govern
filtered subdeformations of these graded Lie superalgebras.

\subsection{Spencer cohomology and filtered deformations}
\label{sec:spencer-fdefs}

Many Lie (super)algebras of geometric origin are generated by sections of
vector bundles on a manifold which are parallel with respect to some
connection.  A classical example is provided by the Lie algebra of
isometries of a riemannian manifold.  This Lie algebra is generated by
Killing vector fields which, as shown in \cite{MR0084825, MR0250643}, are
in one-to-one correspondence with parallel sections of a certain vector
bundle relative to the so-called Killing transport connection.  A
similar construction holds for the Lie algebra of conformal
transformations of a conformal manifold, generated by conformal Killing
vectors, corresponding to parallel sections of a vector bundle relative
to the conformal Killing transport connection \cite{MR0250643}.  A less
classical example is eleven-dimensional supergravity, where the Killing
spinors, which are parallel relative to the connection defined by the
gravitino variation, generate a Lie superalgebra, known as the Killing
superalgebra of the background \cite{FigueroaO'Farrill:2004mx}.

What these Lie (super)algebras have in common is that they are filtered
deformations of a graded subalgebra of the Lie (super)algebra associated
to the ``flat model''.  The flat model depends on the context: it is
euclidean space in riemannian geometry, the round sphere in the
conformal context, and the Minkowski vacuum in the supergravity context.
Indeed, the isometry algebra of euclidean space, the conformal algebra
of the round sphere or the supersymmetry algebra of Minkowski spacetime
are graded Lie (super)algebras.  This means that the underlying vector
space admits a $\ZZ$-grading and that the Lie bracket has degree zero,
so that it respects the grading.  The effect of ``turning on curvature''
is two-fold: firstly, it breaks the symmetry to a graded subalgebra and,
simultaneously, it deforms the subalgebra by introducing terms of
positive degree in the Lie bracket.  This is the algebraic manifestation
of the well-known mantra that translations no longer commute in the
presence of curvature. The resulting Lie (super)algebra is no longer
graded, but since the new terms in the Lie bracket have positive degree,
it is now filtered.

Deformations of algebraic structures, such as Lie (super)algebras, are
typically governed by a cohomology theory.  In the case of Lie
(super)algebras, it is the cohomology of the Chevalley--Eilenberg
complex of the Lie superalgebra with coefficients in the adjoint module
\cite{MR0024908,MR0422372,MR871615}.  In the case of a a graded Lie
(super)algebra, the Chevalley--Eilenberg differential has zero degree
and hence the complex splits in the direct sum of sub-complexes labelled
by the degree.  In studying filtered deformations of graded Lie
(super)algebras, we are interested in deforming the Lie bracket by terms
of positive degree.  Moreover, for graded Lie (super)algebras which are
zero in positive degree (such as the Poincaré superalgebra
\eqref{eq:psa}), we may pass to the subcomplex relative to the
degree-zero subalgebra.

A first step in this process is the calculation of the cohomology of the 
positive-degree subcomplex of the Chevalley--Eilenberg complex of the 
negative-degree subalgebra of the flat model with coefficients in the
adjoint module.  In other words, in the present context, we consider the
subcomplex of the Chevalley--Eilenberg complex $C^\bullet(\fg_-,\fg)$ of
degree $d>0$, which we denote by $C^{d,\bullet}(\fg_-,\fg)$, where $\fg$
stands for either $\fp$ or $\fphat$, and which are defined below.

\subsection{Spencer complex of $\fp$}
\label{sec:spencer-complex-fp}

In the first instance, we will calculate the cohomology of the Spencer complex
\begin{equation}
  \label{eq:complex}
  \begin{CD}
    C^{2,1}(\fp_-,\fp) @>\partial>> C^{2,2}(\fp_-,\fp) @>\partial>>
    C^{2,3}(\fp_-,\fp),
  \end{CD}
\end{equation}
where the spaces of cochains are
\begin{equation}
  \begin{split}
    C^{2,1}(\fp_-,\fp) &= \Hom(V,\fso(V))\\
    C^{2,2}(\fp_-,\fp) &= \Hom(\Lambda^2V,V) \oplus \Hom(V\otimes S, S)
    \oplus \Hom(\odot^2S, \fso(V))\\
    C^{2,3}(\fp_-,\fp) &= \Hom(V \otimes \odot^2S, V) \oplus
    \Hom(\odot^3 S, S)
  \end{split}
\end{equation}
and the differentials are such that if $\lambda \in C^{2,1}(\fp_-,\fp)$, then
\begin{equation}
  \partial\lambda(v,w) = \lambda_v w - \lambda_w v,
  \qquad \partial\lambda(v,s) = \tfrac12 \omega_{\lambda_v} \cdot s
  \qquad\text{and}\qquad
  \partial\lambda(s,s) = - \lambda_{[s,s]},
\end{equation}
and if $\psi = \alpha + \beta + \gamma \in C^{2,2}(\fp_-,\fp)$ with
\begin{equation}
  \alpha : \Lambda^2 V \to V, \qquad \beta: V \otimes S \to S
  \qquad\text{and}\qquad
  \gamma: \odot^2 S \to \fso(V),
\end{equation}
then
\begin{equation}
  \partial\psi(v,s,s) = [v,\gamma(s,s)] - 2 [s,\beta_v s] -
  \alpha([s,s],v) \qquad\text{and}\qquad
  \tfrac13 \partial\psi(s,s,s) = [s,\gamma(s,s)] - \beta_{[s,s]}s.
\end{equation}

\begin{lemma}
\label{lem:h21}
  The Spencer differential $\partial: C^{2,1}(\fp_-,\fp) \to
  C^{2,2}(\fp_-,\fp)$ is injective, so that in particular
  $H^{2,1}(\fp_-,\fp) = 0$.
\end{lemma}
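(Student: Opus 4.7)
My plan is to unpack the vanishing of $\partial\lambda$ into its three components and show that even the first component alone (the one landing in $\Hom(\Lambda^2 V, V)$) forces $\lambda=0$. This is essentially the classical statement that the first prolongation of $\fso(V)$ vanishes, and it is purely a consequence of combining a symmetry with a skew-symmetry.

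Concretely, let $\lambda \in \Hom(V,\fso(V))$ with $\partial\lambda = 0$. The vanishing of the first summand of $\partial\lambda$ says that $\lambda_v w = \lambda_w v$ for all $v,w\in V$. I then define the trilinear form
\begin{equation}
T(v,w,u) := \eta(\lambda_v w,u).
\end{equation}
Two symmetries are immediate: $T(v,w,u) = T(w,v,u)$, because $\lambda_v w = \lambda_w v$, and $T(v,w,u) = -T(v,u,w)$, because $\lambda_v \in \fso(V)$.

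Composing these, I perform the usual cycling argument:
\begin{equation}
T(v,w,u) = T(w,v,u) = -T(w,u,v) = -T(u,w,v) = T(u,v,w) = T(v,u,w) = -T(v,w,u),
\end{equation}
so $T \equiv 0$, and since $\eta$ is non-degenerate, $\lambda_v w = 0$ for all $v,w$, i.e.\ $\lambda = 0$. This proves $\partial$ is injective, and in particular $H^{2,1}(\fp_-,\fp) = 0$.

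There is no real obstacle here; the only thing worth noting is that the fermionic conditions (involving $\beta$ and $\gamma$ components of $\partial\lambda$) are not needed for the argument, which reflects the fact that this vanishing is already a feature of the underlying bosonic Spencer complex. The nontrivial Spencer cohomology will therefore first appear at the level of $H^{2,2}$, computed in the next sections.
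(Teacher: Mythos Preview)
Your proof is correct, but it takes a different route from the paper's proof of the lemma itself. The paper instead uses the fermionic component: from $\partial\lambda(s,s)=-\lambda_{[s,s]}=0$ and the fact that $[S,S]=V$, it concludes directly that $\lambda_v=0$ for all $v\in V$. Your argument, by contrast, uses only the bosonic component $\partial\lambda(v,w)=\lambda_v w-\lambda_w v$ and the classical vanishing of the first prolongation of $\fso(V)$ via the symmetric/skew cycling trick. Interestingly, the paper does present exactly your argument --- but in the paragraph immediately \emph{after} the lemma, to establish the stronger statement that the single component $C^{2,1}(\fp_-,\fp)\to\Hom(\Lambda^2V,V)$ is already injective (hence an isomorphism by dimension count), which is what justifies the normalisation $\alpha=0$ in Proposition~\ref{prop:normalisation}. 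So your approach buys you that stronger statement in one stroke, while the paper's proof of the lemma is shorter but leans on the supersymmetric structure through $[S,S]=V$.
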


\begin{proof}
  Let $\lambda \in C^{2,1}(\fp_-,\fp)$.  If $\partial\lambda = 0$, then
  in particular $\partial\lambda(s,s) = - \lambda_{[s,s]} = 0$.  Since
  $[S,S]=V$, then $\lambda = 0$.
\end{proof}

Moreover it is not just that
$\partial: C^{2,1}(\fp_-,\fp) \to C^{2,2}(\fp_-,\fp)$ is injective, but
that the component $C^{2,1}(\fp_-,\fp) \to \Hom(\Lambda^2 V, V)$ is an
isomorphism.  This follows because if $\partial\lambda(v,w) = 0$, the
tensor $T(u,v,w) := \eta(u,\lambda_v w )$ satisfies
$T(u,v,w) = T(u,w,v)$ in addition to $T(u,v,w) = - T(w,v,u)$, so that it
is identically zero.

This fact allows us to compute the cohomology by determining the kernel
of the Spencer differential on the normalised cochains satisfying
$\alpha = 0$.  In other words, we have the following

\begin{proposition}
  \label{prop:normalisation}
  There is an isomorphism (of $\fso(V)$-modules)
  \begin{equation}
    H^{2,2}(\fp_-,\fp) \cong \left\{\beta + \gamma \in C^{2,2}(\fp_-,\fp)
      \middle | \partial(\beta + \gamma) = 0\right\}.
  \end{equation}
\end{proposition}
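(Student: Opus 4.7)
The plan is to establish the isomorphism by using the normalisation observation made just before the proposition: the component of the Spencer differential
\[
\partial_\alpha \colon C^{2,1}(\fp_-,\fp) \longrightarrow \Hom(\Lambda^2 V, V), \qquad \lambda \longmapsto \bigl((v,w)\mapsto \lambda_v w - \lambda_w v\bigr),
\]
is an isomorphism of $\fso(V)$-modules. This fact alone makes the normalisation procedure essentially automatic, so the proof is bookkeeping rather than computation.

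First I would define the ``normalisation map'' as follows. Given any cochain $\psi = \alpha + \beta + \gamma \in C^{2,2}(\fp_-,\fp)$, since $\partial_\alpha$ is an isomorphism there exists a unique $\lambda_\psi \in C^{2,1}(\fp_-,\fp)$ with $\partial_\alpha \lambda_\psi = \alpha$. Setting $\psi' := \psi - \partial\lambda_\psi$, we obtain a cochain cohomologous to $\psi$ whose $\Hom(\Lambda^2V,V)$-component vanishes; write $\psi' = \tilde\beta + \tilde\gamma$. If moreover $\psi$ is a cocycle, so is $\psi'$, and therefore $\partial(\tilde\beta+\tilde\gamma) = 0$. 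This constructs, in a canonical and $\fso(V)$-equivariant fashion, a representative with $\alpha=0$ in each cohomology class, giving a well-defined linear map
\[
\Phi \colon H^{2,2}(\fp_-,\fp) \longrightarrow \left\{\beta + \gamma \in C^{2,2}(\fp_-,\fp) \ \middle|\ \partial(\beta+\gamma) = 0\right\}, \qquad [\psi]\longmapsto \psi - \partial\lambda_\psi.
\]

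Next I would verify that $\Phi$ is bijective. Surjectivity is immediate: any $\beta + \gamma$ in the target space is a cocycle whose class maps to itself, since in this case $\alpha=0$ forces $\lambda_\psi = 0$ by injectivity of $\partial_\alpha$. For injectivity, suppose two normalised cocycles $\beta+\gamma$ and $\beta'+\gamma'$ represent the same class, so $(\beta-\beta')+(\gamma-\gamma') = \partial\lambda$ for some $\lambda \in C^{2,1}(\fp_-,\fp)$. Taking the $\Hom(\Lambda^2 V, V)$-component yields $\partial_\alpha\lambda = 0$, hence $\lambda = 0$ and therefore $\beta=\beta'$, $\gamma=\gamma'$. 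That $\Phi$ is $\fso(V)$-equivariant follows from the equivariance of $\partial$ and of the inverse of $\partial_\alpha$.

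There is no real obstacle here: the entire content of the proposition is packaged in the fact, already verified in the paragraph above the statement using the symmetry $T(u,v,w)=T(u,w,v)$ together with the skew-symmetry $T(u,v,w) = -T(w,v,u)$, that $\partial_\alpha$ is an isomorphism. Once this has been invoked, existence and uniqueness of a normalised representative in each cohomology class, and thus the claimed identification, follow by a direct diagram chase.
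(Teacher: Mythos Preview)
Your proof is correct and follows exactly the approach the paper intends: the paper states the proposition as an immediate consequence of the fact, established in the preceding paragraph, that the component $C^{2,1}(\fp_-,\fp)\to\Hom(\Lambda^2V,V)$ of the Spencer differential is an isomorphism, and does not spell out the bookkeeping you have (correctly) supplied.
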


In Section~\ref{sec:calculation-h2-2fp} we calculate this cohomology.

\subsection{Spencer complex of $\fphat$}
\label{sec:spenc-compl-extended}

We will also calculate the cohomology of the Spencer complex
\begin{equation}
  \label{eq:complexR}
  \begin{CD}
    C^{2,1}(\fphat_-,\fphat) @>\partial>> C^{2,2}(\fphat_-,\fphat) @>\partial>>
    C^{2,3}(\fphat_-,\fphat),
  \end{CD}
\end{equation}
where the spaces of cochains are
\begin{equation}
  \begin{split}
    C^{2,1}(\fphat_-,\fphat) &= \Hom(V,\fso(V)) \oplus \Hom(V,\fr)\\
    C^{2,2}(\fphat_-,\fphat) &= \Hom(\Lambda^2V,V) \oplus \Hom(V\otimes S, S)
    \oplus \Hom(\odot^2S, \fso(V)) \oplus \Hom(\odot^2S,\fr)\\
    C^{2,3}(\fphat_-,\fphat) &= \Hom(V \otimes \odot^2S, V) \oplus
    \Hom(\odot^3 S, S)
  \end{split}
\end{equation}
and the differentials are such that if $\phi = \lambda + \mu \in
C^{2,1}(\fphat_-,\fphat)$, with $\lambda: V \to \fso(V)$ and $\mu : V
\to \fr$, then
\begin{equation}
  \partial\phi(v,w) = \lambda_v w - \lambda_w v,
  \qquad \partial\phi(v,s) = \tfrac12 \omega_{\lambda_v} \cdot s + \mu_v(s)
  \qquad\text{and}\qquad
  \partial\phi(s,s) = - \lambda_{[s,s]} - \mu_{[s,s]},
\end{equation}
and if $\psi = \alpha + \beta + \gamma + \rho\in C^{2,2}(\fphat_-,\fphat)$ with
\begin{equation}
  \alpha : \Lambda^2 V \to V, \qquad \beta: V \otimes S \to S, \qquad  \gamma: \odot^2 S \to \fso(V)
  \qquad\text{and}\qquad
  \rho: \odot^2 S \to \fr,
\end{equation}
then
\begin{equation}
  \partial\psi(v,s,s) = [v,\gamma(s,s)] - 2 [s,\beta_v s] - \alpha([s,s],v)
 \quad\text{and}\quad
  \tfrac13 \partial\psi(s,s,s) = [s,\gamma(s,s)] + [s,\rho(s,s)] - \beta_{[s,s]}s.
\end{equation}

As before, we see that $\partial: C^{2,1}(\fphat_-,\fphat) \to
C^{2,2}(\fphat_-,\fphat)$ is injective and hence we calculate
$H^{2,2}(\fphat_-,\fphat)$ by calculating the kernel of the Spencer
differential on the space of normalised cochains in
$C^{2,2}(\fphat_-,\fphat)$: those which have $\alpha = 0$ and for
which
\begin{equation}
  \rho(s,s)^A{}_B = \tfrac16 \omega_{\mu\nu\rho}{}^{CD}
  \rho^{\mu\nu\rho}{}_{CD}{}^A{}_B,
\end{equation}
where $\rho\in\Lambda^3_-V\otimes\odot^2 \Delta\otimes\fr$.

In Section~\ref{sec:calc-h2-2hatfp} we calculate this cohomology.

\section{Calculation of $H^{2,2}(\fp_-,\fp)$}
\label{sec:calculation-h2-2fp}

We now compute the Spencer cohomology group $H^{2,2}(\fp_-,\fp)$
corresponding to the unextended $(1,0)$ Poincaré superalgebra.

By Proposition~\ref{prop:normalisation}, the Spencer cohomology
$H^{2,2}(\fp_-,\fp)$ is isomorphic to the solutions $\beta : V \otimes S
\to S$ and $\gamma : \odot^2 S \to \fso(V)$ of the following two
cocycle conditions:
\begin{align}
  \gamma(s,s) v + 2 [s, \beta_v s] &= 0 \label{eq:co1}\\
  \tfrac12 \omega_{\gamma(s,s)} s + \beta_{[s,s]} s &= 0. \label{eq:co2}
\end{align}
The first equation will determine $\gamma$ in terms of $\beta$, so that
the actual variables are the coefficients of $\beta$.  It pays to
understand this space and to label its components.  First of all, we may
view $\beta$ as a map $V \to \End(S)$, sending $v \in V$ to $\beta_v \in
\End(S)$, where (for $v = \be_\mu$)
\begin{equation}
  (\beta_\mu s)^B = \beta_\mu^{(0)\,B}{}_C s^C + \tfrac12
  \beta^{(2)}_{\mu\rho\sigma}{}^B{}_C \Gamma^{\rho\sigma} s^C.
\end{equation}
We find it convenient to rename the different components of $\beta$:
\begin{equation}
  \begin{split}
    A_\mu \delta^A{}_B  &:  V \to \Lambda^0 V \otimes \Lambda^2 \Delta \\
    C_\mu{}^A{}_B  &:  V \to \Lambda^0 V \otimes \odot^2 \Delta \\
    H_{\mu\rho\sigma} \delta^A{}_B &: V \to \Lambda^2V \otimes \Lambda^2 \Delta\\
    G_{\mu\rho\sigma}{}^A{}_B &: V \to \Lambda^2V \otimes \odot^2 \Delta,
  \end{split}
\end{equation}
so that
\begin{equation}
  (\beta_\mu s)^B = A_\mu s^B + C_\mu{}^B{}_C s^C + \tfrac12
  H_{\mu\rho\sigma} \Gamma^{\rho\sigma} s^B + \tfrac12
  G_{\mu\rho\sigma}{}^B{}_C \Gamma^{\rho\sigma} s^C.
\end{equation}

\subsection{Solving the first cocycle condition}
\label{sec:solv-first-cocycle}

We take the inner product of $\be_\mu$ with the first cocycle equation \eqref{eq:co1}
applied to $v= \be_\nu$ and obtain
\begin{equation}
  0 = \gamma(s,s)_{\mu\nu} + 2 \epsilon_{AB} \sbar^A \Gamma_\mu (\beta_\nu
  s)^B = \gamma(s,s)_{\mu\nu} +2 \kappa_\mu A_\nu + 2 \kappa^\rho
  H_{\nu\mu\rho} + G_{\nu\rho\sigma\,AB} \omega_\mu{}^{\rho\sigma\,AB}.
\end{equation}
The skew-symmetric part gives $\gamma(s,s)_{\mu\nu}$, whereas the
symmetric part gives an equation for $\beta$:
\begin{equation}
  \kappa_\mu A_\nu + \kappa_\nu A_\mu + \kappa^\rho H_{\nu\mu\rho} + \kappa^\rho H_{\mu\nu\rho} + \tfrac12 G_{\nu\rho\sigma}
  \omega_\mu{}^{\rho\sigma} + \tfrac12 G_{\mu\rho\sigma} \omega_\nu{}^{\rho\sigma}= 0,
\end{equation}
where we have suppressed the $\odot^2\Delta$ indices in $G$ and
$\omega$.  This equation is true for all $s$ and hence the terms in the
two independent bilinears (the Dirac current $\kappa$ and the family
$\omega$ of self-dual $3$-forms) are separately zero, giving two
equations:
\begin{align}
  \kappa^\rho (\eta_{\rho\mu} A_\nu + \eta_{\rho\nu} A_\mu +
  H_{\mu\nu\rho} + H_{\nu\mu\rho}) &= 0 \label{eq:coc1k}\\
  \omega^{\rho\sigma\tau} (\eta_{\tau\mu} G_{\nu\rho\sigma} +
  \eta_{\tau\nu} G_{\mu\rho\sigma} ) &= 0. \label{eq:coc1w}
\end{align}
Abstracting $\kappa$ from the first equation and contracting first with
$\eta^{\mu\nu}$ and then with $\eta^{\nu\rho}$ one finds that $A = 0$
and plugging that back into the equation one finds that
$H \in \Lambda^3 V$.  Since $\omega$ is self-dual, it is only the
antiself-dual projection of
$\eta_{\tau\mu} G_{\nu\rho\sigma} + \eta_{\tau\nu} G_{\mu\rho\sigma}$
which must vanish, yielding the equation
\begin{equation}
  \eta_{\tau\nu} G_{\mu\rho\sigma} + \eta_{\tau\mu} G_{\nu\rho\sigma} + 
  \eta_{\rho\nu} G_{\mu\sigma\tau} + \eta_{\rho\mu} G_{\nu\sigma\tau} + 
  \eta_{\sigma\nu} G_{\mu\tau\rho} + \eta_{\sigma\mu} G_{\nu\tau\rho} =
  \tfrac12 \varepsilon_{\tau\rho\sigma\nu}{}^{\phi\psi} G_{\mu\phi\psi}
  + \tfrac12 \varepsilon_{\tau\rho\sigma\mu}{}^{\phi\psi} G_{\nu\phi\psi}.
\end{equation}
Contracting with $\eta^{\mu\nu}$, we find that
\begin{equation}
  G_{[\rho\sigma\tau]} = \tfrac16
  \varepsilon_{\rho\sigma\tau}{}^{\lambda\mu\nu} G_{\lambda\mu\nu}
  \implies G_{[\rho\sigma\tau]} \in \Lambda^3_+ V,
\end{equation}
whereas contracting with $\eta^{\nu\sigma}$ one finds
\begin{equation}
  -5 G_{\mu\rho\tau} + \eta_{\rho\mu} G_\sigma{}^\sigma{}_\tau -
  \eta_{\tau\mu} G_\sigma{}^\sigma{}_\rho = 3 G_{[\mu\rho\tau]}.
\end{equation}
Skew-symmetrising, one finds that $G_{[\mu\rho\tau]} = 0$ and hence
\begin{equation}
  G_{\mu\rho\tau} = \eta_{\rho\mu}\varphi_\tau - \eta_{\tau\mu} \varphi_\rho,
\end{equation}
where $\varphi_\rho = \tfrac15 G_\sigma{}^\sigma{}_\rho$.  Plugging this
back into the equation~\eqref{eq:coc1w} for the family of self-dual
$3$-forms, we see that it is identically satisfied. We arrived at the
following

\begin{proposition}
  \label{prop:solfcc}
  The solution of the first cocycle equation is
  \begin{equation}
    \label{eq:solfcc}
    \begin{split}
      (\beta_\mu s)^A &= C_\mu{}^A{}_B s^B + \tfrac12
      H_{\mu\rho\sigma}\Gamma^{\rho\sigma} s^A + \varphi^{\rho\,A}{}_B
      \Gamma_{\mu\rho} s^B\\
      \gamma(s,s)_{\mu\nu} &= 2 \kappa^\rho H_{\rho\mu\nu} - 2
      \varphi^\rho{}_{AB} \omega_{\mu\nu\rho}{}^{AB},
    \end{split}
  \end{equation}
  for some $H\in\Lambda^3 V$ and $\varphi\in V\otimes \odot^2\Delta$.
\end{proposition}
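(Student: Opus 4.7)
The plan is to extract the full content of the first cocycle condition \eqref{eq:co1} by pairing it with $\be_\mu$ via $\eta$ and applying it to $v = \be_\nu$, and then to exploit the four-component decomposition of $\beta$ in terms of $A, C, H, G$ together with the two canonical spinor bilinears $\kappa$ and $\omega^{AB}$.

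First I would compute $\epsilon_{AB}\,\sbar^A \Gamma_\mu (\beta_\nu s)^B$ piece by piece. Using $\sbar^A \Gamma_\mu s^B = \tfrac12\epsilon^{AB}\kappa_\mu$, the $A_\nu$-contribution equals $\kappa_\mu A_\nu$, while the $C_\nu$-contribution vanishes because $C_\nu{}^B{}_C$ takes values in $\odot^2\Delta \cong \fsp(1)$ and is in particular $\epsilon$-traceless. The remaining $H$- and $G$-terms reduce by standard Clifford manipulations to $\kappa^\rho H_{\nu\mu\rho}$ and $\tfrac12 G_{\nu\rho\sigma\,AB}\,\omega_\mu{}^{\rho\sigma\,AB}$ respectively. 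The skew part in $(\mu\nu)$ reads off the expression for $\gamma(s,s)_{\mu\nu}$ displayed in \eqref{eq:solfcc}; the symmetric part gives a constraint on $\beta$ that must hold for every $s \in S$.

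Next I would invoke Lemma \ref{lem:linindep}: as $s$ varies, $\kappa$ and the three-forms $\omega^{AB}$ remain linearly independent, so the symmetric constraint splits cleanly into the two equations \eqref{eq:coc1k} (the $\kappa$-part) and \eqref{eq:coc1w} (the $\omega$-part). For \eqref{eq:coc1k}, since $\kappa$ ranges over all null vectors and these span $V$, the tensor in parentheses must vanish; two successive $\eta$-traces then force $A=0$, and re-substitution yields $H \in \Lambda^3 V$. For \eqref{eq:coc1w}, the self-duality of $\omega^{AB}$ implies that only the antiself-dual projection on $(\rho\sigma\tau)$ of $\eta_{\tau\mu} G_{\nu\rho\sigma} + \eta_{\tau\nu} G_{\mu\rho\sigma}$ need vanish, producing the displayed tensor equation. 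Contracting it with $\eta^{\mu\nu}$ gives $G_{[\rho\sigma\tau]} \in \Lambda^3_+V$, while contracting with $\eta^{\nu\sigma}$ and skew-symmetrising forces $G_{[\mu\rho\tau]} = 0$; together these kill the totally antisymmetric part of $G$ and leave only a trace-type piece $G_{\mu\rho\tau} = \eta_{\rho\mu}\varphi_\tau - \eta_{\tau\mu}\varphi_\rho$ with $\varphi_\tau := \tfrac15 G_\sigma{}^\sigma{}_\tau$.

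Finally I would verify that this form of $G$ satisfies \eqref{eq:coc1w} identically, which is a short check using self-duality of $\omega^{AB}$ once more. The main obstacle is the tensor-algebra bookkeeping in \eqref{eq:coc1w}: one must juggle the permutation symmetries of the two $\Lambda^2V$ slots of $G$, the antiself-dual projection on the remaining three indices, and several $\eta$-contractions simultaneously, without over- or under-counting, in order to arrive at the clean decomposition of $G$ in terms of $\varphi$ and to check that no residual antisymmetric piece survives.
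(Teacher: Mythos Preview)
Your proposal is correct and follows essentially the same approach as the paper's proof: the same pairing of \eqref{eq:co1} with $\be_\mu$, the same splitting into skew and symmetric parts, the same separation into $\kappa$- and $\omega$-dependent equations \eqref{eq:coc1k}--\eqref{eq:coc1w}, and the same sequence of $\eta$-contractions to extract $A=0$, $H\in\Lambda^3V$, and the trace form of $G$. Your explicit justifications for why the $C$-term drops (tracelessness in $\odot^2\Delta$) and why $\kappa$ may be abstracted (null vectors span $V$) are welcome clarifications of steps the paper leaves implicit.
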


\subsection{Solving the second cocycle condition}
\label{sec:solv-second-cocycle}

We now consider the second cocycle condition~\eqref{eq:co2}. Using that the Dirac current $\kappa$ Clifford
annihilates $s$ (Lemma~\ref{lem:kappakills}), we may
rewrite this condition as follows:
\begin{equation}
  \label{eq:co2too}
  \kappa^\rho E_\rho{}^C{}_D s^D + \tfrac16 \omega_{\mu\nu\rho}{}^{AB}
  \Omega^{\mu\nu\rho}{}_{AB} s^C = 0,
\end{equation}
where
\begin{equation}
  \begin{split}
    E_\rho{}^C{}_D &:= H_{\rho\mu\nu} \Gamma^{\mu\nu} \delta^C{}_D +
    C_\rho{}^C{}_D + \varphi_\rho{}^C{}_D\\
    \Omega^{\mu\nu\rho}{}_{AB} &:= - 3 \varphi^{[\rho}{}_{AB}\Gamma^{\mu\nu]^-}.
  \end{split}
\end{equation}
Note that we are taking the antiself-dual projection in the RHS of the
last equation, that is, we have just introduced a family $\Omega$ of
antiself-dual $3$-forms.

We now polarise equation~\eqref{eq:co2too}
\begin{equation}
  \epsilon_{AB} \sbar_1^A \Gamma^\rho s_2^B E_\rho{}^C{}_D s_3^D +
  \tfrac16 \sbar_1^A \Gamma_{\mu\nu\rho} s_2^B
  \Omega^{\mu\nu\rho}{}_{AB} s_3^C + \text{cyclic} =  0,
\end{equation}
set $s_1 = s_2 = s$ and rearrange to arrive at
\begin{equation}
  \left( \kappa^\rho E_\rho{}^C{}_D + \tfrac16
    \omega_{\mu\nu\rho}{}^{AB} \Omega^{\mu\nu\rho}{}_{AB} \delta^C{}_D + 2
    \epsilon_{AD} E_\rho{}^C{}_B s^B \sbar^A \Gamma^\rho + \tfrac13
    \Omega^{\mu\nu\rho}{}_{AD} s^C\sbar^A \Gamma_{\mu\nu\rho} \right) s_3^D = 0.
\end{equation}
We may abstract $s_3$, keeping in mind that $\Gamma_7 s_3 = s_3$, and
use the Fierz identity~\eqref{eq:Fierz} to arrive at
\begin{multline}
  \label{eq:coc2bis}
  \left( \kappa^\rho E_\rho{}^C{}_D + \tfrac16
    \omega_{\mu\nu\rho}{}^{AB} \Omega^{\mu\nu\rho}{}_{AB} \delta^C{}_D +
    \tfrac14 E_\rho{}^C{}_D \kappa \Gamma^\rho + \tfrac14 E_\rho{}^C{}_B
    \omega^B{}_D \Gamma^\rho \right. \\
  \left. + \tfrac1{24} \Omega^{\mu\nu\rho\,C}{}_D \kappa
  \Gamma_{\mu\nu\rho}- \tfrac1{24} \Omega^{\mu\nu\rho}{}_{AD}
  \omega^{CA} \Gamma_{\mu\nu\rho}\right) P_+ = 0.
\end{multline}
The terms in $\kappa$ and $\omega$ must vanish separately, since this
expression is true for all $s \in S$.  The $\kappa$ terms give
\begin{equation}
  \kappa^\sigma \left( E_\sigma{}^C{}_D + \tfrac14 E_\rho{}^C{}_D
    \Gamma_\sigma \Gamma^\rho + \tfrac1{24}
    \Omega^{\mu\nu\rho\,C}{}_D \Gamma_\sigma \Gamma_{\mu\nu\rho}
  \right) P_+ = 0,
\end{equation}
which, abstracting $\kappa$, substituting for $E$ and $\Omega$ and
simplifying, reduces to
\begin{equation}
  \left( H_{\sigma\mu\nu} \Gamma^{\mu\nu} \delta{}^C{}_D +
    (C + 3\varphi)_\sigma{}^C{}_D + \tfrac14
    H^{\mu\nu\rho}\Gamma_{\mu\nu} \Gamma_\sigma \Gamma_\rho
    \delta^C{}_D + \tfrac14 ( C + 3\varphi)^{\rho\,C}{}_D \Gamma_\sigma
    \Gamma_\rho \right) P_+ = 0.
\end{equation}
The terms in $\Lambda^2\Delta$ and in $\odot^2\Delta$ vanish separately,
yielding the following two equations:
\begin{align}
  \left( H_{\sigma\mu\nu} \Gamma^{\mu\nu} + \tfrac14 H^{\mu\nu\rho}
  \Gamma_{\mu\nu} \Gamma_\sigma \Gamma_\rho \right) P_+ &= 0 \label{eq:cc2kl}\\
  \left( C + 3 \varphi \right)^\sigma{}^C{}_D (\eta_{\rho\sigma} +
  \tfrac14 \Gamma_\rho \Gamma_\sigma) P_+ &= 0. \label{eq:cc2ks}
\end{align}
Simplifying the first equation we arrive at
\begin{equation}
  \tfrac34 \left( H_{\mu\nu\rho} - (\star H)_{\mu\nu\rho}\right)
  \Gamma^{\mu\nu} P_+ = 0 \implies H \in \Lambda^3_+ V,
\end{equation}
whereas simplifying the second equation (and omitting the
$\odot^2\Delta$ indices) we arrive at
\begin{equation}
  \tfrac14 \left( C^\sigma + 3 \varphi^\sigma \right) \left( 5
    \eta_{\rho\sigma} + \Gamma_{\rho\sigma} \right) P_+ = 0 \implies C =
  - 3 \varphi.
\end{equation}

It remains to consider the $\omega$ terms in
equation~\eqref{eq:coc2bis}, but before doing so we make the following
observation.  Since $H$ is self-dual, Lemma~\ref{lem:omegakills} says
that $H \cdot s = 0$.  Similarly, Lemma~\ref{lem:kappakills} says that
$\kappa \cdot s = 0$, hence
\begin{equation}
  \kappa^\rho H_{\rho\mu\nu} \Gamma^{\mu\nu} s = (H \cdot \kappa +
  \kappa \cdot H) \cdot s = 0.
\end{equation}
Comparing with the second cocycle condition~\eqref{eq:co2too}, we notice
that $H$ drops out of \eqref{eq:co2too} and we may conclude
that the $H$-dependent terms in the $\omega$-dependent terms in
equation~\eqref{eq:coc2bis} are identically zero.\footnote{This also follows
combinatorially by employing identities which follow from
Lemma~\ref{lem:wwzero}.}

The remaining $\omega$-dependent terms in equation~\eqref{eq:coc2bis}
are given by
\begin{equation}
  \label{eq:omegatoo}
  \left( \tfrac16 \omega_{\mu\nu\rho}{}^{AB} \Omega^{\mu\nu\rho}{}_{AB}
    \delta^C{}_D + \tfrac14 E_\rho{}^C{}_B \omega^B{}_D \Gamma^\rho -
    \tfrac1{24} \Omega^{\mu\nu\rho}{}_{AD} \omega^{CA} \Gamma_{\mu\nu\rho}\right) P_+ = 0,
\end{equation}
where now
\begin{equation}
  E_\rho{}^C{}_D = -2 \varphi_\rho{}^C{}_D \qquad\text{and}\qquad
  \Omega^{\mu\nu\rho}{}_{AB} = - 3 \varphi^{[\rho}{}_{AB} \Gamma^{\mu\nu]^-}.
\end{equation}
Making use of the Clifford identity
\begin{equation}
  \Gamma^{\mu\nu} \omega^{CA} \Gamma_{\mu\nu\rho} = 4 \omega^{CA} \Gamma_\rho,
\end{equation}
we simplify equation~\eqref{eq:omegatoo}:
\begin{equation}
  -\tfrac12 \left( \omega_{\mu\nu\rho}{}^{AB} \varphi^\rho{}_{AB}
    \Gamma^{\mu\nu} \delta^C{}_D + \varphi^{\rho\,C}{}_B \omega^B{}_D
    \Gamma_\rho - \varphi^\rho{}_{AD} \omega^{CA} \Gamma_\rho \right)
  P_+ = 0.
\end{equation}
Since $\omega P_+ = 0$ (Lemma~\ref{lem:omegakills}), we may replace
$\omega \Gamma_\rho$ by the anticommutator
\begin{equation}
  \omega \Gamma_\rho + \Gamma_\rho \omega = \omega_{\rho\mu\nu}
  \Gamma^{\mu\nu},
\end{equation}
resulting in the equation
\begin{equation}
  -\tfrac12 \left( \omega_{\mu\nu\rho}{}^{AB} \varphi^\rho{}_{AB}
    \delta^C{}_D + \varphi^{\rho\,C}{}_B \omega_{\rho\mu\nu}{}^B{}_D -
    \varphi^\rho{}_{AD} \omega_{\rho\mu\nu}{}^{CA} \right) \Gamma^{\mu\nu}
  P_+ = 0.
\end{equation}
The representation of $\fso(V)$ on $\Sigma_+$ is faithful, so we may
drop the $\Gamma^{\mu\nu}P_+$ and taking out some common factors, we
arrive at
\begin{equation}
  \omega_{\mu\nu\rho}{}^{AB} \varphi^\rho{}_{EF} \left( \delta^E{}_A
    \delta^F{}_B \delta^C{}_D + \delta^F{}_B \epsilon^{EC} \epsilon_{DA}
  - \delta^E{}_A \delta^F{}_D \delta^C{}_B\right) = 0,
\end{equation}
which can be seen to be identically zero using the identity
\begin{equation}
  \epsilon^{EC} \epsilon_{DA} = \delta^E{}_D \delta^C{}_A - \delta^E{}_A \delta^C{}_D.
\end{equation}

In summary, we have proved the following

\begin{theorem}
  \label{thm:spencer}
  There is an isomorphism of representations of $\fso(V) \oplus \fsp(1)$
  \begin{equation}
    H^{2,2}(\fp_-,\fp) \cong \left(\Lambda^3_+ V \otimes \Lambda^2
      \Delta \right) \oplus \left(V \otimes \odot^2\Delta\right)~.
  \end{equation}
  The cohomology class corresponding to elements $H \in \Lambda^3_+V$ and $\varphi
  \in V \otimes \odot^2\Delta$ is represented by the cocycle $\beta +
  \gamma \in C^{2,2}(\fp_-,\fp)$, where
  \begin{equation}
    \begin{split}
      (\beta_\rho s)^A &= \tfrac12 H_{\rho\mu\nu} \Gamma^{\mu\nu} s^A -
      3 \varphi_\rho{}^A{}_B s^B + \varphi^{\sigma\,A}{}_B
      \Gamma_{\rho\sigma} s^B,\\
      \gamma(s,s)_{\mu\nu} &= 2 \kappa^\rho H_{\rho\mu\nu} - 2
      \varphi^\rho{}_{AB} \omega_{\rho\mu\nu}{}^{AB}.
    \end{split}
  \end{equation}
\end{theorem}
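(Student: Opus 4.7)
The plan is to invoke Proposition~\ref{prop:normalisation} to reduce to finding normalized cocycles, parametrize $\beta: V \to \End(S)$ by its Clifford decomposition into scalar and bivector parts with values in either $\Lambda^2\Delta$ (components $A_\mu$, $H_{\mu\rho\sigma}$) or $\odot^2\Delta$ (components $C_\mu{}^A{}_B$, $G_{\mu\rho\sigma}{}^A{}_B$), and then solve the two cocycle equations \eqref{eq:co1}--\eqref{eq:co2} in sequence, using the explicit bilinear decomposition of $s\sbar$ furnished by the Fierz identity~\eqref{eq:Fierz}.

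For the first cocycle condition, I would pair \eqref{eq:co1} with $\be_\mu$: the skew part reads off $\gamma(s,s)_{\mu\nu}$, while the symmetric part, being an identity in $s$, splits into independent constraints along the two $\fso(V)$-invariant bilinears $\kappa$ and $\omega^{AB}$. Stripping off $\kappa$ (and iterating a contraction with $\eta$) forces $A = 0$ and $H \in \Lambda^3 V$, while the $\omega$-constraint, combined with the self-duality of $\omega^{AB}$, reduces $G$ to $G_{\mu\rho\tau} = \eta_{\rho\mu}\varphi_\tau - \eta_{\tau\mu}\varphi_\rho$ for some $\varphi \in V \otimes \odot^2\Delta$. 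This reproduces the content of Proposition~\ref{prop:solfcc}.

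For the second cocycle condition, I would substitute this partial solution into \eqref{eq:co2}, use Lemma~\ref{lem:kappakills} to rewrite it compactly in terms of $E_\rho$ and an auxiliary family $\Omega$ of antiself-dual three-forms, and then polarize in $s$, applying Fierz to resolve the resulting $s\sbar$ products back into a sum of $\kappa$-terms and $\omega$-terms. The $\kappa$-sector splits further according to $\Lambda^2\Delta$ versus $\odot^2\Delta$: using $\Gamma_\mu\Gamma_\nu = \Gamma_{\mu\nu}+\eta_{\mu\nu}$ together with the $\Gamma_7$-action on $\Sigma_+$, the $\Lambda^2\Delta$ part yields $H = \star H$, so $H \in \Lambda^3_+V$, and the $\odot^2\Delta$ part contracts to $C = -3\varphi$.

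The $\omega$-sector is the main obstacle. The key observation is that $H$ drops out of it entirely, because Lemmas~\ref{lem:omegakills} and~\ref{lem:kappakills} (equivalently Lemma~\ref{lem:wwzero}) together force any $H$-contribution in \eqref{eq:co2too} to annihilate $s$. The residual $\varphi$-only expression can then be simplified using the Clifford identity $\Gamma^{\mu\nu}\omega^{CA}\Gamma_{\mu\nu\rho} = 4\omega^{CA}\Gamma_\rho$ and $\omega P_+ = 0$, which lets one replace $\omega\Gamma_\rho$ by the anticommutator $\omega_{\rho\mu\nu}\Gamma^{\mu\nu}$; since the action of $\fso(V)$ on $\Sigma_+$ is faithful, the factor $\Gamma^{\mu\nu}P_+$ can be dropped, and the resulting constraint reduces to a purely $\fsp(1)$-algebraic tensor identity that holds identically by $\epsilon^{EC}\epsilon_{DA} = \delta^E{}_D\delta^C{}_A - \delta^E{}_A\delta^C{}_D$. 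No further conditions on $\varphi$ arise, so the surviving cohomological data is exactly $(H,\varphi) \in \Lambda^3_+V \oplus (V\otimes\odot^2\Delta)$, giving the stated isomorphism; substituting $C = -3\varphi$ back into \eqref{eq:solfcc} yields the explicit cocycle representative.
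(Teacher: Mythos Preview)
Your proposal is correct and follows essentially the same route as the paper's own proof: the same parametrization of $\beta$, the same use of Proposition~\ref{prop:normalisation} and the Fierz identity to separate the cocycle conditions into $\kappa$- and $\omega$-sectors, and the same sequence of reductions (Proposition~\ref{prop:solfcc}, then $H\in\Lambda^3_+V$ and $C=-3\varphi$ from the $\kappa$-sector, then the observation that $H$ drops out of the $\omega$-sector, then the residual $\fsp(1)$ identity). There is nothing to add.
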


\section{Calculation of $H^{2,2}(\fphat_-,\fphat)$}
\label{sec:calc-h2-2hatfp}

We now compute the Spencer cohomology group $H^{2,2}(\fphat_-,\fphat)$
for the $(1,0)$ Poincaré superalgebra extended by the R-symmetry.  The
first cocycle condition does not change by the introduction of the
R-symmetry and hence we can re-use the results of the previous
calculation (see Proposition~\ref{prop:solfcc}) and go directly to solving
the second cocycle condition:
\begin{equation}
  \label{eq:sccR}
  \tfrac14 \gamma(s,s)_{\mu\nu}\Gamma^{\mu\nu} s^A + \tfrac16 \omega_{\mu\nu\rho}{}^{CD}
  \rho^{\mu\nu\rho}{}_{CD}{}^A{}_B s^B + \kappa^\mu (\beta_\mu s)^A = 0.
\end{equation}

We again write it as
\begin{equation}
  \label{eq:co2tooR}
  \kappa^\rho E_\rho{}^A{}_B s^B + \tfrac16 \omega_{\mu\nu\rho}{}^{CD}
  \Omega^{\mu\nu\rho}{}_{CD}{}^A{}_B s^B = 0,
\end{equation}
where
\begin{equation}
  \begin{split}
    E_\rho{}^A{}_B &:= H_{\rho\mu\nu} \Gamma^{\mu\nu} \delta^A{}_B +
    C_\rho{}^A{}_B + \varphi^{\sigma\, A}{}_B \Gamma_{\rho\sigma}\\
    \Omega^{\mu\nu\rho}{}_{CD}{}^A{}_B &:=
    \rho^{\mu\nu\rho}{}_{CD}{}^A{}_B - 3
    \varphi^{[\rho}{}_{CD}\Gamma^{\mu\nu]^-} \delta^A{}_B.
  \end{split}
\end{equation}
Here $\rho\in\Lambda^3_-V\otimes\odot^2 \Delta\otimes\fr$ and in the last term of the RHS of the last equation, we are taking the antiself-dual projection.

Following the same procedure as in
Section~\ref{sec:solv-second-cocycle}, we polarise and arrive at two
equations for endomorphisms: one for the $\kappa$-dependent terms and
one for the $\omega$-dependent terms:
\begin{align}
  \left(\kappa^\rho E_\rho{}^A{}_B + \tfrac14 E_\rho{}^A{}_B \kappa
  \Gamma^\rho - \tfrac1{24} \Omega^{\mu\nu\rho}{}_{CB}{}^{AC} \kappa
  \Gamma_{\mu\nu\rho}\right) P_+ &= 0 \label{eq:k-scc-R}\\
  \left( \tfrac16 \omega_{\mu\nu\rho}{}^{CD}
  \Omega^{\mu\nu\rho}{}_{CD}{}^A{}_B + \tfrac14 E_\rho{}^A{}_D
  \omega^D{}_B \Gamma^\rho - \tfrac1{24}
  \Omega^{\mu\nu\rho}{}_{CB}{}^A{}_D \omega^{DC}
  \Gamma_{\mu\nu\rho}\right) P_+ &= 0. \label{eq:w-scc-R}
\end{align}
Abstracting $\kappa$ from the first equation we arrive at
\begin{equation}
  \label{eq:k-scc-R-bis}
  \left( 5 E_{\rho}{}^A{}_B + E^{\sigma\,A}{}_B
  \Gamma_{\rho\sigma} - \tfrac16 \Omega^{\mu\nu\sigma}{}_{CB}{}^{AC} \Gamma_\rho
  \Gamma_{\mu\nu\sigma}\right) P_+ = 0.
\end{equation}
Substituting for $E$ and $\Omega$ and simplifying we end up with
\begin{equation}
\label{eq:k-scc-R-tris}
  \left(3 \left(H_{\rho\mu\nu} - \widetilde H_{\rho\mu\nu}\right)
  \Gamma^{\mu\nu} \delta^A{}_B - \rho_{\rho\mu\nu\,CB}{}^{AC}
  \Gamma^{\mu\nu}+ 5 (C + 3 \varphi)_\rho{}^A{}_B + (C
  + 3\varphi)^{\sigma\, A}{}_B \Gamma_{\rho\sigma} \right) P_+ = 0,
\end{equation}
where we used the notation $\widetilde H=\star H$.
It is convenient to decompose $\rho$ into its irreducible components
relative to the R-symmetry.  Lowering indices, we take $\rho
\in \Lambda^3_- V \otimes \odot^2\Delta \otimes \odot^2\Delta$. Omitting
the $V$ indices but not the $\Delta$ indices, we have
\begin{equation}
  \label{eq:rho-decomp}
  \rho_{AB\,CD} = \xi_{ABCD} + (\zeta_{AC} \epsilon_{BD} + \zeta_{BC}
  \epsilon_{AD} + \zeta_{AD} \epsilon_{BC} + \zeta_{BD}
  \epsilon_{AC}) + \theta (\epsilon_{AC} \epsilon_{BD} + \epsilon_{BC}
  \epsilon_{AD}),
\end{equation}
where $\xi \in \Lambda^3_- V \otimes \odot^4\Delta$, $\zeta \in
\Lambda^3_-V \otimes \odot^2 \Delta$ and $\theta \in \Lambda^3_- V$.
It follows that
\begin{equation}
  \rho_{\rho\mu\nu\,CB}{}^{AC} \Gamma^{\mu\nu} =
  -4\zeta_{\rho\mu\nu}{}^A{}_B \Gamma^{\mu\nu} + 3
  \theta_{\rho\mu\nu} \delta^A{}_B \Gamma^{\mu\nu}.
\end{equation}
Plugging this into equation~\eqref{eq:k-scc-R-tris}, and separating the
equation into terms of different types under the R-symmetry and
$\fso(V)$, we arrive at
\begin{equation}
  C = - 3 \varphi, \qquad \zeta = 0 \qquad\text{and}\qquad H -
  \widetilde H = \theta. 
\end{equation}
It is convenient to decompose $H = H^+ + H^-$ into self-dual
and antiself-dual parts. Clearly $H^-=\tfrac12 \theta$.  As observed in
Section~\ref{sec:solv-second-cocycle}, the self-dual part $H^+$ of $H$
does not enter the second cocycle condition, so when we solve
equation~\eqref{eq:w-scc-R}, we may replace $H$ by $\tfrac12 \theta$.
The $\varphi$-dependent terms in the equation are just as in
Section~\ref{sec:solv-second-cocycle} and, as shown there, they cancel
identically.  This leaves an equation only for $\rho$:
\begin{equation}
  \label{eq:scc-final}
  \tfrac16 \omega_{\mu\nu\rho}{}^{CD} \left(
    \rho^{\mu\nu\rho}{}_{CD}{}^A{}_B + \tfrac34 \theta^{\rho\sigma\tau}
    \delta^A{}_D \epsilon_{BC} \Gamma_{\sigma\tau} \Gamma^{\mu\nu} -
    \tfrac1{24} \rho^{\lambda\sigma\tau}{}_{CB}{}^A{}_D
    \Gamma^{\mu\nu\rho}\Gamma_{\lambda\sigma\tau}\right) P_+ = 0.
\end{equation}
Using the decomposition~\eqref{eq:rho-decomp} of $\rho$ and the fact
that $\zeta = 0$, we may rewrite this equation in terms of $\xi$ and
$\theta$:
\begin{multline*}
  \tfrac16 \omega_{\mu\nu\rho}{}^{CD} \left(
    \xi^{\mu\nu\rho}{}_{BCD}{}^A + 2 \theta^{\mu\nu\rho} \epsilon_{BC}\delta^A{}_D + \tfrac34 \theta^{\rho\sigma\tau}
     \epsilon_{BC} \delta^A{}_D \Gamma_{\sigma\tau} \Gamma^{\mu\nu} \right.\\ -
     \left. \tfrac1{24} \xi^{\lambda\sigma\tau}{}_{BCD}{}^A
     \Gamma^{\mu\nu\rho}\Gamma_{\lambda\sigma\tau} +
     \tfrac1{24}\theta^{\lambda\sigma\tau} \epsilon_{BC}\delta^A{}_D
     \Gamma^{\mu\nu\rho}\Gamma_{\lambda\sigma\tau}\right) P_+ = 0.
\end{multline*}
Breaking up into different types under the R-symmetry, we arrive at two separate equations, one for
$\theta$ and one for $\xi$:
\begin{equation}
  \begin{split}
  \tfrac16 \omega_{\mu\nu\rho}{}^{CD} \left(
    \xi^{\mu\nu\rho}{}_{BCD}{}^A - \tfrac1{24} \xi_{\lambda\sigma\tau\, BCD}{}^A
     \Gamma^{\mu\nu\rho}\Gamma^{\lambda\sigma\tau} \right) P_+ &= 0,\\
  \tfrac16 \omega_{\mu\nu\rho}{}^A{}_B \left(
    2 \theta^{\mu\nu\rho} + \tfrac34 \theta^\rho{}_{\sigma\tau} \Gamma^{\sigma\tau} \Gamma^{\mu\nu} +
     \tfrac1{24}\theta_{\lambda\sigma\tau}
     \Gamma^{\mu\nu\rho}\Gamma^{\lambda\sigma\tau}\right) P_+ &= 0.
  \end{split}
\end{equation}
Each of these equations have terms in the $\Lambda^0 V$ and $\Lambda^2V$
components of $\End(\Sigma_+)$, which must vanish separately.  The
$\Lambda^0V$ component of the first equation says that
\begin{equation}
  \tfrac14 \omega_{\mu\nu\rho}{}^{CD} \xi^{\mu\nu\rho}_{BCD}{}^A = 0
  \implies \xi = 0,
\end{equation}
whereas the $\Lambda^0V$ component of the second equation vanishes.  The
$\Lambda^2V$ component of the second equation is
\begin{equation}
 -\tfrac38 \omega_{\mu\nu\rho}{}^A{}_B \theta^\rho{}_{\sigma\tau}\left(
    \eta^{\nu\sigma} \eta^{\mu\alpha} \eta^{\tau\beta} + \tfrac14
    \epsilon^{\mu\nu\sigma\tau\alpha\beta} \right) \Gamma_{\alpha\beta} P_+ = 0.
\end{equation}
Since $\Sigma_+$ is a faithful $\fso(V)$-representation we may write this equation as
\begin{equation}
  \tfrac3{16} \omega_{\mu\nu\rho}{}^A{}_B \theta^\rho{}_{\sigma\tau}\left(
    \eta^{\nu\sigma} \eta^{\mu\alpha} \eta^{\tau\beta} -
    \eta^{\nu\sigma} \eta^{\mu\beta} \eta^{\tau\alpha} + \tfrac12
    \epsilon^{\mu\nu\sigma\tau\alpha\beta} \right) = 0.
\end{equation}
We now abstract $\omega$, remembering that this projects onto the
antiself-dual component of the resulting expression and arrive at:\footnote{
This equation for $\theta$ defines an $\fso(V)$-equivariant map
$\Phi:\Lambda^3_-V \to \Lambda^2V \otimes \Lambda^3_-V$.  There is a
one-dimensional space of such maps, spanned by the transpose of the
$\fso(V)$ action $\mu: \fso(V) \otimes \Lambda^3_+V \to \Lambda^3_+V$,
so that $\Phi = c \mu^T$ for some $c \in \RR$.  Let $\theta \in
\Lambda^3_-V$ be in the kernel of this map.  Then for all $L \in
\fso(V)$ and $\Xi \in \Lambda^3_+V$, we have
\begin{equation}
0 = \left<\Phi(\theta), L \otimes \Xi\right> = 
c \left<\theta, \mu(L \otimes \Xi)\right> = 
c \left<\theta, L \Xi\right> =
- c \left<L \theta, \Xi\right>,
\end{equation}
which implies that if $c\neq 0$ then $\theta$ is $\fso(V)$-invariant and
hence $\theta = 0$. However, we will now see that $c=0$ and hence $\theta$
remains unconstrained.}
\begin{equation}
  \eta^{\alpha[\mu} \theta^{\nu\rho]^-\beta} - \eta^{\beta[\mu}
  \theta^{\nu\rho]^-\alpha} - \tfrac12 \theta^{[\rho}{}_{\sigma\tau}
  \epsilon^{\mu\nu]^-\sigma\tau\alpha\beta} = 0.
\end{equation}
Expanding this out and simplifying, we arrive at the equation
\begin{equation}
  \theta^{[\rho}{}_{\sigma\tau} \epsilon^{\mu\nu\alpha\beta]\sigma\tau}
  = 0.
\end{equation}
Taking the Hodge dual of the LHS yields
\begin{equation}
  \theta^{\rho}{}_{\sigma\tau} \epsilon^{\mu\nu\alpha\beta\sigma\tau}
  \epsilon_{\rho\mu\nu\alpha\beta\pi} = 4! \left(\delta^\tau_\rho
    \delta^\sigma_\pi - \delta^\sigma_\rho
    \delta^\tau_\pi \right) \theta^\rho{}_{\sigma\tau} = 0~,
\end{equation}
which is identically satisfied.  In other words, we find that
$\theta$ is unconstrained.

In summary, we have proved the following extension of
Theorem~\ref{thm:spencer}:

\begin{theorem}
  \label{thm:spencer-R}
  There is an isomorphism of representations of $\fso(V) \oplus \fsp(1)$
  \begin{equation}
    H^{2,2}(\fphat_-,\fphat) \cong \left(\Lambda^3 V \otimes \Lambda^2
      \Delta \right) \oplus \left(V \otimes \odot^2\Delta\right)~.
  \end{equation}
  The cohomology class corresponding to elements $H \in \Lambda^3V$ and $\varphi
  \in V \otimes \odot^2\Delta$ is represented by the cocycle $\beta +
  \gamma + \rho\in C^{2,2}(\fphat_-,\fphat)$, where
  \begin{equation}
    \begin{split}
      (\beta_\rho s)^A &= \tfrac12 H_{\rho\mu\nu} \Gamma^{\mu\nu} s^A -
      3 \varphi_\rho{}^A{}_B s^B + \varphi^{\sigma\,A}{}_B
      \Gamma_{\rho\sigma} s^B,\\
      \gamma(s,s)_{\mu\nu} &= 2 \kappa^\rho H_{\rho\mu\nu} - 2
      \varphi^\rho{}_{AB} \omega_{\rho\mu\nu}{}^{AB},\\
      \rho(s,s)^A{}_B &= \tfrac13 \omega_{\mu\nu\rho}{}^A{}_B
      (H^{\mu\nu\rho} - \widetilde H^{\mu\nu\rho}),
    \end{split}
  \end{equation}
  with $\widetilde H$ the Hodge dual of $H$.
\end{theorem}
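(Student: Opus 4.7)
My plan is to mirror the strategy used for $\fp$ in Section \ref{sec:calculation-h2-2fp}, while tracking the new component $\rho \in \Lambda^3_- V \otimes \odot^2 \Delta \otimes \fr$ of a normalised $(2,2)$-cochain. The first cocycle condition, evaluated on $(v,s,s)$, does not see $\rho$ at all, so Proposition \ref{prop:solfcc} applies verbatim: $\beta$ and $\gamma$ take the same parametric form as before, but now with $H \in \Lambda^3 V$ \emph{unconstrained} a priori, $\varphi \in V \otimes \odot^2 \Delta$, and an auxiliary $C$. The substantive work is the second cocycle condition \eqref{eq:sccR}, which couples $\beta$, $\gamma$ and $\rho$.

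I would rewrite \eqref{eq:sccR} in the form \eqref{eq:co2tooR}, polarise in $s$ and apply the Fierz identity \eqref{eq:Fierz} to split the resulting endomorphism identity into a $\kappa$-dependent equation \eqref{eq:k-scc-R} and an $\omega$-dependent equation \eqref{eq:w-scc-R}, exactly as in Section \ref{sec:solv-second-cocycle}. To extract the $\rho$ content cleanly, I would decompose $\rho = \xi + \zeta + \theta$ as in \eqref{eq:rho-decomp}, along the $\fsp(1)$ branching $\odot^2 \Delta \otimes \odot^2 \Delta \cong \odot^4 \Delta \oplus \odot^2 \Delta \oplus \Lambda^0 \Delta$. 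Abstracting $\kappa$ from \eqref{eq:k-scc-R} and separating the result by $\fsp(1)$-type and by $\Lambda^0 V$ versus $\Lambda^2 V$ content in $\End(\Sigma_+)$ should yield $C = -3 \varphi$, $\zeta = 0$, and the key identification $H - \widetilde H = \theta$; equivalently $H^- = \tfrac12 \theta$ is pinned down by $\rho$, while $H^+$ is free.

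Next, Lemmas \ref{lem:kappakills}, \ref{lem:omegakills} and \ref{lem:wwzero} ensure that $H^+$ drops out of the $\omega$-equation \eqref{eq:w-scc-R} and that the $\varphi$-dependent terms cancel by the same Clifford manipulation already performed in the $\fp$ case. What remains is equation \eqref{eq:scc-final}, containing only $\xi$ and $\theta$. Splitting again by $\fsp(1)$-type, the $\Lambda^0 V$ component of the $\xi$-equation forces $\xi = 0$, while the $\theta$-equation has no $\Lambda^0 V$ piece. The main obstacle, and the step which genuinely distinguishes the $\fphat$ case from the $\fp$ one, is to show that the remaining $\Lambda^2 V$ component of the $\theta$-equation is an \emph{identity}, not a constraint on $\theta$. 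I plan to abstract $\omega$ (projecting onto its antiself-dual part), reduce to a tensor identity for $\theta$, and then take a Hodge dual so that the whole expression collapses to the standard contraction of two $\varepsilon$ tensors into antisymmetrised Kronecker deltas --- a tautology, forcing no condition on $\theta$. A more conceptual route is to observe that the space of $\fso(V)$-equivariant maps $\Lambda^3_- V \to \Lambda^2 V \otimes \Lambda^3_- V$ is one-dimensional, spanned by the transpose of the $\fso(V)$-action, and to check that the coefficient dictated by our equation vanishes.

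Assembling these results, the parameter space is precisely $(H, \varphi) \in \Lambda^3 V \oplus (V \otimes \odot^2 \Delta)$, and the cocycle representative takes the stated form, with $\rho(s,s) = \tfrac13 \omega \cdot (H - \widetilde H)$ encoding the dependence of the R-symmetry component on $H^-$.
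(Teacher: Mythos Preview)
Your proposal is correct and follows essentially the same approach as the paper: reuse Proposition~\ref{prop:solfcc} for the first cocycle condition, split the second cocycle condition into $\kappa$- and $\omega$-dependent equations via polarisation and Fierz, decompose $\rho$ into $\xi+\zeta+\theta$, extract $C=-3\varphi$, $\zeta=0$ and $H-\widetilde H=\theta$ from the $\kappa$-equation, and then show from the $\omega$-equation that $\xi=0$ while $\theta$ is unconstrained. Both the computational verification (Hodge-dualising to an $\varepsilon$-contraction identity) and the conceptual argument via $\fso(V)$-equivariant maps $\Lambda^3_-V\to\Lambda^2V\otimes\Lambda^3_-V$ that you mention for the last step are exactly what the paper does (the latter appearing in a footnote).
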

Note that the components $\beta$ and $\gamma$ in Theorems
\ref{thm:spencer} and \ref{thm:spencer-R} coincide. In particular, this
will lead to a uniform notion of a Killing spinor (see Definition
\ref{def:KillingSpinor} in \S \ref{sec:killing-spinors}). We also note
that Theorem \ref{thm:spencer-R} is of a more general scope than Theorem
\ref{thm:spencer}, since it includes $3$-forms which are not necessarily
self-dual. As we will shortly see, Theorem \ref{thm:spencer} is adequate
for the construction of a Killing superalgebra on a lorentzian
six-dimensional spin manifold endowed with a closed self-dual $3$-form
and a family $\varphi\in\Omega^1(M;\fsp(1))$ of coclosed $1$-forms, but
it is precisely the introduction of R-symmetry transformations and
Theorem \ref{thm:spencer-R} which allow to extend this construction (at
least partially) to the non self-dual case.

\section{The Killing superalgebra}
\label{sec:killing-spinors}

By analogy with the results in
\cite{Figueroa-OFarrill:2015rfh,Figueroa-O'Farrill:2015utu,deMedeiros:2016srz}
on four- and eleven-dimensional supergravities, we may read off from
Theorem~\ref{thm:spencer} (or Theorem~\ref{thm:spencer-R}) the form of a
Killing spinor equation.  In this section we define these Killing
spinors and investigate the conditions under which they generate a Lie
superalgebra.  On the six-dimensional manifolds admitting such Killing
spinors, the Lie superalgebra they generate can be interpreted as the
supersymmetry algebra for rigidly supersymmetric field theories.
Although we do not construct such theories in this paper, the method would parallel the well-known construction of
supersymmetric field theories in Minkowski space via an appropriate superspace
formalism.  Of course, in general, this construction would require a better
understanding of the representation theory of the resulting Lie
superalgebras than we possess at present.

Let us start with the following definition \ref{def:KillingSpinor}. We
recall that in our conventions $S$ is an irreducible representation of
$\Spin(V)$ of quaternionic dimension $2$.  Since the introduction of the
R-symmetry results in relaxing the self-duality of the $3$-form $H$, we
will work in the more general case, specialising to the self-dual case
if and when necessary.

\begin{definition}
 \label{def:KillingSpinor}
 Let $(M,g)$ be a lorentzian six-dimensional spin manifold, with
 associated spinor bundle $\SS\to M$ with typical fiber $S$. Let
 $H\in\Omega^3(M)$ be a $3$-form and $\varphi$ a $1$-form on $M$ with
 values in $\fsp(1)$. We say that a section $\varepsilon$ of $\SS$ is a
 \textbf{Killing spinor} if it obeys
  \begin{equation}
    \label{eq:KillingSpinor}
    \eD_X \varepsilon:= \nabla_X \varepsilon - \iota_X H \cdot \varepsilon + 3\varphi(X)\cdot\varepsilon-X\wedge \varphi\cdot\varepsilon=0\;,
	\end{equation}
for all $X\in\mathfrak{X}(M)$.
\end{definition}

We write ${\mathfrak{X}}(M) = \Gamma (TM)$ to identify vector fields with sections of the tangent bundle on $M$ and ${\mathfrak{S}}(M) = \Gamma (\SS)$ to identify spinor fields with sections of the spinor bundle $\SS\to M$.

Note that any non-zero Killing spinor is nowhere vanishing as it is parallel with respect to a connection on the spinor bundle.

In this section we investigate under which conditions such Killing
spinors generate a Lie superalgebra. We know from
\cite{Figueroa-O'Farrill:2013aca} that if $\varphi=0$ and $dH = 0$ then
this is the case. In this paper we will not assume $\varphi=0$ and give
differential constraints separately on $H$ and $\varphi$ which guarantee
the existence of the Killing superalgebra.

In practice, we shall work with complexified bundles and forms in what
follows, although we will not mention this explicitly. In particular we
note that
%In view of the $\fso(V)$-equivariant identification $S\otimes\CC \cong\Sigma_+ \otimes_{\CC} \Delta$, 
the (complexification of the) spinor bundle $\SS$ has a
``Grassmann-like'' decomposition
%\label{eq:GrDec}
\begin{equation}
  \SS=\$_+ \otimes \mathcal H\;,
\end{equation}
where
$\$_+$ is the bundle of positive-chirality spinors and $\mathcal H=M\times\Delta
\to M$ a trivial rank-two complex  vector bundle. (The bundle $\mathcal H$ is trivial since the action of the structure group $\Spin(V) \cong \SL(2,\HH)$ on $S\otimes\CC =
\Sigma_+ \otimes_{\CC} \Delta$ is trivial on $\Delta$.) The Levi-Civita
connection $\nabla$ is easily seen to be compatible with this
decomposition, that is
\begin{equation}
  \nabla(\varepsilon_+\otimes \zeta)=\nabla \varepsilon_+\otimes
  \zeta+\varepsilon_+\otimes\overline\nabla \zeta
\end{equation} 
for all $\varepsilon_+\in\Gamma(\$_+)$ and $\zeta\in\Gamma(\mathcal H)$,
where $\overline\nabla$ is a flat connection on $\mathcal H$. We will
also work with differential forms which take values in
$\mathfrak{sl}(2,\CC)$ and, whenever necessary, use the Cartan-Killing
form to identify the latter with its dual.

\subsection{Preliminaries}
\label{sec:KSAprel}

We collect here a series of auxiliary differential and algebraic relations, which will
be needed in the proof of the main Theorems \ref{thm:KSAI} and \ref{thm:KSAII}. 

Let $\varepsilon$ be a non-zero section  of $\SS$. It has associated the following differential forms:
\begin{itemize}
\item $\omega^{(1)}\in\Omega^1(M)$ given by 
  \begin{equation}
    \label{eq:form1}
    \omega^{(1)}(X_1)=
    \left(\varepsilon, X_1\cdot \varepsilon\right)\;,
  \end{equation}
\item a family of self-dual $3$-forms $\omega^{(3)}\in\Omega_+^3(M;\fsp(1))$ given by
  \begin{equation}
    \label{eq:form3}
    \omega^{(3)}_A(X_1,X_2,X_3)=\left(\varepsilon, (X_1\wedge X_2\wedge
      X_3\right)\cdot A\cdot\varepsilon)\;,
  \end{equation}
\item $\omega^{(5)}\in\Omega^5(M)$ given by
  \begin{equation}
    \label{eq:form5}
    \omega^{(5)}(X_1,\ldots,X_5)=
    \left(\varepsilon, (X_1 \wedge \cdots\wedge X_5) \cdot \varepsilon\right)\;,
  \end{equation}
\end{itemize}
where $X_1,\ldots,X_5\in\mathfrak{X}(M)$ and $A\in\fsp(1)$. We note that
the $1$-form $\omega^{(1)}$ is the $g$-dual of the Dirac current
$\kappa=\kappa(\varepsilon,\varepsilon)\in\mathfrak{X}(M)$ whereas
$\omega^{(5)}=-\star\omega^{(1)}$. The family of self-dual $3$-forms has
already been introduced at a purely linear algebra level in
\eqref{eq:fam3}. We here adorn it with a superscript, to emphasise that
it is a $3$-form and avoid any confusion with the other spinor
bilinears.

\begin{proposition}
  \label{prop:auxiliary}
  Let $\varepsilon$ be a Killing spinor on $(M,g,H,\varphi)$. Then
  \begin{equation}
    \label{eq:covder}
    \begin{split}
      \nabla_X\omega^{(1)}&=2\imath_X\imath_\kappa H-2\imath_X\imath_{\varphi}\omega^{(3)}\;,\\
      \nabla_X\omega^{(3)}_A(X_1,X_2,X_3) &=
      -6\operatorname{skew}g(X,X_1)\tr(A\varphi(X_2))\omega^{(1)}(X_3) -
      6\operatorname{skew}\omega^{(3)}_A(\imath_{X}\imath_{X_1} H,X_2,X_3)\\
      &\;\;\;\;+3\omega^{(3)}_{[\varphi(X),A]}(X_1,X_2,X_3)+3\operatorname{skew}\omega^{(3)}_{[\varphi(X_1),A]}(X_2,X_3,X) \\
      &\;\;\;\;-3\operatorname{skew}g(X,X_1)(\imath_{[\varphi,A]}\omega^{(3)})(X_2,X_3)-\omega^{(5)}(X,X_1,X_2,X_3,\tr(A\varphi))\;,\\
      \nabla_X\omega^{(5)}&=2X\wedge\omega^{(1)}\wedge\star H-2
      X\wedge\star(\imath_{\varphi}\omega^{(3)})\;,
    \end{split}	
  \end{equation}
  for all $X,X_1,X_2,X_3\in\mathfrak{X}(M)$ and $A\in\fsp(1)$, where
  $\operatorname{skew}=\operatorname{skew}_{X_1,X_2,X_3}$ is
  skew-symmetrization on $X_1,X_2,X_3$ with weight one.
\end{proposition}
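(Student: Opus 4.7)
The plan is to prove all three formulas by the same two-step procedure. Step one applies the Leibniz rule with respect to $\nabla$: since $\nabla$ is metric, compatible with Clifford multiplication by vector fields, and restricts to the flat connection $\overline\nabla$ on $\mathcal{H}$ (in particular annihilating constant $A\in\fsp(1)$), each covariant derivative expands as
\begin{equation*}
(\nabla_X\omega^{(k)})(Y_1,\ldots,Y_k)=(\nabla_X\varepsilon,(Y_1\wedge\cdots\wedge Y_k)\cdot A\cdot\varepsilon)+(\varepsilon,(Y_1\wedge\cdots\wedge Y_k)\cdot A\cdot\nabla_X\varepsilon),
\end{equation*}
with the $A$-factor absent when $k=1,5$. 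Step two substitutes $\nabla_X\varepsilon=(\iota_X H-3\varphi(X)+X\wedge\varphi)\cdot\varepsilon$ from~\eqref{eq:KillingSpinor} and re-identifies every resulting bilinear as one of $\omega^{(1)},\omega^{(3)}_A,\omega^{(5)}$.

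Two algebraic ingredients drive this re-identification. First, the Clifford-adjoint rule $(\Xi\cdot\varepsilon_1,\varepsilon_2)=\sigma_p(\varepsilon_1,\Xi\cdot\varepsilon_2)$ for a $p$-form $\Xi$, with sign $\sigma_p\in\{\pm1\}$ fixed by iterating $\langle v\cdot s_+,s_-\rangle=-\langle s_+,v\cdot s_-\rangle$ and combining with the skew-symmetry of $(-,-)$, lets us collapse the two Leibniz terms. Second, the Clifford contraction rule $Y\cdot\Xi=Y\wedge\Xi+\iota_{Y^\sharp}\Xi$ splits each nested Clifford product into homogeneous form-degree pieces. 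A crucial vanishing is that $(\varepsilon,\eta\cdot\varepsilon)=0$ whenever $\eta$ is a $3$-form without $\fsp(1)$-insertion, because the underlying spinor bilinear $\overline{\varepsilon}{}^{A}\Gamma_{\mu\nu\rho}\varepsilon^{B}$ is symmetric in $(A,B)$ by the lemma of Section~\ref{sec:conventions} while the $\Delta$-factor of $(-,-)$ is the skew form $\epsilon_{AB}$; by the same symmetry argument, $(\varepsilon,Y\cdot A\cdot\varepsilon)=0$ for a $1$-form $Y$ and $A\in\fsp(1)$, since then the spinor $1$-form bilinear is proportional to $\epsilon^{AB}$ and pairs with the symmetric $A$ via $\operatorname{tr}A=0$. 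These identities kill the bulk of the otherwise-present terms.

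The formulas for $\omega^{(1)}$ and $\omega^{(5)}$ follow quickly: only the $\iota_X H$ and $X\wedge\varphi$ substitutions survive, the former contributing $\iota_X\iota_\kappa H$ and the latter contributing $\iota_X\iota_\varphi\omega^{(3)}$ after splitting $X_1\cdot(X\wedge\varphi)$ into its $3$-form and $1$-form parts; the two formulas are Hodge-dual to each other via $\omega^{(5)}=-\star\omega^{(1)}$ and the self-duality of $\omega^{(3)}$. The main obstacle is the $\omega^{(3)}_A$ formula, where all three substitution pieces survive and both the vector and $\fsp(1)$ contents of each term must be tracked. Three mechanisms generate the six terms on the right-hand side: (i) the $\iota_X H$ piece, via the expansion of $(X_1\wedge X_2\wedge X_3)\cdot\iota_X H$ into its degree-$5$, $3$, $1$ components, produces the $\omega^{(3)}_A(\iota_X\iota_{X_1}H,X_2,X_3)$ term; (ii) the $\varphi(X)$ piece, using $\varphi(X)\cdot A=A\cdot\varphi(X)+[\varphi(X),A]$ on the $\mathcal{H}$-factor, produces the two $\omega^{(3)}_{[\varphi(X),A]}$ and $\omega^{(3)}_{[\varphi(X_1),A]}$ contributions; (iii) the $X\wedge\varphi$ piece, combined with the decomposition $\fsp(1)\otimes\fsp(1)=\CC\,\operatorname{tr}\oplus\odot^2\fsp(1)\oplus\Lambda^2\fsp(1)$ into trace, symmetric-traceless and commutator parts, together with the Clifford contractions of $X$ against the $X_i$, produces the remaining $\omega^{(5)}$-, $g(X,X_i)\operatorname{tr}(A\varphi)\omega^{(1)}$- and $g(X,X_i)\iota_{[\varphi,A]}\omega^{(3)}$-terms. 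Everything is finally reorganised in manifestly $\operatorname{skew}_{X_1,X_2,X_3}$-symmetric form.
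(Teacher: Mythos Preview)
Your proposal is correct and follows essentially the same approach as the paper's proof: differentiate the bilinear using the Leibniz rule, substitute the Killing spinor equation, and simplify via the $\odot^2 S$ and $\Lambda^2 S$ decompositions (which are exactly your vanishing statements $(\varepsilon,\eta\cdot\varepsilon)=0$ for $3$-forms without $\fsp(1)$-insertion and $(\varepsilon,Y\cdot A\cdot\varepsilon)=0$); for $\omega^{(3)}_A$ the paper also invokes the identity $2AB=[A,B]+\tr(AB)\Id$ for traceless $2\times2$ matrices, which is your trace/commutator split (the symmetric-traceless piece you mention is not actually present in the matrix product and plays no role). The only cosmetic difference is that the paper collapses the two Leibniz terms into a single $2(\varepsilon,\cdots\cdot\nabla_X\varepsilon)$ at the outset rather than via your Clifford-adjoint argument, and derives the $\omega^{(5)}$ formula by applying $\star$ to the $\omega^{(1)}$ formula just as you suggest.
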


\begin{proof}
  For any Killing spinor $\varepsilon$ and $X,Y\in\mathfrak{X}(M)$ we compute
  \begin{equation}
    \label{eq:diff1}
    \begin{split}
      \nabla_{X}\omega^{(1)}(Y)&=2\left(\varepsilon,Y\cdot\nabla_{X}\varepsilon\right)\\
      &=2\left(\varepsilon, Y\cdot \imath_X H\cdot\varepsilon \right) -6\left(\varepsilon, Y\cdot
        \varphi(X)\cdot\varepsilon\right) +2\left(\varepsilon, Y\cdot (X\wedge \varphi)\cdot\varepsilon\right)\\
      &=2\left(\varepsilon,\imath_Y\imath_XH\cdot\varepsilon\right)+2\left(\varepsilon,Y\wedge X\wedge\varphi\cdot\varepsilon\right)\;,
    \end{split}
  \end{equation}
  where last identity is a consequence of the decompositions
  \begin{equation}
    \label{eq:decompo}
    \begin{split}
      \odot^2 S&=\Lambda^1V\oplus(\Lambda^3_+ V\otimes\odot^2 \Delta)\;,\\
      \Lambda^2 S&=(\Lambda^1 V\otimes\odot^2\Delta)\oplus\Lambda^3_+V\;.
    \end{split}
  \end{equation}
  This shows the first equation in \eqref{eq:covder} and applying $\star$ on both sides of it readily yields the last equation too. Similarly, for all $X,X_1,X_2,X_3\in\mathfrak{X}(M)$ and $A\in\fsp(1)$, we compute
  \begin{equation}
    \label{eq:nabla3}
    \begin{split}
      \nabla_{X}\omega^{(3)}_A(X_1,X_2,X_3)&=2\left(\varepsilon,(X_1\wedge X_2\wedge X_3)\cdot A\cdot\nabla_{X}\varepsilon\right)\\
      &=2\left(\varepsilon,(X_1\wedge X_2\wedge X_3)\cdot \imath_XH\cdot A\cdot\varepsilon \right)
      -6\left(\varepsilon,(X_1\wedge X_2\wedge X_3)\cdot A\cdot\varphi(X)\cdot\varepsilon     \right)\\
      &\;\;\;\;+2\left(\varepsilon,(X_1\wedge X_2\wedge X_3)\cdot A\cdot (X\wedge\varphi)\cdot\varepsilon
      \right)\\
      &=2\mathfrak{S}\left(\varepsilon,X_1\wedge X_2\wedge\imath_{X_3}\imath_X H\cdot A\cdot\varepsilon \right)-3\left(\varepsilon,(X_1\wedge X_2\wedge X_3)\cdot [A,\varphi(X)]\cdot\varepsilon     \right)\\
      &\;\;\;\;+\left(\varepsilon,(X_1\wedge X_2\wedge X_3\wedge X\wedge \tr(A\varphi))\cdot\varepsilon
      \right)
      +\mathfrak{S}\left(\varepsilon,\imath_{X_1}(X\wedge [A,\varphi])\wedge X_2 \wedge X_3\cdot\varepsilon     \right)\\
      &\;\;\;\;+\mathfrak{S}\imath_{X_1}\imath_{X_2}(X\wedge \tr(A\varphi))\left(\varepsilon,X_3\cdot\varepsilon \right)\\
    \end{split}
  \end{equation}
  where $\mathfrak{S}=\mathfrak{S}_{X_1,X_2,X_3}$ is the cyclic sum and in 
  the last step we repeatedly used the decomposition \eqref{eq:decompo} and the identity
  \begin{equation}
    2AB=[A,B]+\tr(AB)\Id\;,
  \end{equation}
  which holds for all traceless complex $2\times 2$-matrices. The second
  equation in \eqref{eq:covder} is just \eqref{eq:nabla3} combined with
  definitions \eqref{eq:form1}-\eqref{eq:form5}.
\end{proof}

\begin{corollary}
  \label{cor:Killing}
  Let $\varepsilon$ be a Killing spinor on $(M,g,H,\varphi)$. Then
  \begin{equation}
    \label{eq:extdiff}
    \begin{split}
      d\omega^{(1)}&=4\imath_\kappa H-4\imath_\varphi\omega^{(3)}\;,\\
      d\omega^{(3)}_A(X_0,X_1,X_2,X_3) &=
      -24\operatorname{Skew}\omega^{(3)}_A(\imath_{X_0}\imath_{X_1}H,X_2,X_3)
      -4(\imath_{\tr(A\varphi)}\omega^{(5)})(X_0,X_1,X_2,X_3)\;,\\
      d\omega^{(5)}&=0\;,
    \end{split}
  \end{equation}
  for all $X_0,\ldots, X_3\in\mathfrak{X}(M)$ and $A\in\fsp(1)$, where
  $\operatorname{Skew}=\operatorname{Skew}_{X_0,X_1,X_2,X_3}$ is
  skew-symmetrization on $X_0,\ldots,X_3$ with weight one. In particular
  the Dirac current $\kappa$ is a Killing vector field.
\end{corollary}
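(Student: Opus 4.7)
The plan is to deduce Corollary~\ref{cor:Killing} directly from Proposition~\ref{prop:auxiliary} by antisymmetrising each of the covariant-derivative identities in \eqref{eq:covder} over all slots. For a $p$-form $\omega$ and the Levi-Civita connection, I will use the standard identity $d\omega(X_0,\ldots,X_p) = \sum_{i=0}^p (-1)^i \nabla_{X_i}\omega(X_0,\ldots,\hat X_i,\ldots,X_p)$, equivalently $d\omega = (p+1)\operatorname{Alt}(\nabla\omega)$ with weight-one $\operatorname{Alt}$; this accounts for the numerical factors in the statement.

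For the first identity, the right-hand side of $\nabla_X\omega^{(1)}(Y) = 2(\imath_X\imath_\kappa H)(Y) - 2(\imath_X\imath_\varphi\omega^{(3)})(Y)$ is the evaluation of $2$-forms on $(X,Y)$, hence already skew in $X,Y$, and $d\omega^{(1)}(X,Y) = 4\imath_\kappa H(X,Y) - 4\imath_\varphi\omega^{(3)}(X,Y)$ follows at once. For the last identity, the crucial observation is that the right-hand side of the third equation in \eqref{eq:covder} has the product structure $\nabla_X\omega^{(5)} = 2X^\flat\wedge\eta$ with $\eta := \omega^{(1)}\wedge\star H - \star(\imath_\varphi\omega^{(3)})$ independent of $X$, so in any local coframe $\{e^a\}$ with dual frame $\{e_a\}$,
\begin{equation*}
d\omega^{(5)} = e^a\wedge\nabla_{e_a}\omega^{(5)} = 2(e^a\wedge e_a^\flat)\wedge\eta = 0,
\end{equation*}
since $(e^a\wedge e_a^\flat)(X,Y) = g(X,Y) - g(Y,X) = 0$.

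The central identity for $\omega^{(3)}_A$ is the most involved. Applying $4\operatorname{Alt}_{X_0,X_1,X_2,X_3}$ to the six terms of the second equation in \eqref{eq:covder} and using $\operatorname{Alt}\circ\operatorname{skew}_S = \operatorname{Alt}$ for any subset $S$, the $H$-dependent term $-6\operatorname{skew}\omega^{(3)}_A(\imath_X\imath_{X_1}H,X_2,X_3)$ yields the stated $-24\operatorname{Skew}\omega^{(3)}_A(\imath_{X_0}\imath_{X_1}H,X_2,X_3)$. The two terms carrying a $g(X,X_1)$ factor vanish after full antisymmetrisation because they are invariant under the transposition $X\leftrightarrow X_1$, whereas $\operatorname{Alt}$ is skew in that pair. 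The two $\omega^{(3)}_{[\varphi(\cdot),A]}$-terms cancel: the expression $3\operatorname{skew}_{X_1,X_2,X_3}\omega^{(3)}_{[\varphi(X_1),A]}(X_2,X_3,X_0)$ differs from $3\omega^{(3)}_{[\varphi(X_0),A]}(X_1,X_2,X_3)$ by the $4$-cycle $(0\,1\,2\,3)$ of signature $-1$, so the two full antisymmetrisations sum to zero. Finally, $-\omega^{(5)}(X,X_1,X_2,X_3,\tr(A\varphi))$ is already totally skew in its first four slots, so $4\operatorname{Alt}$ returns it unchanged, producing the claimed $-4(\imath_{\tr(A\varphi)}\omega^{(5)})(X_0,\ldots,X_3)$.

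The Killing property of $\kappa$ is then immediate: the first equation in \eqref{eq:covder} already shows that $\nabla_X\omega^{(1)}(Y)$ is antisymmetric in $X,Y$, whence the symmetric part $\nabla_{(X}\omega^{(1)}(Y))$ vanishes and $\kappa^\flat = \omega^{(1)}$ satisfies the Killing equation. The main obstacle throughout is the combinatorial bookkeeping in the $\omega^{(3)}_A$ identity, specifically verifying that the relabeling identifying the second $[\varphi,A]$-term with (minus) the first is indeed a $4$-cycle of signature $-1$; once this is recognised, the remaining steps are standard antisymmetrisation.
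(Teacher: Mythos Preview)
Your proof is correct and follows essentially the same strategy as the paper: antisymmetrise the identities from Proposition~\ref{prop:auxiliary} term by term, observing that the $g(X,X_1)$ terms die by symmetry, the two $[\varphi(\cdot),A]$ terms cancel via the $4$-cycle of sign $-1$, and the remaining terms give the stated formulae. The only notable difference is in the argument for $d\omega^{(5)}=0$: the paper instead uses $\omega^{(5)}=-\star\omega^{(1)}$ together with the observation that $\nabla\omega^{(1)}=\tfrac12 d\omega^{(1)}$ forces $\omega^{(1)}$ to be coclosed, whereas you exploit the factored form $\nabla_X\omega^{(5)}=2X^\flat\wedge\eta$ directly; both routes are equally short and valid.
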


\begin{proof}
  By Proposition \ref{prop:auxiliary} we have that
  $\nabla\omega^{(1)}=\tfrac{1}{2}d\omega^{(1)}$. In other words
  $\omega^{(1)}$ is a coclosed conformal Killing $1$-form, $\kappa$ a
  Killing vector field and $d\omega^{(5)}=-d\star\omega^{(1)}=0$. It
  remains to compute
  \begin{equation}
    \begin{split}
      d\omega^{(3)}_A(X_0,X_1,X_2,X_3)&=4\operatorname{Skew}\nabla_{X_0}\omega^{(3)}_A(X_1,X_2,X_3)\\
      &=-24\operatorname{Skew}\omega^{(3)}_A(\imath_{X_0}\imath_{X_1} H,X_2, X_3)+12\operatorname{Skew}\omega^{(3)}_{[\varphi(X_0),A]}(X_1,X_2,X_3)\\
      &\;\;\;\;-12\operatorname{Skew}\omega^{(3)}_{[\varphi(X_0),A]}(X_1,X_2, X_3)
      -4\omega^{(5)}(X_0,X_1, X_2, X_3, \tr(A\varphi))
      \\
      &=-24\operatorname{Skew}\omega^{(3)}_A(\imath_{X_0}\imath_{X_1} H,X_2, X_3)
      -4(\imath_{\tr(A\varphi)}\omega^{(5)})(X_0,X_1, X_2, X_3)\;,
    \end{split}
  \end{equation}
  completing the proof.
\end{proof}

To proceed further, we shall need some algebraic facts on partial and
full skew-symmetrisations of terms of the form
$\alpha(\imath_{X_0}\imath_{X_1} \beta,X_2,X_3)$, where
$\alpha,\beta\in\Omega^3(M)$. Such terms appear in \eqref{eq:covder} and
\eqref{eq:extdiff}, and they will play a crucial role towards the proof
of Theorems~\ref{thm:KSAI} and \ref{thm:KSAII}.
\begin{lemma}\label{lem:sd2}
  Let $\alpha,\beta\in\Omega^3(M)$ and consider the
  associated $4$-form $\left[\alpha\cdot\beta\right]_4\in\Omega^4(M)$
  given by
  \begin{equation}\label{eq:fullskew}
    \left[\alpha\cdot\beta\right]_4(X_0,\ldots,X_3)=\operatorname{Skew}\alpha(\imath_{X_0}\imath_{X_1}\beta,X_2,X_3)\;,
  \end{equation}
  where $X_0,\ldots, X_3\in\mathfrak{X}(M)$ and
  $\operatorname{Skew}=\operatorname{Skew}_{X_0,X_1,X_2,X_3}$ is
  skew-symmetrisation on $X_0,\ldots,X_3$ with weight one. Then
  \begin{enumerate}
  \item[(i)]
    $\left[\alpha\cdot\beta\right]_4=\left[\beta\cdot\alpha\right]_4$
    for all $\alpha,\beta\in\Omega^3(M)$;
  \item[(ii)]
    $\left[\alpha\cdot\beta\right]_4=0$ if both forms are self-dual (or
    antiself-dual).
  \end{enumerate}
\end{lemma}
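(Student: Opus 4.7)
The plan is to translate both parts of the statement into index notation and then use combinatorics for (i) and Lemma~\ref{lem:wwzero} for (ii). Unpacking the definition in a local frame gives, up to a nonzero combinatorial constant,
\[
  ([\alpha\cdot\beta]_4)_{abcd} \;=\; \alpha^e{}_{[ab}\beta_{cd]e},
\]
where the bracket denotes weight-one antisymmetrization over $\{a,b,c,d\}$ and $e$ is the contracted dummy index.

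For (i), the crucial observation is that the permutation $\tau=(ac)(bd)\in S_4$ has sign $+1$, so relabelling the free indices by $\tau$ leaves the antisymmetrized expression invariant. This relabelling has the effect of swapping the pair of antisymmetrized indices sitting on $\alpha$ with those sitting on $\beta$, yielding $\alpha^e{}_{[cd}\beta_{ab]e}$. Renaming the contracted dummy index (using the symmetry of $g$), together with the cyclic symmetry of the 3-forms, recasts this as $\beta^e{}_{[ab}\alpha_{cd]e}$, which is the claimed symmetry.

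For (ii), I would appeal directly to Lemma~\ref{lem:wwzero}: if $\alpha,\beta\in\Lambda^3_+V$, or both lie in $\Lambda^3_-V$, then their Clifford product $\alpha\beta$ vanishes identically. Expanding $\alpha\beta$ via the Clifford relations $\Gamma^{ijk}\Gamma^{lmn}=\Gamma^{ijklmn}+(\text{contractions})$ yields pieces of ranks $6,4,2,0$, each of which must vanish separately. The rank-four component is, up to a nonzero scalar, precisely $\alpha^e{}_{[ab}\beta_{cd]e}\,\Gamma^{abcd}$, and hence $[\alpha\cdot\beta]_4=0$. The only real subtlety is the combinatorial bookkeeping in (i): once the sign of $\tau$ and the dummy-index relabeling are tracked, the argument is routine, and (ii) requires no further work beyond invoking Lemma~\ref{lem:wwzero}.
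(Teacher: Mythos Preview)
Your proof of (i) is precisely the index manipulation the paper has in mind when it writes that ``the first claim follows directly from a simple computation.'' For (ii), however, you take a genuinely different route. The paper argues representation-theoretically: the assignment $(\alpha,\beta)\mapsto[\alpha\cdot\beta]_4$ is an $\fso(V)$-equivariant bilinear map $\Lambda^3_\pm V\otimes\Lambda^3_\pm V\to\Lambda^4 V$; the paper writes down the irreducible decomposition of the source and observes that $\Lambda^4 V$ does not occur as a constituent, so the map must vanish by Schur. Your approach instead extracts the rank-four graded piece of the Clifford product $\alpha\beta$, which is zero by Lemma~\ref{lem:wwzero}, and notes that this piece is (a nonzero multiple of) $[\alpha\cdot\beta]_4$ contracted into $\Gamma^{abcd}$. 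Both arguments are correct. Yours has the advantage of recycling an identity already established in the paper and of being entirely elementary once Lemma~\ref{lem:wwzero} is in hand; the paper's argument is more conceptual and avoids any Clifford-algebra expansion, at the cost of needing the decomposition of $\Lambda^3_\pm V\otimes\Lambda^3_\pm V$.
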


\begin{proof}
  The first claim follows directly from a simple computation. The second
  claim is also immediate, since we have the decomposition
  \begin{equation}
    \Lambda^3_\pm V\otimes\Lambda^3_\pm V=(\Lambda^3_\pm
    V\otimes\Lambda^3_\pm V)_0\oplus (V\otimes\Lambda^3_\pm
    V)_0\oplus\odot^2_0 V
  \end{equation}
  into irreducible $\fso(V)$-modules
%\Lambda^3_+ V\otimes\Lambda^3_- V&=(\Lambda^3_+ V\otimes\Lambda^3_- V)_0\oplus \Lambda^2 V\oplus \Lambda^0 V\;,
  and therefore any $\fso(V)$-equivariant map from $\Lambda^3_\pm
  V\otimes\Lambda^3_\pm V$ to $\Lambda^4 V$ is necessarily trivial.
\end{proof}

\begin{proposition}
  \label{cor:LieDF}
  Let $\varepsilon$ be a Killing spinor on $(M,g,H,\varphi)$. Then 
  \begin{equation}
    \label{eq:Lienabla1}
    \begin{split}
      d\imath_\kappa H=d\imath_{\varphi}\omega^{(3)}
    \end{split}
  \end{equation}
  and $\mathscr L_{\kappa}\omega^{(1)}= \mathscr
  L_{\kappa}\omega^{(5)}=0$. If $H\in\Omega^3(M)$ is self-dual then
  $\mathscr L_{\kappa} \omega^{(3)}=0$ too.
\end{proposition}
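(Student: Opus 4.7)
The plan is to derive all three statements directly from Corollary~\ref{cor:Killing}, Cartan's magic formula $\mathscr L_\kappa = d\iota_\kappa + \iota_\kappa d$, and the algebraic identities already established for the spinor bilinears (Lemmas~\ref{lem:linindep}, \ref{lem:kwzero}, and \ref{lem:sd2}). The first identity is essentially a closedness computation, while the three Lie-derivative statements reduce, after Cartan, to contracting $\kappa$ into expressions that have been computed in \eqref{eq:extdiff}.

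\textbf{First identity.} I would apply $d$ to the first equation of \eqref{eq:extdiff} and use $d^2\omega^{(1)}=0$ to obtain $4\, d\iota_\kappa H - 4\, d\iota_\varphi \omega^{(3)} = 0$, which is exactly the claim.

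\textbf{Vanishing of $\mathscr L_\kappa \omega^{(1)}$ and $\mathscr L_\kappa \omega^{(5)}$.} For $\omega^{(1)}$, I note that $\iota_\kappa \omega^{(1)} = g(\kappa,\kappa)=0$ because $\kappa$ is null by Lemma~\ref{lem:linindep}. Then $\iota_\kappa d\omega^{(1)} = 4\iota_\kappa \iota_\kappa H - 4\iota_\kappa \iota_\varphi \omega^{(3)}$: the first summand vanishes trivially, and the second summand vanishes because $\iota_\kappa \omega^{(3)}_A=0$ by Lemma~\ref{lem:kwzero}(i), so after the sign flip $\iota_\kappa \iota_\varphi \omega^{(3)} = -\iota_\varphi \iota_\kappa \omega^{(3)} = 0$. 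For $\omega^{(5)}=-\star\omega^{(1)}$, Lemma~\ref{lem:kwzero}(ii)-style reasoning gives $\iota_\kappa \omega^{(5)} = \pm\star(\omega^{(1)}\wedge \kappa^\flat) = \pm\star(\kappa^\flat\wedge\kappa^\flat)=0$ since $\omega^{(1)}=\kappa^\flat$; combined with $d\omega^{(5)}=0$ from Corollary~\ref{cor:Killing}, this gives the vanishing.

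\textbf{Vanishing of $\mathscr L_\kappa \omega^{(3)}$ when $H$ is self-dual.} Again $\iota_\kappa \omega^{(3)}_A = 0$ by Lemma~\ref{lem:kwzero}(i), so $\mathscr L_\kappa \omega^{(3)}_A = \iota_\kappa d\omega^{(3)}_A$, and I would substitute the formula from \eqref{eq:extdiff}:
\begin{equation}
d\omega^{(3)}_A = -24\,[\omega^{(3)}_A \cdot H]_4 - 4\,\iota_{\tr(A\varphi)}\omega^{(5)}.
\end{equation}
This is where the hypothesis $H\in\Lambda^3_+V$ enters crucially: since $\omega^{(3)}_A$ is always self-dual (already observed in Section~\ref{sec:conventions}), Lemma~\ref{lem:sd2}(ii) annihilates the first term. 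For the residual term, $\iota_\kappa\iota_{\tr(A\varphi)}\omega^{(5)} = -\iota_{\tr(A\varphi)}\iota_\kappa\omega^{(5)}=0$ by the previous step. Hence $\mathscr L_\kappa \omega^{(3)}_A=0$.

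The whole argument is essentially mechanical given the preliminaries; the only point requiring conceptual input is the observation in the last paragraph that the self-duality of $H$ is exactly the ingredient needed to apply Lemma~\ref{lem:sd2}(ii) to the quadratic term in $d\omega^{(3)}$. This is the main (mild) obstacle, and it pinpoints why, without self-duality, one cannot expect $\mathscr L_\kappa \omega^{(3)}=0$ and hence why the R-symmetry extension of the Killing superalgebra is needed in the non self-dual branch.
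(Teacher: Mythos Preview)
Your proof is correct and follows essentially the paper's approach: the first identity and the $\omega^{(3)}$ case are handled identically, while for $\mathscr L_\kappa\omega^{(1)}$ and $\mathscr L_\kappa\omega^{(5)}$ the paper simply invokes that $\kappa$ is Killing (so $\mathscr L_\kappa$ commutes with $\flat$ and $\star$, giving $\mathscr L_\kappa\omega^{(1)}=(\mathscr L_\kappa\kappa)^\flat=0$ and $\mathscr L_\kappa\omega^{(5)}=-\star\mathscr L_\kappa\omega^{(1)}=0$) rather than computing via Cartan's formula and the nullity of $\kappa$ as you do. Both routes are equally short and rely on the same preliminaries.
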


\begin{proof}
  Equation \eqref{eq:Lienabla1} follows by applying the exterior
  derivative to both sides of the first identity in \eqref{eq:extdiff}.
  We recall that $\kappa$ is a Killing vector field by Corollary
  \ref{cor:Killing}, whence $\mathscr L_{\kappa}\omega^{(1)}=0$ and
  \begin{equation}
    \label{eq:Lie5}
    \begin{split}
      \mathscr L_{\kappa}\omega^{(5)}&=-\mathscr L_{\kappa}\star\omega^{(1)}\\
      &=-\star\mathscr L_{\kappa}\omega^{(1)}=0\;.
    \end{split}
  \end{equation}
  Now $d\omega^{(3)}_A=-4\imath_{\tr(A\varphi)}\omega^{(5)}$ if $H$ is
  self-dual, by Corollary \ref{cor:Killing} and Lemma \ref{lem:sd2}.
  Furthermore $\imath_\kappa\omega^{(3)}_A=0$ for all $A\in\fsp(1)$ by
  Lemma~\ref{lem:kwzero}. It then follows
  \begin{equation}
    \begin{split}
      \mathscr L_{\kappa}\omega^{(3)}_A&=\imath_{\kappa}d\omega^{(3)}_A\\
      &=4\imath_{\tr(A\varphi)}\imath_{\kappa}\omega^{(5)}=0\;,
    \end{split}
  \end{equation}
  for all $A\in\fsp(1)$.
\end{proof}

\begin{lemma}
  \label{lem:proportional}
  Let $\alpha,\beta\in\Omega^3_+(M)$ be self-dual $3$-forms, with
  $\beta$ nowhere vanishing. Let us assume that there exists a nowhere
  vanishing null vector field $N\in\mathfrak{X}(M)$ with the property
  that
  \begin{equation}
    \imath_N\alpha=\imath_N\beta=0
    \label{eq:nullcontraction}
  \end{equation}
  and
  \begin{equation}
    \label{eq:proportional}
    \operatorname{skew}\alpha(\imath_{X}\imath_{X_1} \beta,X_2,X_3)=0
  \end{equation}
  for all $X,X_1,X_2,X_3\in\mathfrak{X}(M)$, where
  $\operatorname{skew}=\operatorname{skew}_{X_1,X_2,X_3}$ is
  skew-symmetrization on $X_1,X_2,X_3$ with weight one. Then
  $\alpha=f\beta$ for some $f\in\mathscr C^{\infty}(M)$.
\end{lemma}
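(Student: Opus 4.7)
My approach is to reduce the claim to a pointwise linear-algebra statement and exploit the quaternionic structure of anti-self-dual two-forms in four-dimensional Euclidean space. First I put $\alpha$ and $\beta$ into a normal form at $p\in M$. Since $\beta\in\Lambda^3_+V$ (with $V=T_pM$) and $\iota_N\beta=0$, the Hodge-duality identity gives $N^\flat\wedge\beta=N^\flat\wedge\star\beta=\pm\star\iota_N\beta=0$, so $\beta$ is divisible by $N^\flat$. Choosing pointwise a null vector $L\in V$ with $g(N,L)=1$ (which always exists in Lorentzian signature) and letting $W=\{N,L\}^\perp$ be the resulting four-dimensional Euclidean subspace, there is a unique $\psi_\beta\in\Lambda^2W^*$ with $\beta=N^\flat\wedge\psi_\beta$. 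Six-dimensional self-duality of $\beta$ translates into anti-self-duality of $\psi_\beta$ with respect to the induced Euclidean metric and orientation on $W$, and an analogous decomposition $\alpha=N^\flat\wedge\psi_\alpha$ holds.

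The key step is to test the hypothesis on the restricted family $X=L$ and $(X_1,X_2,X_3)$ a permutation of $(A,B,L)$ for arbitrary $A,B\in W$. Since $\iota_L(N^\flat\wedge\psi_\beta)=\psi_\beta$ and $\psi_\beta$ annihilates $N$ and $L$, one computes $(\iota_L\iota_A\beta)^\sharp=-(\iota_A\psi_\beta)^\sharp\in W$, $(\iota_L\iota_L\beta)^\sharp=0$, and the evaluation $\alpha(Y,X_2,X_3)$ with $Y\in W$ is nonzero only if exactly one of $X_2,X_3$ equals $L$. The six terms of $\operatorname{skew}_{X_1,X_2,X_3}$ then collapse to $\tfrac{1}{3}(\Phi(B,A)-\Phi(A,B))$, where $\Phi(A,B):=\psi_\alpha((\iota_A\psi_\beta)^\sharp,B)$. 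Viewing $\psi_\alpha$ and $\psi_\beta$ as skew-symmetric endomorphisms of Euclidean $W$, the bilinear form satisfies $\Phi(A,B)=A^T(\psi_\beta\psi_\alpha)B$; its symmetry is equivalent to $\psi_\beta\psi_\alpha$ being self-adjoint, and since both factors are skew this is equivalent to $[\psi_\alpha,\psi_\beta]=0$.

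The conclusion follows from a quaternionic argument: the space of anti-self-dual two-forms on four-dimensional Euclidean $W$, regarded as endomorphisms of $W$, is a realisation of $\fsp(1)\cong\operatorname{Im}\HH$; each non-zero element squares to a negative scalar multiple of the identity and is thus invertible, and two non-zero elements commute if and only if they are real multiples. Since $\beta$ is nowhere vanishing, $\psi_\beta$ is nowhere vanishing, so $[\psi_\alpha,\psi_\beta]=0$ forces $\psi_\alpha$ to be a real multiple of $\psi_\beta$ at each point, yielding $\alpha=f\beta$ pointwise for some real-valued $f$ on $M$. The function $f$ is smooth because, locally, we can choose smooth vector fields $(Y_1,Y_2,Y_3)$ with $\beta(Y_1,Y_2,Y_3)$ nowhere zero (possible since $\beta$ never vanishes) and write $f=\alpha(Y_1,Y_2,Y_3)/\beta(Y_1,Y_2,Y_3)$.

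The main technical obstacle is the six-term bookkeeping in the second step, where several cancellations must be tracked and the surviving contributions combined into the antisymmetric expression $\Phi(A,B)-\Phi(B,A)$. The simplification that makes this feasible is the specialisation $X=L$, which renders $\iota_L\beta$ a purely transverse two-form and reduces the full partial-skew condition to a clean commutator equation in the three-dimensional Lie algebra $\fsp(1)$ of anti-self-dual two-forms on $W$.
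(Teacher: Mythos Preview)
Your proof is correct and follows essentially the same route as the paper: work pointwise, use $\iota_N\beta=0$ together with self-duality to write $\alpha=N^\flat\wedge\psi_\alpha$ and $\beta=N^\flat\wedge\psi_\beta$ with $\psi_\alpha,\psi_\beta$ anti-self-dual on the transverse Euclidean $4$-space, specialise the hypothesis (you take $X=L$ and $(X_1,X_2,X_3)$ a permutation of $(A,B,L)$; the paper takes $X=X_3=\be_-$) to extract the commutator equation $[\psi_\alpha,\psi_\beta]=0$, and conclude via the $\fsp(1)\cong\fso(3)$ structure on $\Lambda^2_-W$. Your added remark on the smoothness of $f$ is a nice touch the paper leaves implicit.
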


\begin{proof}
  It is enough to establish the claim pointwise. We fix $p\in M$ and a
  Witt basis $(\be_+,\be_-,\be_1,\ldots,\be_4)$ of $T_pM$, which we use
  to identify $T_pM$ with $V$ and $N|_p$ with $\be_+$. We then write
  \begin{equation}
    V=\RR\be_+\oplus\RR\be_-\oplus E\;,
  \end{equation}
  where $E=\left\langle \be_1,\ldots,\be_4\right\rangle$ is
  $4$-dimensional euclidean.  It follows from \eqref{eq:nullcontraction}
  that $\alpha=e_+\wedge\widetilde\alpha$ and
  $\beta=e_+\wedge\widetilde\beta$, for some antiself-dual forms
  $\widetilde\alpha,\widetilde\beta\in\Lambda^2_-E$. (A different choice
  of orientations would result in
  $\widetilde\alpha,\widetilde\beta\in\Lambda^2_+E$.)  Now, under the
  isomorphism $\Lambda^2 E\cong \fso(4)$, the module $\Lambda_-^2E$ gets
  identified with the ideal $\fso_-(3)$ of
  \begin{equation}
    \fso(4)=\fso_+(3)\oplus\fso_-(3)
  \end{equation}
  and the Lie brackets on $\fso_-(3)$ with the skew-symmetric operation
  $[-,-]:\Lambda^2_-E\otimes\Lambda^2_-E\to\Lambda^2_-E$ given by
  \begin{equation}
    [\widetilde\alpha,\widetilde\beta](X_1,X_2) =
    \widetilde\alpha(\imath_{X_1}\widetilde\beta,X_2) -
    \widetilde\alpha(\imath_{X_2}\widetilde\beta,X_1)\;,
  \end{equation}
  where $\widetilde\alpha,\widetilde\beta\in\Lambda^2_-E$ and
  $X_1,X_2\in E$.

  Equation~\eqref{eq:proportional} with $X=X_3=\be_-$ leads to
  $[\widetilde\alpha,\widetilde\beta]=0$ and our claim follows from the
  fact that the centraliser of any non-zero element in $\fso_-(3)$ is
  always $1$-dimensional.
\end{proof}
	
\subsection{The Killing superalgebra. Case of self-dual $3$-form.}
\label{sec:KSAself}

Let $(M,g,H,\varphi)$ be a six-dimensional lorentzian spin manifold
$(M,g)$ with spinor bundle $\SS$ which is, in addition, endowed with a
self-dual $3$-form $H$ and a $1$-form $\varphi$ on $M$ with values in
$\fsp(1)$. In this section we shall construct a Lie superalgebra
$\mathfrak{k}=\mathfrak{k}_{\bar 0}\oplus\mathfrak{k}_{\bar 1}$
naturally associated with $(M,g,H,\varphi)$, under appropriate
conditions on $H$ and $\varphi$.
	
Set
\begin{equation}
  \label{eq:KSA}
  \begin{split}
    \mathfrak{k}_{\bar 0}&=\{X\in\mathfrak{X}(M)\mid\mathscr
    L_{X}g=\mathscr L_{X} H=\mathscr L_{X}\varphi=0\}\;,\\
    \mathfrak{k}_{\bar 1}&=\{\varepsilon\in{\mathfrak{S}}(M)\mid
    \eD_X\varepsilon=0\;\;\text{for all}\;\;X\in\mathfrak{X}(M)\}\;,
  \end{split}
\end{equation}
where $\eD$ is the spinor connection introduced in Definition
\ref{def:KillingSpinor}.  We consider the operation
$[-,-]:\fk\otimes\fk\to\fk$  compatible with the parity of $\mathfrak
k=\mathfrak k_{\bar 0}\oplus\mathfrak k_{\bar 1}$ and determined by the
following maps:
\begin{itemize}
\item $[-,-]:\fk_{\bar 0}\otimes\fk_{\bar 0}\to\fk_{\bar 0}$ is
  the usual commutator of vector fields,
\item $[-,-]:\fk_{\bar 1}\otimes\fk_{\bar 1}\to\fk_{\bar 0}$ is a
  symmetric map, with
  $[\varepsilon,\varepsilon]=\kappa(\varepsilon,\varepsilon)$ given by
  the Dirac current of $\varepsilon\in\fk_{\bar 1}$, and
\item $[-,-]:\fk_{\bar 0}\otimes\fk_{\bar 1}\to\fk_{\bar 1}$ is the
  spinorial Lie derivative of Lichnerowicz and Kosmann (see
  \cite{MR0312413}).
\end{itemize}

The fact that this operation actually takes values in $\fk$ is a
consequence of the main Theorem \ref{thm:KSAI} below, where we show that
$[-,-]$ is the bracket of a Lie superalgebra structure on $\fk$.
Assuming that result for the moment we make the following.

\begin{definition}
  The pair $(\fk=\fk_{\bar 0}\oplus\fk_{\bar 1},[-,-])$ is called the
  \emph{Killing superalgebra} associated with  $(M,g,H,\varphi)$.
\end{definition}

Let us briefly recall the main properties of the spinorial Lie
derivative, see \cite{MR0312413} and also, e.g., \cite{JMFKilling}.  The
Lie derivative of a spinor field $\varepsilon$ along a Killing vector
field $X$ is defined by $\mathscr L_{X}\varepsilon =
\nabla_{X}\varepsilon + \sigma(A_X)\varepsilon$, where
$\sigma:\fso(TM)\to\End(\SS)$ is the spin representation and
$A_X=-\nabla X\in\fso(TM)$.  It enjoys the following basic properties,
for all Killing vectors $X,Y$, spinors $\varepsilon$, functions $f$ and
vector fields $Z$:
\begin{enumerate}[label=(\roman*)]
\item $\mathscr L_X$ is a derivation:
  \begin{equation}
    \mathscr L_X(f\varepsilon)=X(f)\varepsilon+f\mathscr L_{X}\varepsilon\;;
  \end{equation}
\item $X\mapsto\mathscr L_X$ is a representation of the Lie algebra of Killing vector fields:
  \begin{equation}
    \mathscr L_X(\mathscr L_Y\varepsilon)-\mathscr L_Y(\mathscr L_X\varepsilon)=\mathscr L_{[X,Y]}\varepsilon\;;
  \end{equation}
\item $\mathscr L_X$ is compatible with Clifford multiplication:
  \begin{equation}
    \mathscr L_X(Z\cdot \varepsilon)=[X,Z]\cdot\varepsilon+ Z\cdot\mathscr L_{X}\varepsilon\;;
  \end{equation}
\item $\mathscr L_X$ is compatible with the Levi-Civita connection:
  \begin{equation}
    \mathscr L_X(\nabla_{Z}\varepsilon)=\nabla_{[X,Z]}\varepsilon+ \nabla_{Z}(\mathscr L_{X}\varepsilon)\;.
  \end{equation}	
\end{enumerate}
		
Using property (iii), it is not difficult to see that the Dirac current
is equivariant under the action of Killing vector fields, namely that
\begin{equation}
  [X,\kappa(\varepsilon,\varepsilon)]=2\kappa(\mathscr L_{X}\varepsilon,\varepsilon)\;,
\end{equation}
for any Killing vector $X$ and spinor $\varepsilon$. It is also clear
from basic properties of Lie derivatives of vector fields that
$[\fk_{\bar 0},\fk_{\bar 0}]\subset\fk_{\bar 0}$ and that for any
$X\in\fk_{\bar 0}$ and $Z\in\mathfrak{X}(M)$ we have
\begin{equation}
  [\eL_X, \eD_Z] = \eD_{[X,Z]},
\end{equation}
since $\eD$ depends solely on the data $(g,H,\varphi)$, which is
preserved by $X \in \fk_{\bar 0}$.  This shows that $[\fk_{\bar
  0},\fk_{\bar  1}]\subset\fk_{\bar 1}$ or, in other words, that the Lie
algebra $\fk_{\bar 0}$ acts on $\fk_{\bar 1}$ via the spinorial Lie
derivative.  It is clear after a moment’s thought that there are still
conditions to be satisfied in order for $\fk=\fk_{\bar 0}\oplus\fk_{\bar
  1}$  to be a Lie superalgebra:
\begin{enumerate}
\item $\kappa(\varepsilon,\varepsilon)\in\fk_{\bar 0}$, and
\item $\mathscr L_{\kappa(\varepsilon,\varepsilon)}\varepsilon=0$,
\end{enumerate}
for all $\varepsilon\in\fk_{\bar1}$. The second equation is equivalent
to the component of the Jacobi identity for $\fk$ with three odd
elements. The rest of this section will be devoted to investigating
(1)-(2). \vskip0.2cm\par
We have already established in Corollary \ref{cor:Killing} that
$\kappa(\varepsilon,\varepsilon)$ is a Killing vector. The following
result provides a more suggestive interpretation of this fact and the
Jacobi identity for three odd elements, in terms of the Spencer complex
considered in \S \ref{sec:calculation-h2-2fp}.

\begin{proposition}
  The first and second cocycle conditions of the Spencer complex are
  equivalent to $\kappa=\kappa(\varepsilon,\varepsilon)$ being a Killing
  vector and $\mathscr L_{\kappa}\varepsilon=0$, for all
  $\varepsilon\in\fk_{\bar 1}$.
\end{proposition}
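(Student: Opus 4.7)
The plan is to exploit an exact dictionary between the Spencer cochains $(\beta,\gamma)$ of Theorem~\ref{thm:spencer} and the geometric structure of the Killing spinor equation. Rewriting Definition~\ref{def:KillingSpinor} as $\nabla_X \varepsilon = \beta_X \varepsilon$ with $\beta_X \varepsilon := \iota_X H \cdot \varepsilon - 3\varphi(X)\cdot \varepsilon + X\wedge \varphi\cdot \varepsilon$ reproduces exactly the Spencer $\beta$. Differentiating the Dirac current, $\nabla_\mu \kappa_\nu = 2\kappa(\nabla_{\be_\mu}\varepsilon,\varepsilon)_\nu = 2\kappa(\beta_{\be_\mu}\varepsilon,\varepsilon)_\nu$, and comparing with the first identity of \eqref{eq:covder} in Proposition~\ref{prop:auxiliary}, I obtain the key pointwise identification $\gamma(\varepsilon,\varepsilon)_{\mu\nu} = \nabla_\mu \kappa_\nu$ via the explicit formula of Theorem~\ref{thm:spencer}.

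For the first cocycle condition $\gamma(s,s)X + 2[s,\beta_X s] = 0$, I would rearrange as $\gamma(s,s)X = -2[s,\beta_X s] = -2\kappa(\nabla_X \varepsilon, \varepsilon)$ and pair with $Y \in V$. Since $\gamma(s,s) \in \fso(V)$ by construction, the left-hand side is automatically skew under $X \leftrightarrow Y$, whereas the right-hand side equals $-\nabla_X \kappa_Y$ under the identification above. Hence the cocycle condition is equivalent to $\nabla_X \kappa_Y + \nabla_Y \kappa_X = 0$, which is the Killing equation for $\kappa$; conversely, whenever $\kappa$ is Killing the formula $\gamma = \nabla \kappa$ provides an $\fso(V)$-valued cocycle.

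For the second cocycle condition $\tfrac12 \omega_{\gamma(s,s)}\cdot s + \beta_{[s,s]} s = 0$, I would apply the Lichnerowicz--Kosmann formula $\mathscr L_\kappa \varepsilon = \nabla_\kappa \varepsilon + \sigma(A_\kappa)\varepsilon$. By the first cocycle analysis, $(A_\kappa)_{\mu\nu} = \nabla_\mu \kappa_\nu = \gamma(\varepsilon,\varepsilon)_{\mu\nu}$, while the conventions of \S\ref{sec:conventions} yield $\sigma(L)\varepsilon = \tfrac12 \omega_L \cdot \varepsilon$. Substituting $\nabla_\kappa \varepsilon = \beta_\kappa \varepsilon = \beta_{[\varepsilon,\varepsilon]}\varepsilon$ from the Killing spinor equation gives
\begin{equation}
\mathscr L_\kappa \varepsilon = \beta_{[\varepsilon,\varepsilon]}\varepsilon + \tfrac12 \omega_{\gamma(\varepsilon,\varepsilon)}\cdot \varepsilon,
\end{equation}
which is exactly the left-hand side of \eqref{eq:co2}; its vanishing on all $\varepsilon \in \fk_{\bar 1}$ is therefore equivalent to the second cocycle.

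The principal technical obstacle is to carefully reconcile sign conventions across three different formalisms: the identification $\fso(V) \cong \Lambda^2 V$ via $Lv = -\iota_{v^\flat}\omega_L$, the spin representation formula $\sigma(L) = \tfrac12 \omega_L \cdot$ implicit in \eqref{eq:10ps}, and the sign choice $A_X = -\nabla X$ in Kosmann's definition. Once these are pinned down uniformly, both equivalences follow by direct substitution with no further geometric input beyond Proposition~\ref{prop:auxiliary}.
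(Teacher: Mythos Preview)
Your proof is correct and follows essentially the same route as the paper's. Both arguments rest on the identifications $\nabla_X\varepsilon=\beta_X\varepsilon$ (the Killing spinor equation rewritten) and $A_\kappa=\gamma(\varepsilon,\varepsilon)$ as elements of $\fso(TM)$; from there the first cocycle reads $\nabla_Z\kappa=-\gamma(\varepsilon,\varepsilon)Z$ and the second reads $\mathscr L_\kappa\varepsilon=\beta_{[\varepsilon,\varepsilon]}\varepsilon+\tfrac12\omega_{\gamma(\varepsilon,\varepsilon)}\cdot\varepsilon$. The only stylistic difference is that the paper invokes the cocycle conditions \eqref{eq:co1}--\eqref{eq:co2} abstractly, whereas you pass through the explicit component formula of Proposition~\ref{prop:auxiliary} and Theorem~\ref{thm:spencer} to establish $\gamma(\varepsilon,\varepsilon)_{\mu\nu}=\nabla_\mu\kappa_\nu$; this is the same identity read in two equivalent ways.
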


\begin{proof}
  Recall the cocycle conditions \eqref{eq:co1}-\eqref{eq:co2}. For all
  $\varepsilon\in\fk_{\bar 1}$, $Z\in\mathfrak{X}(M)$, we compute
  \begin{equation}
    \begin{split}
      \nabla_Z \kappa&=\nabla_Z \kappa(\varepsilon,\varepsilon)\\
      &=2\kappa(\nabla_Z\varepsilon,\varepsilon)=2\kappa(\beta_Z\varepsilon,\varepsilon)\\
      &=-\gamma(\varepsilon,\varepsilon)Z\;,
    \end{split}
  \end{equation}
  which says that $\kappa$ is a Killing vector, since $\gamma(\varepsilon,\varepsilon)$ is a section of $\fso(TM)$. Similarly
  \begin{equation}
    \begin{split}
      \mathscr L_{\kappa}\varepsilon&=\nabla_\kappa\varepsilon-\sigma(\nabla \kappa)\varepsilon\\
      &=\beta_\kappa\varepsilon+\sigma(\gamma(\varepsilon,\varepsilon))\varepsilon\\
      &=0\;,
    \end{split}
  \end{equation}
  and the proposition is proved.
\end{proof}

We are now ready to prove the following.

\begin{theorem}
  \label{thm:KSAI}
  Let $(M,g)$ be a lorentzian six-dimensional spin manifold, with
  associated spinor bundle $\SS\to M$ with typical fiber $S$. Let
  $H\in\Omega_+^3(M)$ be a self-dual $3$-form and $\varphi$ a $1$-form
  on $M$ with values in $\fsp(1)$. If
  \begin{itemize}
  \item $dH=0$, and
  \item $d^\star\varphi=0$,
  \end{itemize}
  then there exists a natural structure of Lie superalgebra on the
  direct sum $\fk=\fk_{\bar 0}\oplus\fk_{\bar 1}$ of the spaces
  \eqref{eq:KSA}, called the Killing superalgebra of $(M,g,H,\varphi)$.
\end{theorem}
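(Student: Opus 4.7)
The plan is to verify, in turn, the three conditions that make $\fk = \fk_{\bar 0} \oplus \fk_{\bar 1}$ into a Lie superalgebra: (a) the bracket $[-,-]$ takes values in $\fk$; (b) the triple-odd component of the Jacobi identity, $\mathscr L_{\kappa(\varepsilon,\varepsilon)}\varepsilon = 0$; and (c) the remaining Jacobi identities. By the discussion preceding the theorem, (a) amounts to verifying that for any $\varepsilon \in \fk_{\bar 1}$ with Dirac current $\kappa$ one has $\mathscr L_\kappa g = \mathscr L_\kappa H = \mathscr L_\kappa \varphi = 0$; the first is Corollary~\ref{cor:Killing}. Part (b) follows at once from the Proposition immediately preceding this theorem, since the Killing spinor equation of Definition~\ref{def:KillingSpinor} is precisely the covariantisation of the Spencer cocycle $(\beta,\gamma)$ of Theorem~\ref{thm:spencer}, and the second cocycle condition \eqref{eq:co2} was verified in Section~\ref{sec:calculation-h2-2fp}. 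Finally, the remaining identities in (c) reduce to standard facts about $\mathscr L_X$: its derivation property, its compatibility with $\nabla$ and Clifford multiplication (items (i)--(iv) in the preamble), and the equivariance of the Dirac current under Killing flows.

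The substantive content is therefore to prove $\mathscr L_\kappa H = 0$ and $\mathscr L_\kappa \varphi = 0$. For the former, Cartan's magic formula combined with $dH = 0$ reduces the claim to showing that $\iota_\kappa H$ is closed, and by the first identity of Proposition~\ref{cor:LieDF} this is equivalent to showing $d(\iota_\varphi \omega^{(3)}) = 0$. I would compute $d(\iota_\varphi \omega^{(3)})$ by differentiating under the contraction and substituting the explicit formula for $\nabla \omega^{(3)}$ from Proposition~\ref{prop:auxiliary}. Self-duality of both $H$ and $\omega^{(3)}$, via Lemma~\ref{lem:sd2}, should eliminate the bilinear $H$-contributions; the algebraic constraint $\iota_\kappa \omega^{(3)} = 0$ of Lemma~\ref{lem:kwzero} should dispose of the mixed $\kappa$--$\omega^{(3)}$ cross-terms; and the remaining piece, involving the divergence of $\varphi$, should collapse to a term proportional to $d^\star \varphi$, which vanishes by hypothesis.

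For $\mathscr L_\kappa \varphi = 0$ the argument is more delicate because $d\varphi$ is not assumed to vanish. My plan is to exploit the integrability condition of the Killing spinor equation, i.e.\ $[\eD_X, \eD_Y]\varepsilon = 0$, which yields a curvature-type identity relating $dH$ and $d\varphi$ to the Riemann tensor of $g$ and quadratic algebraic expressions in $H$ and $\varphi$. Contracting this identity with $\kappa$ and pairing with the spinor bilinears $\omega^{(1)}, \omega^{(3)}, \omega^{(5)}$, and then using $dH = 0$, $d^\star \varphi = 0$, together with the invariance statements $\mathscr L_\kappa \omega^{(k)} = 0$ of Proposition~\ref{cor:LieDF}, should reduce $\mathscr L_\kappa \varphi = 0$ to algebraic manipulations controlled by the Fierz and self-duality identities of Section~\ref{sec:conventions}.

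The main obstacle will be the explicit calculations of $d(\iota_\varphi \omega^{(3)})$ and $\mathscr L_\kappa \varphi$, both of which require careful bookkeeping of the $\fsp(1)$-valued tensor indices and repeated appeals to the partial and full skew-symmetrisation identities of Section~\ref{sec:KSAprel}. The balance between the self-duality of $H$ (which kills many candidate terms via Lemma~\ref{lem:sd2}) and the coclosedness of $\varphi$ (which provides exactly the trace identity needed on the divergence side) is what makes the two differential hypotheses of the theorem precisely sufficient for the construction to close.
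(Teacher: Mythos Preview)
Your identification of the two outstanding conditions is correct, but both the order and the method differ from the paper's, and your plan for $\mathscr L_\kappa H = 0$ has a genuine gap.

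The paper proves $\mathscr L_\kappa\varphi = 0$ \emph{first}, and by a much shorter route than the integrability of $\eD$. Since $H$ is self-dual, Corollary~\ref{cor:Killing} together with Lemma~\ref{lem:sd2} gives $d\omega^{(3)}_A = -4\,\iota_{\tr(A\varphi)}\omega^{(5)} = -4\,\iota_\kappa\!\star\!\tr(A\varphi)$. Applying $d$ and using the hypothesis $d^\star\varphi=0$ (so that $d\star\tr(A\varphi)=0$) immediately yields $\mathscr L_\kappa\!\star\!\tr(A\varphi)=0$, hence $\mathscr L_\kappa\varphi=0$. No curvature computation is needed, and this is where the coclosedness of $\varphi$ actually enters.

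The gap is in your direct computation of $d(\iota_\varphi\omega^{(3)})$. The residual terms do \emph{not} collapse to a multiple of $d^\star\varphi$: the $\nabla\varphi$ contribution involves the full covariant derivative, not just its trace, and the formula \eqref{eq:covder} for $\nabla\omega^{(3)}$ contributes additional $[\varphi,\cdot]$ and $H\varphi$ pieces that do not cancel against it. The paper's alternative calculation makes this explicit: after imposing $dH=0$ one still has (cf.\ \eqref{eq:LKappaHBranchOne}) an expression of the form $3\,\omega^{AB}_{[\mu\nu}{}^\sigma(\nabla_{\rho]}\varphi_\sigma + 2[\varphi_{\rho]},\varphi_\sigma] + 2H_{\rho]\sigma\tau}\varphi^\tau)_{AB}$, which the authors themselves note ``is not obvious'' to vanish. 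Their argument is instead indirect and \emph{uses} $\mathscr L_\kappa\varphi=0$ as input: one first checks $\iota_\kappa(\mathscr L_\kappa H)=0$ via \eqref{eq:Lienabla1} and $\mathscr L_\kappa\varphi=\mathscr L_\kappa\omega^{(3)}=0$; then, differentiating $\mathscr L_\kappa\omega^{(3)}_A=0$ through \eqref{eq:covder}, one obtains $\operatorname{skew}\,\omega^{(3)}_A(\iota_X\iota_{X_1}\mathscr L_\kappa H,X_2,X_3)=0$. At this point Lemma~\ref{lem:proportional} --- a pointwise algebraic statement about self-dual $3$-forms annihilated by a common null direction --- forces $\mathscr L_\kappa H$ to be a function multiple of \emph{each} $\omega^{(3)}_A$, which contradicts their linear independence (Lemma~\ref{lem:linindep}) unless $\mathscr L_\kappa H=0$. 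You do not invoke Lemma~\ref{lem:proportional} at all; without it, or an equivalent algebraic device, the argument does not close.
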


\begin{proof}
  It remains only to show that
  \begin{equation}
    \mathscr L_{\kappa}H=\mathscr L_{\kappa}\varphi=0\;,
  \end{equation}
  where $\kappa=\kappa(\varepsilon,\varepsilon)$ is the Dirac current of
  a Killing spinor $\varepsilon\in\fk_{\bar 1}$. We recall that
  $d\omega^{(3)}_A=-4\imath_{\tr(A\varphi)}\omega^{(5)}$ by self-duality
  of $H$, Corollary \ref{cor:Killing} and Lemma \ref{lem:sd2}. It
  follows that
  \begin{equation}
    \begin{split}
      d\omega^{(3)}_A&=4\imath_{\tr(A\varphi)}\star\omega^{(1)}\\
      &=4\star(\omega^{(1)}\wedge\tr(A\varphi))\\
      &=-4\imath_\kappa(\star\tr(A\varphi))
    \end{split}
  \end{equation}
  and applying the exterior derivative to both sides yields
  \begin{equation}
    \begin{split}
      0&=d\imath_\kappa(\star\tr(A\varphi))=d\imath_\kappa(\star\tr(A\varphi))+\imath_\kappa(d(\star\tr(A\varphi)))\\
      &=\mathscr L_{\kappa}\star\tr(A\varphi)=\star\mathscr L_{\kappa}\tr(A\varphi)\;,
    \end{split}
  \end{equation}
  for all $A\in\fsp(1)$, or, in other words,
  $\mathscr L_{\kappa}\varphi=0$. We now turn to prove
  $\mathscr L_\kappa H=0$, which is slightly more involved.
  
  First, we use \eqref{eq:Lienabla1} and $\imath_\kappa\omega^{(3)}=0$
  to compute
  \begin{equation}
    \begin{split}
      \imath_\kappa(\mathscr L_\kappa H)&=\imath_\kappa(d\imath_\kappa H)=\imath_\kappa(d\imath_{\varphi}\omega^{(3)})\\
      &=\mathscr L_\kappa\imath_{\varphi}\omega^{(3)}=\imath_{[\kappa,\varphi]}\omega^{(3)}+\imath_{\varphi}\mathscr L_{\kappa}\omega^{(3)}\\
      &=0\;,
    \end{split}
  \end{equation}
  where the last identity follows from
  $\mathscr L_\kappa\varphi=\mathscr L_\kappa\omega^{(3)}=0$. Hence, we
  have self-dual forms $\mathscr L_{\kappa}H$ and $\omega_A^{(3)}$ which
  vanish when evaluated on $\kappa$.
  Moreover, using that $\kappa$ is a Killing vector, we have for all $X\in\mathfrak{X}(M)$:
  \begin{equation}
    \begin{split}
      0&=\nabla_X(\mathscr L_{\kappa}\omega^{(3)}_A)=
      \mathscr L_\kappa (\nabla _X\omega^{(3)}_A)
      -\nabla_{[\kappa,X]}\omega^{(3)}_A\\
      &=-6\operatorname{skew}\omega^{(3)}_A(\imath_{X}\imath_{X_1} \mathscr L_\kappa H,X_2,X_3),
    \end{split}
  \end{equation}
  where the last identity follows from a direct computation using the
  expression~\eqref{eq:covder} of $\nabla\omega^{(3)}_A$, $\mathscr
  L_{\kappa}\varphi=0$ and $\mathscr L_{\kappa}\omega^{(1)}=\mathscr
  L_{\kappa}\omega^{(3)}_A=\mathscr L_{\kappa}\omega^{(5)}=0$, see
  Proposition~\ref{cor:LieDF}. Here $\operatorname{skew}$ is, as usual,
  skew-symmetrisation on $X_1,X_2,X_3$ with weight one.

  Now, let us assume for a contradiction that $\mathscr L_{\kappa}H$ is
  (locally) non-zero. Then Lemma \ref{lem:proportional} applies with
  \begin{equation}
    \alpha=\omega^{(3)}_A\;,\qquad\beta=\mathscr L_{\kappa}H\;,\qquad N=\kappa\;,
  \end{equation} 
  and $\omega^{(3)}_A=f_A\mathscr L_\kappa H$ for some (locally defined)
  function $f_A$, for all $A\in\fsp(1)$. This implies that the $3$-forms
  $\omega^{(3)}_A$, $A\in\fsp(1)$, are pairwise linearly dependent at all
  points $p\in M$, which is absurd by Lemma \ref{lem:linindep}.

  % We then use Proposition \ref{cor:LieDF} again and the last hypothesis of the theorem to get a contradiction:
  % \begin{equation}
  %   \begin{split}
  %     \mathscr L_{K}H&=d\imath_{\varphi}\omega^{(3)}=d(\imath_{\varphi^A}\omega^{(3)}_A)\\
  %     &=d(f_A\imath_{\varphi^A}\mathscr L_K H)=d(f_A\mathscr L_K(\imath_{\varphi^A} H))\\
  %     &=0\;.
  %   \end{split}
  % \end{equation}

  The theorem is proved.
\end{proof}

\subsection{The Killing superalgebra. Case of $H$ not necessarily self-dual.}
\label{sec:KSAnotself}

This section will be devoted to constructing the Killing superalgebra
when the $3$-form is not necessarily self-dual. As we have seen in
Theorem \ref{thm:spencer-R}, it is precisely the introduction of the
R-symmetry which allowed us to relax the self-duality assumption. We
will therefore consider a six-dimensional (connected) lorentzian spin
manifold $(M,g)$ with spinor bundle $\SS=\$_+ \otimes \mathcal H$, endowed with a
$3$-form $H$ and a $1$-form $\varphi$ with values in $\fsp(1)$.

It will turn out that the existence of a Killing superalgebra
$\fkhat=\fkhat_{\bar 0}\oplus\fkhat_{\bar 1}$ extended by R-symmetries
depends not just on some constraints on $H$, $\varphi$ but also on an
algebraic identity relating $\varphi$ with the R-symmetries. Due to
this, we will ultimately restrict our analysis to the case where
$\varphi=0$ (see Theorem~\ref{thm:KSAII}).

To make contact with the notation of Theorem   \ref{thm:spencer-R}, it is convenient to introduce
the bundle morphism
\begin{equation}
  \label{eq:maprho}
  \begin{aligned}
    \rho&:\Lambda^3_+ T^*M\otimes\mathfrak{sp}(\mathcal H) \to
    \mathfrak{sp}(\mathcal H)\;,\\
    \rho&(\omega)=4g(\omega,H^{-})\;,
  \end{aligned}
\end{equation}
where $\mathfrak{sp}(\mathcal H)=M\times\mathfrak{sp}(\Delta)\to M$ is
the trivial rank-three subbundle with fiber $\mathfrak{sp}(\Delta)$ of
the bundle of endomorphisms of $\SS$ and $H=H^++H^-$ the decomposition
of $H$ into self-dual and antiself-dual components.

We set
\begin{equation}
  \label{eq:KSA0}
  \begin{split}
    \fkhat_{\bar 0}&=\mathfrak{k}_{\bar 0}\oplus\mathfrak{R}\;,\;\;\;\text{where}\;\;\\
    \mathfrak{k}_{\bar 0}&=\{X\in\mathfrak{X}(M)\mid\mathscr
    L_{X}g=\mathscr L_{X} H=\mathscr L_{X}\varphi=0\}\;,\\
    \mathfrak{R}&=\{R\in\Gamma(\mathfrak{sp}(\mathcal H))\mid \eD_X R=0\;\;\text{for all}\;\;X\in\mathfrak{X}(M)\}\;,
  \end{split}
\end{equation}
and
\begin{equation}
  \label{eq:KSA1}
  \begin{split}
    \fkhat_{\bar 1}&=\mathfrak{k}_{\bar 1}=\{\varepsilon\in{\mathfrak{S}}(M)\mid
    \eD_X\varepsilon=0\;\;\text{for all}\;\;X\in\mathfrak{X}(M)\}\;.
  \end{split}
\end{equation}
Here $\eD$ is, as usual, the spinor connection introduced in Definition \ref{def:KillingSpinor}.
We note that $\mathfrak{R}$ consists of all the $\eD$-parallel R-symmetry transformations --
we will expand on this condition later on in Proposition \ref{prop:D-parallel}.
We consider the operation $[-,-]$ on $\fkhat=\fkhat_{\bar 0}\oplus\fkhat_{\bar 1}$ determined
by the usual commutator of vector fields, the Lichnerowicz-Kosmann spinorial Lie derivative and the following maps:
\begin{itemize}
\item $[-,-]:\mathfrak R\otimes\fkhat_{\bar 1}\to\fkhat_{\bar 1}$ is the natural action of a R-symmetry transformation on spinor fields;
\item $[-,-]:\fk_{\bar 0}\otimes\mathfrak{R}\to\mathfrak{R}$ is given by
  $
  [X,R]=\mathscr L_X R
  $,
  for all $X\in\fk_{\bar 0}$, $R\in\mathfrak{R}$;
\item $[-,-]:\mathfrak{R}\otimes\mathfrak R\to\mathfrak R$ is the
  commutator of two endomorphisms of the spinor bundle;
\item $[-,-]:\fkhat_{\bar 1}\otimes\fkhat_{\bar 1}\to\fkhat_{\bar 0}$ is
  the symmetric map given by
  \begin{equation}
    [\varepsilon,\varepsilon]=(\kappa(\varepsilon,\varepsilon),\rho(\omega^{(3)}(\varepsilon,\varepsilon)))\;,
    \label{eq:oddodd}
  \end{equation}
  where $\varepsilon\in\fkhat_{\bar 1}$, with associated Dirac current
  $\kappa(\varepsilon,\varepsilon)$ and family of self-dual $3$-forms
  \eqref{eq:form3}.
\end{itemize}
The fact that these maps actually take values in $\fkhat$ and define the
structure of a Lie superalgebra on it depends on appropriate conditions
on $H$ and $\varphi$, which we will now start to detail.

\begin{proposition}
  \label{prop:diffcond}
  The maps just introduced define a Lie superalgebra structure on $\fkhat$ if and only if
  \begin{equation}
    \begin{aligned}
      \label{eq:KSAiff1}
      \mathscr L_{\kappa(\varepsilon,\varepsilon)}H=\mathscr
      L_{\kappa(\varepsilon,\varepsilon)}\varphi&=\mathscr
      L_{\kappa(\varepsilon,\varepsilon)}R=0\;,\\
      \eD_{X}(\rho(\omega^{(3)}(\varepsilon,\varepsilon)))&=0\;,
    \end{aligned}
  \end{equation}
  for all $\varepsilon\in\fkhat_{\bar 1}$, $R\in\mathfrak{R}$ and $X\in\mathfrak{X}(M)$. 
\end{proposition}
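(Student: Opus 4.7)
The proof is by unwinding the Lie superalgebra axioms --- closure of each bracket into $\fkhat$ and the super-Jacobi identities --- and matching each non-trivial requirement with one of the four listed conditions, which handles both necessity and sufficiency simultaneously.

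Closure on the purely even side is essentially automatic: $[\fk_{\bar 0}, \mathfrak{R}] \subset \mathfrak{R}$ follows from the identity $[\mathscr{L}_X, \eD_Z] = \eD_{[X,Z]}$ valid whenever $X$ preserves the data $(g, H, \varphi)$ defining $\eD$, and $[\mathfrak{R}, \mathfrak{R}] \subset \mathfrak{R}$ follows from the Leibniz rule for $\eD$ on endomorphisms. The even-odd brackets $[\fk_{\bar 0}, \fkhat_{\bar 1}] \subset \fkhat_{\bar 1}$ and $[\mathfrak{R}, \fkhat_{\bar 1}] \subset \fkhat_{\bar 1}$ reduce, respectively, to the same compatibility identity and to $\eD_Z(R \cdot \varepsilon) = (\eD_Z R) \cdot \varepsilon + R \cdot \eD_Z \varepsilon = 0$. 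The only non-trivial closure is that of the odd-odd bracket $[\varepsilon, \varepsilon] = (\kappa, \rho(\omega^{(3)}))$: the Dirac current $\kappa$ is Killing by Corollary~\ref{cor:Killing}, so $\kappa \in \fk_{\bar 0}$ is equivalent to $\mathscr{L}_\kappa H = \mathscr{L}_\kappa \varphi = 0$, while $\rho(\omega^{(3)}) \in \mathfrak{R}$ is, by definition of $\mathfrak{R}$, the fourth condition $\eD_X(\rho(\omega^{(3)})) = 0$.

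For the super-Jacobi identities, the cases involving at most one odd entry are standard, reducing to the derivation properties (i)--(iv) of the spinorial Lie derivative recalled in the text. The $(X, \varepsilon, \varepsilon)$ identity with $X \in \fk_{\bar 0}$ amounts to the $\mathscr{L}_X$-equivariance of the bilinears $\kappa(-,-)$ and $\rho \circ \omega^{(3)}(-,-)$, which follows from Clifford-compatibility of $\mathscr{L}_X$ together with its preservation of $H$. The key all-odd Jacobi $[[\varepsilon, \varepsilon], \varepsilon] = 0$ unfolds, on writing $\mathscr{L}_\kappa \varepsilon = \nabla_\kappa \varepsilon + \sigma(\gamma(\varepsilon, \varepsilon))\varepsilon$ and using the Killing spinor equation $\nabla_\kappa \varepsilon = \beta_\kappa \varepsilon$, to exactly the second Spencer cocycle equation from the proof of Theorem~\ref{thm:spencer-R} and is therefore automatic.

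The remaining mixed Jacobi $(R, \varepsilon, \varepsilon)$ is where the third condition $\mathscr{L}_\kappa R = 0$ enters. Decomposing $[R, [\varepsilon, \varepsilon]] = 2[\varepsilon, R \cdot \varepsilon]$ along $\fkhat_{\bar 0} = \fk_{\bar 0} \oplus \mathfrak{R}$, the $\fk_{\bar 0}$-component $\kappa(\varepsilon, R\varepsilon)$ vanishes identically because $R \in \fsp(\Delta)$ is traceless and $\bar\varepsilon^A \Gamma^\mu \varepsilon^B = \tfrac12 \epsilon^{AB} \kappa^\mu$; the $\mathfrak{R}$-component reduces to a linear relation between $\mathscr{L}_\kappa R$, the commutator $[R, \rho(\omega^{(3)})]$, and the polarised bilinear $\rho(\omega^{(3)}(\varepsilon, R\varepsilon))$. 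Using the explicit form of $\rho$ from Theorem~\ref{thm:spencer-R} and the invariance of $\omega^{(3)}$ under the $\fsp(\Delta)$-action, this linear relation becomes equivalent to $\mathscr{L}_\kappa R = 0$. The main obstacle will be this last step, as it requires careful bookkeeping of the Clifford-algebraic identities for $\omega^{(3)}$ under the $\fsp(\Delta)$-action and of how $\mathscr{L}_\kappa$ acts on a (possibly non-constant) section of $\mathfrak{sp}(\mathcal{H})$.
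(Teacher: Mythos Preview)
Your proposal is correct and follows essentially the same approach as the paper: closure of each bracket, then the super-Jacobi identities case by case, with the $(R,\varepsilon,\varepsilon)$ component isolating the condition $\mathscr L_{\kappa}R=0$ and the all-odd component reducing to the second Spencer cocycle condition of Theorem~\ref{thm:spencer-R}. The step you flag as the ``main obstacle'' is in fact immediate: the cancellation $[R,\rho(\omega^{(3)})] = 2\rho(\omega^{(3)}(R\varepsilon,\varepsilon))$ follows from the $\fsp(1)$-equivariance of the bilinear $\omega^{(3)}$ (so that $2\omega^{(3)}(R\varepsilon,\varepsilon)$ is the adjoint action of $R$ on $\omega^{(3)}(\varepsilon,\varepsilon)$) together with the fact that $\rho$, being contraction with the $\fsp$-inert form $H^-$, intertwines this action --- no delicate Clifford bookkeeping is required.
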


\begin{proof}
  We first verify $[\fkhat,\fkhat]\subset \fkhat$:
  \begin{itemize}
  \item $[\mathfrak{k}_{\bar0},\mathfrak{k}_{\bar0}]\subset \mathfrak{k}_{\bar0}$,
    $[\mathfrak{R},\mathfrak{R}]\subset \mathfrak{R}$ and $[\mathfrak{k}_{\bar 0},\fkhat_{\bar 1}]\subset\fkhat_{\bar 1}$ are straightforward;
  \item $[\mathfrak{k}_{\bar0},\mathfrak{R}]\subset \mathfrak{R}$. We first remark that
    the Lichnerowicz-Kosmann Lie derivative acts trivially on any constant (=$\overline\nabla$-parallel, cf. the beginning of \S \ref{sec:killing-spinors}) section of $\mathfrak{sp}(\mathcal H)$. Hence $\mathscr L_X(\Gamma(\mathfrak{sp}(\mathcal H)))\subset \Gamma(\mathfrak{sp}(\mathcal H))$ 
    for any $X\in\mathfrak{k}_{\bar 0}$ and the desired inclusion follows from
    \begin{equation}
      \begin{aligned}
        {}[\mathscr L_XR,\mathscr D_Y]&=[[\mathscr L_X,R],\mathscr D_Y]=[\mathscr L_X,[R,\mathscr D_Y]]+[[\mathscr L_X,\mathscr D_Y],R]\\
        &=[[\mathscr L_X,\mathscr D_Y],R]=[\mathscr D_{[X,Y]},R]=0\;,\\
      \end{aligned}
    \end{equation}
    where $R\in\mathfrak{R}$ and $Y\in\mathfrak{X}(M)$;
  \item $[\mathfrak{R},\fkhat_{\bar 1}]\subset\fkhat_{\bar 1}$ follows from 
    \begin{equation}
      \begin{aligned}
        \eD_{Y}(R(\varepsilon))&=\eD_{Y}R(\varepsilon)+R(\eD_Y(\varepsilon))\\
        &=0\;,
      \end{aligned}
    \end{equation}
    where $R\in\mathfrak{R}$, $\varepsilon\in\fkhat_{\bar 1}$ and $Y\in\mathfrak{X}(M)$;
  \item $[\fkhat_{\bar 1},\fkhat_{\bar 1}]\subset \fkhat_{\bar 0}$. We
    already know that the Dirac current of a Killing spinor is a Killing
    vector field, see Corollary \ref{cor:Killing}. The remaining
    conditions are listed in \eqref{eq:KSAiff1}.
  \end{itemize}

  Assuming $[\fkhat,\fkhat]\subset \fkhat$, we now prove that
  $\fkhat=\fkhat_{\bar 0}\oplus\fkhat_{\bar 1}$ with the operation
  $[-,-]$ is a Lie superalgebra. It is easy to see that $\fkhat_{\bar
    0}=\mathfrak{k}_{\bar 0}\ltimes \mathfrak{R}$ is the Lie algebra
  semidirect sum of $\mathfrak{k}_{\bar 0}$ and $\mathfrak{R}$, acting
  on $\fkhat_{\bar 1}$ via a representation of Lie algebras. It remains
  to show $\fkhat_{\bar 0}$-equivariance of \eqref{eq:oddodd} and the
  Jacobi Identity with three odd elements.

  For all $X\in\mathfrak{k}_{\bar 0}$ and $\varepsilon\in\fkhat_{\bar
    1}$, we compare
  \begin{equation}
    \begin{aligned}
      {}[X,[\varepsilon,\varepsilon]]&=[X,\kappa(\varepsilon,\varepsilon)]+[X,\rho(\omega^{(3)}(\varepsilon,\varepsilon))]\\
      &=2\kappa(\mathscr L_X\varepsilon,\varepsilon)+4\mathscr L_{X}(g(\omega^{(3)}(\varepsilon,\varepsilon),H^{-}))\\
      &=2\kappa(\mathscr L_X\varepsilon,\varepsilon)+4g(\mathscr L_{X}(\omega^{(3)}(\varepsilon,\varepsilon)),H^{-})
    \end{aligned}
  \end{equation}
  with
  \begin{equation}
    \begin{aligned}
      \;\,2[\mathscr L_X\varepsilon,\varepsilon]&=2k(\mathscr L_X\varepsilon,\varepsilon)+8g(\omega^{(3)}(\mathscr L_X\varepsilon,\varepsilon),H^{-})\;,
    \end{aligned}
  \end{equation}
  and deduce that $\mathfrak{k}_{\bar 0}$-equivariance of \eqref{eq:oddodd} follows from the identity $
  \mathscr L_{X}(\omega^{(3)}(\varepsilon,\varepsilon))=2\omega^{(3)}(\mathscr L_X\varepsilon,\varepsilon)
  $. We now check this identity. For all $A\in\fsp(1)$,
  $X_1,X_2,X_3\in\mathfrak{X}(M)$, we compute
  \begin{multline*}
    \mathscr L_{X}(\omega^{(3)}_A(\varepsilon,\varepsilon))(X_1,X_2,X_3)\\
    \qquad\;\;\,\quad-2\omega^{(3)}(\mathscr L_X\varepsilon,\varepsilon)(X_1,X_2,X_3)
    =X\left(\varepsilon,(X_1\wedge X_2\wedge X_3)\cdot A\cdot\varepsilon\right)-\left(\varepsilon,(\mathscr L_{X}X_1\wedge X_2\wedge X_3)\cdot A\cdot\varepsilon\right)-\cdots\\
    \qquad\qquad\qquad\qquad\qquad\qquad\qquad\quad\quad\;\;\;\;\;\;\;-\left(\varepsilon,(X_1\wedge X_2\wedge \mathscr L_{X}X_3)\cdot A\cdot\varepsilon\right)-2\left(\nabla_X\varepsilon,(X_1\wedge X_2\wedge X_3)\cdot A\cdot\varepsilon\right)\\
    \qquad\qquad\qquad-2\left(\sigma(A_X)\varepsilon,(X_1\wedge X_2\wedge X_3)\cdot A\cdot\varepsilon\right)\\
    \qquad\qquad\qquad\qquad\qquad\qquad\qquad\qquad\;\;=-2\left(\varepsilon,(X_1\wedge X_2\wedge X_3)\cdot A\cdot\sigma(A_X)\varepsilon\right)
    -\left(\varepsilon,(A_X(X_1)\wedge X_2\wedge X_3)\cdot A\cdot\varepsilon\right)\\
    \qquad\qquad\qquad\;\;\;\,-\cdots-\left(\varepsilon,(X_1\wedge X_2\wedge A_X(X_3))\cdot A\cdot\varepsilon\right)\\
    =0\qquad\quad\qquad\qquad\qquad\qquad\qquad\qquad\qquad\qquad\qquad\qquad\qquad\;\;\;,
  \end{multline*}
  where $A_X=-\nabla X\in\fso(TM)$, $\sigma:\fso(TM)\to\End(\SS)$ is the
  spin representation and the last equation follows from standard
  Clifford identities and the decomposition $\Lambda^2 S=(\Lambda^1
  V\otimes\odot^2\Delta)\oplus\Lambda^3_+V$.

  Now, for all $R\in\mathfrak{R}$ we consider
  \begin{equation}
    \begin{aligned}
     {}[R,[\varepsilon,\varepsilon]] - 2 [[R,\varepsilon],\varepsilon]
     &= [R,\kappa(\varepsilon,\varepsilon)] +
     [R,\rho(\omega^{(3)}(\varepsilon,\varepsilon))] -
     2\kappa(R(\varepsilon),\varepsilon) -
     2\rho(\omega^{(3)}(R(\varepsilon),\varepsilon))\\
      &=-\mathscr L_{\kappa(\varepsilon,\varepsilon)}R -
      2\kappa(R(\varepsilon),\varepsilon) = -\mathscr
      L_{\kappa(\varepsilon,\varepsilon)}R\\
      &=0\;,
    \end{aligned}
  \end{equation}
  where we used \eqref{eq:KSAiff1}. This concludes the proof of
  $\fkhat_{\bar 0}$-equivariance of \eqref{eq:oddodd}.

  The Jacobi identity with three odd elements is equivalent to the
  second cocycle condition of the extended Spencer complex. Indeed, for
  all $\varepsilon\in\fkhat_{\bar 1}$, we have
  \begin{equation}
    \begin{split}
      [[\varepsilon,\varepsilon],\varepsilon]&=\mathscr L_{\kappa}\varepsilon+\rho(\omega)\varepsilon\\
      &=\nabla_{\kappa}\varepsilon-\sigma(\nabla \kappa)\varepsilon+\rho(\omega)\varepsilon\\
      &=\beta_\kappa\varepsilon+\sigma(\gamma(\varepsilon,\varepsilon))\varepsilon+\rho(\omega)\varepsilon\\
      &=0\;,
    \end{split}
  \end{equation}
  where $\kappa=\kappa(\varepsilon,\varepsilon)$ and
  $\omega=\omega^{(3)}(\varepsilon,\varepsilon)$. The proposition is
  proved.
\end{proof}

It is clear from the definition of the space $\mathfrak R$ and
Proposition~\ref{prop:diffcond} that a better understanding of
$\eD$-parallel R-symmetries is required.

\begin{proposition} 
  \label{prop:D-parallel}
  A R-symmetry transformation $R\in\Gamma(\mathfrak{sp}(\mathcal H))$ is
  $\eD$-parallel if and only if
  \begin{itemize}
  \item it is constant, that is $\overline{\nabla}_{X} R=0$ for all
    $X\in\mathfrak{X}(M)$, and
  \item it pointwise commutes  with $\varphi$, that is
    \begin{equation}
      [R|_p,A]=0
    \end{equation}
    at all points $p\in M$ and for all $A\in\varphi(T_pM)\subset \mathfrak{sp}(1)$.
  \end{itemize}
  In particular, if $\varphi(T_pM)$ has dimension greater than or equal
  to $2$ at some fixed point $p\in M$ then any $\eD$-parallel R-symmetry
  transformation is identically zero.
\end{proposition}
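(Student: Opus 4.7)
The plan is to compute $\eD_X R$ explicitly as an endomorphism of $\SS$ and then decompose the resulting expression into its $\Spin(V)$-isotypic summands. Extending the spinor connection $\eD$ to sections of $\End(\SS)$ by the Leibniz rule $(\eD_X R)\varepsilon := \eD_X(R\varepsilon) - R(\eD_X\varepsilon)$, and writing $\eD = \nabla + \Psi$ with $\Psi(X)\varepsilon = -\iota_X H\cdot\varepsilon + 3\varphi(X)\cdot\varepsilon - X\wedge\varphi\cdot\varepsilon$, one finds $\eD_X R = \overline\nabla_X R + [\Psi(X), R]$. The crucial observation for the commutator computation is that $R \in \Gamma(\fsp(\mathcal H))$ acts on $\SS = \$_+ \otimes \mathcal H$ as $\mathrm{Id}\otimes R$: this makes $R$ commute with Clifford multiplication by exterior forms, so the $\iota_X H$ contribution drops out, and a short manipulation yields
\begin{equation*}
\eD_X R \;=\; \overline\nabla_X R \;+\; 3[\varphi(X), R] \;-\; X\wedge[\varphi, R]\cdot,
\end{equation*}
where the last term denotes Clifford multiplication by the $\fsp(1)$-valued $2$-form $X\wedge[\varphi, R]$ acting on $\SS$.

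The step requiring the most care, and the crux of the argument, is separating this vanishing condition into its irreducible pieces. The first two summands live in the subbundle $\fsp(\mathcal H)\subset\End(\SS)$, which acts trivially on the $\$_+$-factor, whereas the third sits inside the image of the Clifford multiplication map $\Lambda^2 T^*M\otimes\fsp(\mathcal H)\to\End(\SS)$. Since Clifford multiplication by a non-zero $2$-form is linearly independent from the identity on $\$_+$, these are genuinely distinct $\Spin(V)$-isotypic components of $\End(\SS)$ and must vanish separately. The vanishing of the $2$-form part, $X\wedge[\varphi, R]=0$ for all $X\in\mathfrak{X}(M)$, is equivalent to $[\varphi(Y), R]|_p=0$ for all $Y\in T_pM$ and $p\in M$ — the second bullet — and once this holds, the remaining condition reduces to $\overline\nabla_X R = 0$, the first bullet. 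The converse implication is immediate.

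For the final consequence, suppose $\dim\varphi(T_pM)\ge 2$ at some $p\in M$, and pick linearly independent $A_1, A_2\in\varphi(T_pM)\subset\fsp(1)$. Since $\fsp(1)\cong\fso(3)$ is a simple three-dimensional Lie algebra whose maximal abelian subalgebras are one-dimensional, $[A_1, A_2]\ne 0$ and $\{A_1, A_2, [A_1, A_2]\}$ spans $\fsp(1)$. An element $R|_p$ commuting with both $A_1$ and $A_2$ therefore centralises all of $\fsp(1)$, and triviality of the centre forces $R|_p=0$. Connectedness of $M$ together with the first bullet $\overline\nabla R = 0$ then propagate this vanishing, via parallel transport along the flat connection $\overline\nabla$, to $R\equiv 0$ on all of $M$.
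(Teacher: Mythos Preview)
Your proof is correct and follows essentially the same approach as the paper: both compute $\eD_X R$ by commuting $R$ past the pieces of the connection, use that $R$ commutes with Clifford multiplication to drop the $H$-term, and then separate the resulting identity into its $\fsp(\mathcal H)$- and $\Lambda^2 T^*M\otimes\fsp(\mathcal H)$-components. The paper packages the intermediate step slightly differently (writing $\eD_X = D_X + \Phi_X$ with $D$ the metric torsion connection and $\Phi_X = X\cdot\varphi + 2\varphi\cdot X$), but the algebra and the conclusion are identical; your treatment of the final consequence via the triviality of the centre of $\fsp(1)$ is also the same argument as the paper's, just spelled out in a bit more detail.
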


\begin{proof}
  We consider the decomposition of the spinor connection
  \begin{equation}
    \eD_X \varepsilon=   D_X \varepsilon + \Phi_{X}\varepsilon
  \end{equation}
  as sum of the metric connection with skew-symmetric torsion 
  \begin{equation}
    D_X Y = \nabla_X Y + 2 h(X,Y),
  \end{equation}
  where $g(h(X,Y),Z)=H(X,Y,Z)$, and the $\varphi$-dependent endomorphism
  of the spinor bundle
  \begin{equation}
    \Phi_{X}\varepsilon=X \cdot  \varphi\cdot\varepsilon + 2 \varphi
    \cdot X\cdot\varepsilon\;.
  \end{equation}
  We note that $\Phi_X$ is a section of $\mathfrak{sp}(\mathcal
  H)\oplus(\Lambda^2 TM\otimes \mathfrak{sp}(\mathcal H))$.

  A straightforward computation says
  \begin{equation}
    \label{eq:Dpar}
    \begin{aligned}
      {}[\eD_X,R](\varepsilon)&=[D_X,R](\varepsilon)+[\Phi_X,R](\varepsilon)\\
      &=\overline{\nabla}_XR(\varepsilon)-[R,\Phi_X](\varepsilon)\;,
    \end{aligned}
  \end{equation}
  whence the R-symmetry transformation $R$ is $\eD$-parallel if and only if
  \begin{equation}
    \label{eq:DparII}
    \begin{aligned}
      \overline{\nabla}_XR&=[R,\Phi_X]\\
      &=X\cdot[R,\varphi]+2[R,\varphi]\cdot X\\
      &=3g(X,[R,\varphi])-X\wedge [R,\varphi]\;,
    \end{aligned}
  \end{equation}
  for all $X\in\mathfrak{X}(M)$. Equation \eqref{eq:DparII} is an
  identity of endomorphisms of the spinor bundle but note that the LHS
  is a section of $\mathfrak{sp}(\mathcal H)$ whereas the RHS of
  $\mathfrak{sp}(\mathcal H)\oplus(\Lambda^2 TM\otimes
  \mathfrak{sp}(\mathcal H))$. Equation \eqref{eq:DparII} then splits
  into
  \begin{equation}
    \label{eq:Dparsplit}
    \begin{aligned}
      \overline{\nabla}_XR&=3g(X,[R,\varphi])\;,\\
      X\wedge [R,\varphi]&=0\;,
    \end{aligned}
  \end{equation}
  for all $X\in\mathfrak{X}(M)$, which implies $[R,\varphi]=0$ and
  $\overline{\nabla}_XR=0$. The first claim of the proposition is
  proved.

  The last claim follows from the fact that $R$ is constant and the
  centraliser of any non-zero element of $\mathfrak{sp}(1)$ is
  $1$-dimensional.
\end{proof}

\begin{corollary}\label{cor:redundantcond}
  $\mathscr L_{X}R=0$ for all Killing vector fields $X$ and $R\in\mathfrak{R}$.
\end{corollary}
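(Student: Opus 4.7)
The plan is to combine Proposition \ref{prop:D-parallel} with the observation—already invoked in the proof of Proposition \ref{prop:diffcond}—that the Lichnerowicz--Kosmann spinorial Lie derivative acts trivially on $\overline\nabla$-parallel sections of $\mathfrak{sp}(\mathcal H)$. Since any $R\in\mathfrak R$ is $\eD$-parallel, Proposition \ref{prop:D-parallel} tells us that $R$ is constant, i.e.\ $\overline\nabla_X R=0$ for all $X\in\mathfrak X(M)$. It therefore suffices to show that $\mathscr L_X R=\overline\nabla_X R$ for any Killing vector field $X$, which would then force $\mathscr L_X R=0$ and prove the corollary.

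To establish this identity, I would exploit the decomposition $\SS=\$_+\otimes\mathcal H$. The key point is that, for a Killing vector $X$, the spinorial Lie derivative $\mathscr L_X\varepsilon=\nabla_X\varepsilon+\sigma(A_X)\varepsilon$ is built out of the Levi-Civita connection (which is compatible with the decomposition) and the spin action of $A_X=-\nabla X\in\fso(TM)$ (which affects only the $\$_+$-factor). Hence $\mathscr L_X$ preserves the tensor product decomposition and acts on a decomposable section $\varepsilon=\varepsilon_+\otimes\zeta$ as
\begin{equation*}
\mathscr L_X(\varepsilon_+\otimes\zeta)=(\mathscr L_X\varepsilon_+)\otimes\zeta+\varepsilon_+\otimes\overline\nabla_X\zeta.
\end{equation*}
Since $R\in\Gamma(\mathfrak{sp}(\mathcal H))$ acts as $\mathrm{id}_{\$_+}\otimes R$ on sections of $\SS$, a direct computation gives $[\mathscr L_X,R]\varepsilon=\varepsilon_+\otimes(\overline\nabla_X R)\zeta$, so that $\mathscr L_X R=\overline\nabla_X R$ as operators on $\SS$.

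Applying this with $R\in\mathfrak R$ and invoking Proposition \ref{prop:D-parallel} to conclude $\overline\nabla_X R=0$ completes the argument. I do not expect any real obstacle here: the main (and essentially only) content is the identification $\mathscr L_X R=\overline\nabla_X R$, which is a direct consequence of the compatibility of $\nabla$ and the spin representation with the splitting $\SS=\$_+\otimes\mathcal H$, combined with the triviality of $\mathcal H$. The mild subtlety to be careful about is that $\mathscr L_X$ is a priori defined only for Killing vector fields, which is precisely the hypothesis of the corollary.
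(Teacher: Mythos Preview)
Your proposal is correct and is essentially the same argument the paper has in mind: the corollary is stated without proof immediately after Proposition~\ref{prop:D-parallel}, and the paper has already remarked (in the proof of Proposition~\ref{prop:diffcond}) that the Lichnerowicz--Kosmann Lie derivative acts trivially on $\overline\nabla$-parallel sections of $\mathfrak{sp}(\mathcal H)$. You have simply spelled out this last observation in more detail via the identification $\mathscr L_X R=\overline\nabla_X R$.
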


We deduce from Propositions \ref{prop:diffcond} and
\ref{prop:D-parallel} that in general only the \emph{decomposable}
$\varphi:TM\to\mathfrak{sp}(1)$ have an associated Killing superalgebra
extended by R-symmetry transformations (in the sense defined in this
section) and that some additional algebraic conditions on the space of
Killing spinors have to be enforced if $\varphi\neq
 0$ (so that $\rho(\omega^{(3}(\varepsilon,\varepsilon))$ pointwise
 commutes with $\varphi$).
 
We will restrict to $\varphi=0$ in what follows. A deeper understanding
of the decomposable case is an interesting problem, which we leave to
future work.

\begin{theorem} \label{thm:KSAII}
  Let $(M,g,H)$ be a lorentzian six-dimensional spin manifold endowed
  with a $ 3$-form $H\in\Omega^3(M)$ and $D_X Y = \nabla_X Y + 2 h(X,Y)$
  the metric connection with skew-symmetric torsion defined by
  $g(h(X,Y),Z)=H(X,Y,Z)$. Let also $H=H^{+}+H^{-}$ be the decomposition
  of $H$ into self-dual and antiself-dual components. If
  \begin{itemize}
  \item $dH=0$ and
  \item $H^{-}$ is $D$-parallel,
  \end{itemize}
  then there exists a natural structure of Lie superalgebra on the
  direct sum $ \fkhat=\fkhat_{\bar 0}\oplus\fkhat_{\bar 1}$ of the
  spaces \eqref{eq:KSA0} and \eqref{eq:KSA1}. We call it the Killing
  superalgebra extended by R-symmetry transformations associated to
  $(M,g,H)$.
\end{theorem}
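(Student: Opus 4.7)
The plan is to verify the conditions set out in Proposition~\ref{prop:diffcond}, with the simplification $\varphi=0$. Under this assumption, several conditions become automatic: $\mathscr L_\kappa \varphi = 0$ trivially, and $\mathscr L_\kappa R = 0$ for every $R \in \mathfrak{R}$ by Corollary~\ref{cor:redundantcond}. Moreover, Proposition~\ref{prop:D-parallel} simplifies greatly when $\varphi=0$: a section $R \in \Gamma(\mathfrak{sp}(\mathcal{H}))$ is $\eD$-parallel if and only if it is $\overline{\nabla}$-parallel, i.e.\ constant as an $\mathfrak{sp}(1)$-valued function on $M$. The two remaining tasks are therefore (a) $\mathscr L_\kappa H = 0$, and (b) $\rho(\omega^{(3)}(\varepsilon,\varepsilon))$ is constant on $M$, for every Killing spinor $\varepsilon$.

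For (a), I would invoke Cartan's magic formula $\mathscr L_\kappa H = d\iota_\kappa H + \iota_\kappa dH$. The hypothesis $dH=0$ kills the second term, and Proposition~\ref{cor:LieDF} specialised to $\varphi=0$ gives $d\iota_\kappa H = d\iota_\varphi \omega^{(3)} = 0$.

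For (b), the essential observation is that both $H^-$ and $\omega^{(3)}$ are parallel with respect to the same metric connection $D$ on the bundle of $3$-forms: for $H^-$ this is exactly the hypothesis, while for $\omega^{(3)}$ it is a consequence of the Killing spinor equation. More precisely, let $L_X \in \Gamma(\fso(TM))$ be the section corresponding to the $2$-form $\iota_X H$, so that $D_X \Xi = \nabla_X \Xi + 2 L_X \cdot \Xi$ on $3$-forms, where $L_X$ acts as a derivation. A short computation, relating $(\iota_X \iota_{X_1} H)^\sharp$ to $-L_X X_1$ via antisymmetry of $H$, converts the expression for $\nabla_X \omega^{(3)}$ in Proposition~\ref{prop:auxiliary} (with $\varphi=0$) into $\nabla_X \omega^{(3)} = -2 L_X \cdot \omega^{(3)}$, which is precisely the statement $D_X \omega^{(3)} = 0$. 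Since $L_X \in \fso(TM)$ preserves the metric $g$ on $3$-forms, the connection $D$ is metric-compatible, and hence the scalar $g(\omega^{(3)}_A, H^-)$ is $D$-parallel. Being a scalar, it is therefore $\nabla$-parallel, i.e.\ constant on $M$. Consequently $\rho(\omega^{(3)}) = 4 g(\omega^{(3)}, H^-)$ is constant, hence $\overline{\nabla}$-parallel, hence $\eD$-parallel.

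The main obstacle is the identification $\nabla_X \omega^{(3)} = -2 L_X \cdot \omega^{(3)}$: one has to carefully match the partial skew-symmetrisation appearing in Proposition~\ref{prop:auxiliary} with the derivation action of $\iota_X H$ on the $3$-form $\omega^{(3)}$, and keep track of signs. Once this conceptual reformulation is in hand, both the $D$-parallelism of $\omega^{(3)}$ and the conclusion that $\rho(\omega^{(3)})$ is constant follow almost tautologically from the metric compatibility of $D$, and Proposition~\ref{prop:diffcond} then assembles the Lie superalgebra structure on $\fkhat$.
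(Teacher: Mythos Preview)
Your proposal is correct and follows essentially the same strategy as the paper: reduce via Propositions~\ref{prop:diffcond} and~\ref{prop:D-parallel} and Corollary~\ref{cor:redundantcond} to showing $\mathscr L_\kappa H=0$ and constancy of $\rho(\omega^{(3)}(\varepsilon,\varepsilon))$, handle the first by Cartan's formula plus Proposition~\ref{cor:LieDF}, and the second by $D$-parallelism of both $H^-$ and $\omega^{(3)}$ together with metric compatibility of $D$. The only minor difference is that for (b) the paper obtains $D_X\omega^{(3)}=0$ more directly, observing that when $\varphi=0$ the Killing spinor equation reads $D_X\varepsilon=0$ on spinors, whence $D_X\omega^{(3)}(\varepsilon,\varepsilon)=2\omega^{(3)}(D_X\varepsilon,\varepsilon)=0$; this bypasses the need to unpack Proposition~\ref{prop:auxiliary} and match signs in the derivation action of $\iota_X H$.
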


\begin{proof}
  Due to Propositions
  \ref{prop:diffcond} and \ref{prop:D-parallel} and Corollary
  \ref{cor:redundantcond}, it remains to show that
  $\mathscr L_{\kappa( \varepsilon,\varepsilon)}H=0$ and that
  $\rho(\omega^{(3)}(\varepsilon, \varepsilon))$ is a constant section
  of $\mathfrak{sp}(\mathcal H)$, for all
  $\varepsilon\in\fkhat_{\bar 1}$. We depart with
  \begin{equation}
    \begin{aligned}
      \mathscr L_{\kappa(\varepsilon,\varepsilon)}H &=
      d\imath_{\kappa(\varepsilon,\varepsilon)}H +
      \imath_{\kappa(\varepsilon,\varepsilon)}dH\\
      &=d\imath_{\kappa(\varepsilon,\varepsilon)}H=0\;,
    \end{aligned}
  \end{equation}
  where the last equation follows from Proposition~\ref{cor:LieDF} with
  $\varphi=0$, and then conclude with
  \begin{equation}
    \begin{aligned}
      \overline{\nabla}_{X}(\rho(\omega^{(3)}(\varepsilon,\varepsilon)))
      &= 4X(g(\omega^{(3)}(\varepsilon,\varepsilon),H^{-}))\\
      &= 4g(D_{X}(\omega^{(3)}(\varepsilon,\varepsilon)),H^{-}) +
      4g(\omega^{(3)}(\varepsilon,\varepsilon),D_{X}H^{-})\\
      &=8g(\omega^{(3)}(D_{X}\varepsilon,\varepsilon),H^{-})\\
      &=0\;,
    \end{aligned}
  \end{equation}
  which holds for all $X\in\mathfrak{X}(M)$.
\end{proof}

\begin{remark}
  Let $D$ (resp.
  $D^+$) be the metric connection with skew-symmetric torsion
  $g(h(X,Y),Z)=H(X,Y,Z)$ (resp. $g(h(X,Y),Z)=H^+(X,Y,Z)$). Then it is
  not difficult to see that $DH^{-}=D^{+}H^{-}$, so that the second
  condition in Theorem \ref{thm:KSAII} is equivalent to $H^-$ being
  $D^{+}$-parallel.
\end{remark}

\section{Killing superalgebras (alternative calculation with some indices)}
 \label{sec:killing-superalgebras}

Let $M$ be a six-dimensional spin manifold equipped with a lorentzian
metric $g$, a three-form $H$ and an $\fsp(1)$-valued one-form $\varphi$.
In addition, let the spinor bundle on $M$ be equipped with a connection
${\hat \nabla}$ whose action on a positive chirality spinor field
$\varepsilon$ is defined, with respect to the basis defined in
section~\ref{sec:conventions}, by
\begin{equation}
  \label{eq:gaugedconnection}
  {\hat \nabla}_\mu \varepsilon^A = \nabla_\mu \varepsilon^A + C_\mu{}^A{}_B \varepsilon^B~,
\end{equation}
where $\nabla$ is the Levi-Civita connection and $C$ is a locally
defined $\fsp(1)$-valued one-form on $M$. For any $\Sp(1)$-valued smooth
function $\lambda$ on $M$, the transformations
\begin{equation}
\label{eq:gaugecovariance}
\varepsilon \mapsto \lambda \varepsilon \; , \quad C_\mu \mapsto  - ( \partial_\mu \lambda ) \lambda^{-1} + \lambda C_\mu \lambda^{-1}~,
\end{equation}
imply ${\hat \nabla}_\mu \varepsilon \mapsto \lambda {\hat \nabla}_\mu \varepsilon$. Furthermore, the curvature 
\begin{equation}
\label{eq:gaugecurvature}
G_{\mu\nu} = \partial_\mu C_\nu - \partial_\nu C_\mu + [ C_\mu , C_\nu ]~,
\end{equation}
of $C$ has the transformation
$G_{\mu\nu} \mapsto \lambda G_{\mu\nu} \lambda^{-1}$.

Now recall that ${\mathfrak{S}}(M)$ denotes the space of sections of the
positive chirality spinor bundle on $M$. In terms of the data above,
motivated by Theorem~\ref{thm:spencer-R}, let us call any $\varepsilon
\in {\mathfrak{S}}(M)$ a \emph{Killing spinor} if
\begin{equation}
  \label{eq:KillingSpinorIndices}
  \eD_\mu \varepsilon^A := {\hat \nabla}_\mu \varepsilon^A - \half
  H_{\mu\nu\rho} \Gamma^{\nu\rho} \epsilon^A + 3 \varphi_\mu{}^A{}_B
  \varepsilon^B - \varphi^{\nu \, A}{}_B \Gamma_{\mu\nu} \varepsilon^B =
  0~.
\end{equation}
Notice that \eqref{eq:KillingSpinorIndices} is invariant under
\eqref{eq:gaugecovariance} provided the background fields transform in
the obvious way:
\begin{equation}
  \label{eq:backgroundgaugecovariance}
  g_{\mu\nu} \mapsto g_{\mu\nu} \; , \quad H_{\mu\nu\rho} \mapsto H_{\mu\nu\rho} \; , \quad \varphi_\mu \mapsto \lambda \varphi_\mu \lambda^{-1}~.
\end{equation}
This manifest local $\Sp(1)$ invariance we have engineered is sometimes
referred to as \lq gauging the R-symmetry' in the physics literature.

Now to the construction of the Killing superalgebra. We define a Killing
superalgebra $\fk$ to be a Lie superalgebra whose odd part
$\fk_{\bar 1}$ is precisely the space of Killing spinors defined by
\eqref{eq:KillingSpinorIndices}. The even part $\fk_{\bar 0}$ must
contain elements which act as endomorphisms of $\fk_{\bar 1}$, so that
we may assign a bracket
$[ \fk_{\bar 0},\fk_{\bar 1}] \subset \fk_{\bar 1}$. There are two
obvious candidates: Killing vectors (acting via the spinorial Lie
derivative) and local $\fsp(1)$ R-symmetries. By definition, both these
transformations are endomorphisms of ${\mathfrak{S}}(M)$. But, as we
will see in a moment, to preserve $\fk_{\bar 1}$ will demand some
additional constraints.

Let ${\mathfrak{X}}(M)$ denote the space of vector fields on $M$ and let
us write the subspace of Killing vectors
\begin{equation}
\label{eq:KillingVectors}
{\mathfrak{K}}(M) = \{ X \in {\mathfrak{X}}(M) \, |\, \eL_X g = 0 \}~,
\end{equation}
where, of course, $\eL_X$ denotes the Lie derivative along $X$. For any
$X \in {\mathfrak{K}}(M)$ and $\varepsilon \in {\mathfrak{S}}(M)$, the
spinorial Lie derivative of $\varepsilon$ along $X$ is defined by
\begin{equation}
\label{eq:SpinorialLieDerivative}
\eL_X \varepsilon = \nabla_X \varepsilon + \tfrac{1}{4} d X^\flat \varepsilon~.
\end{equation}

For any $X \in {\mathfrak{K}}(M)$, $Y \in {\mathfrak{X}}(M)$ and
$\Upsilon \in \Omega^\bullet (M)$, as endomorphisms of
${\mathfrak{S}}(M)$, we have the useful identities
\begin{equation}
  \label{eq:SpinorialLieDerivativeNabla}
  [ \eL_X , \nabla_Y ] = \nabla_{[X,Y]} \; , \quad [ \eL_X , \Upsilon ] = \eL_X \Upsilon~,
\end{equation}
where $[X,Y] = \eL_X Y$ is the Lie bracket on ${\mathfrak{X}}(M)$.

Since $\eL_X$ in \eqref{eq:SpinorialLieDerivative} is clearly not
covariant under the local $\Sp(1)$ transformation $\varepsilon \mapsto
\lambda \varepsilon$, let us define a more appropriate gauged version:
\begin{equation}
  \label{eq:SpinorialLieDerivativeGauged}
  {\hat \eL}_X \varepsilon = {\hat \nabla}_X \varepsilon + \tfrac{1}{4} d X^\flat \varepsilon~,
\end{equation}
for any $X \in {\mathfrak{K}}(M)$ and $\varepsilon \in
{\mathfrak{S}}(M)$, which transforms covariantly ${\hat \eL}_X
\varepsilon \mapsto \lambda {\hat \eL}_X \varepsilon$ under
\eqref{eq:gaugecovariance}. The associated identities in
\eqref{eq:SpinorialLieDerivativeNabla} become
\begin{equation}
  \label{eq:SpinorialLieDerivativeNablaGauged}
  [ {\hat \eL}_X , {\hat \nabla}_Y ] = {\hat \nabla}_{[X,Y]} + G(X,Y) \; , \quad [ {\hat \eL}_X , \Upsilon ] = \eL_X \Upsilon~,
\end{equation}
for any $X \in {\mathfrak{K}}(M)$, $Y \in {\mathfrak{X}}(M)$ and
$\Upsilon \in \Omega^\bullet (M)$, where $G$ is the curvature of $C$
from \eqref{eq:gaugecurvature}. Using these identities together with the
definition of $\eD$ in \eqref{eq:KillingSpinorIndices} then yields
\begin{equation}
  \label{eq:SpinorialLieDerivativeKS}
  [ {\hat \eL}_X , \eD_Y ] \varepsilon = \eD_{[X,Y]} \varepsilon +
  G(X,Y) \varepsilon - \iota_Y ( \eL_X H ) \varepsilon + 3 ( {\hat
    \eL}_X \varphi )(Y) \varepsilon - Y \wedge ( {\hat \eL}_X \varphi )
  \varepsilon~,
\end{equation}
for any $X \in {\mathfrak{K}}(M)$, $Y \in {\mathfrak{X}}(M)$ and
$\varepsilon \in {\mathfrak{S}}(M)$, where ${\hat \eL}_X \varphi = \eL_X
\varphi + [ \iota_X C , \varphi ]$. Consequently, for any $X \in
{\mathfrak{K}}(M)$ and $\varepsilon \in \fk_{\bar 1}$, we see that
${\hat \eL}_X \varepsilon \in \fk_{\bar 1}$ is guaranteed provided
\begin{equation}
  \label{eq:SpinorialLieDerivativeBackgroundData}
  \iota_X G=0 \; , \quad \eL_X H = 0 \; , \quad {\hat \eL}_X \varphi = 0~.
\end{equation}
Henceforth we shall define
\begin{equation}
  \label{eq:KillingVectorsBackgroundPreserving}
{\mathfrak{K}} = \{ X \in {\mathfrak{X}}(M) \, |\, \eL_X g = 0 , \eL_X H
= 0 , {\hat \eL}_X \varphi = 0 , \iota_X G = 0 \}~,
\end{equation}
as a natural subspace of Killing vectors which preserve the background.
Notice that the Bianchi identity $d^{\hat \nabla} G =0$ implies that
${\hat \eL}_X G = d^{\hat \nabla} \iota_X G = 0$, for any
$X \in {\mathfrak{K}}$. Furthermore, if at every point in $M$ the
elements in ${\mathfrak{K}}$ span the tangent space of $M$ (in which
case $M$ is locally homogeneous) then we must have $G=0$.

Now to the local R-symmetries. For any $\fsp(1)$-valued smooth function
$\rho$ on $M$ and any $\varepsilon \in {\mathfrak{S}}(M)$, it is easy to
verify that
\begin{equation}
  \label{eq:RSymmetryKS}
  [ \eD_\mu  , \rho ] \varepsilon = ( {\hat \nabla}_\mu \rho + 3 [ \varphi_\mu , \rho ]  - \Gamma_{\mu\nu} [ \varphi^\nu , \rho]) \varepsilon~.
\end{equation}
Thus, for any $\varepsilon \in \fk_{\bar 1}$, we see that $\rho
\varepsilon \in \fk_{\bar 1}$ is guaranteed provided
\begin{equation}
  \label{eq:RSymmetryBackgroundData}
  {\hat \nabla}_\mu \rho =0 \; , \quad [ \varphi_\mu , \rho ] = 0~.
\end{equation}
Henceforth we shall define
\begin{equation}
  \label{eq:RSymmetriesBackgroundPreserving}
  {\mathfrak{R}} = \{ \rho \in C^\infty (M) \otimes \fsp(1) \, |\, {\hat \nabla} \rho =0 , [ \varphi , \rho ] = 0 \}~,
\end{equation}
as a natural subspace of local R-symmetries which preserve the
background.

Next, if we are to identify $\fk_{\bar 0}$ with (a subspace of)
${\mathfrak{K}} \oplus {\mathfrak{R}}$, we need to define the brackets
and check the Jacobi identities for $\fk$.

The $[{\bar 0}{\bar 0}{\bar 0}]$ component of the Jacobi identity just
says that $\fk_{\bar 0}$ must be a Lie algebra.

For any $X,Y \in {\mathfrak{X}}(M)$, as endomorphisms of the tensor
bundle on $M$, the commutator of Lie derivatives $\eL_X$ and $\eL_Y$
obeys the identity
\begin{equation}
\label{eq:LieXY}
[ \eL_X , \eL_Y ] = \eL_{[X,Y]}~,
\end{equation}
where $[X,Y] = \eL_X Y$ is the Lie bracket on ${\mathfrak{X}}(M)$.
Furthermore, it is easily verified that
\begin{equation}
  \label{eq:LieXYPhi}
[ {\hat \eL}_X , {\hat \eL}_Y ] \varphi = {\hat \eL}_{[X,Y]} \varphi + [
G(X,Y) , \varphi ] \; , \quad [ {\hat \eL}_X , \iota_Y ] G =
\iota_{[X,Y]} G~,
\end{equation}
for any $X,Y \in {\mathfrak{X}}(M)$. Thus, for any $X,Y \in
{\mathfrak{K}}$, \eqref{eq:LieXY} implies $\eL_{[X,Y]} g = 0$ and
$\eL_{[X,Y]} H = 0$, while \eqref{eq:LieXYPhi} implies ${\hat
  \eL}_{[X,Y]} \varphi =0$ (using $\iota_X G = 0$) and $\iota_{[X,Y]} G
=0$ (using ${\hat \eL}_X G =0$ and $\iota_Y G = 0$). It follows that
${\mathfrak{K}}$ is indeed a Lie algebra with respect to the Lie bracket
of vector fields.

For any $\rho , \rho^\prime \in C^\infty (M) \otimes \fsp(1)$, as
endomorphisms of ${\mathfrak{S}}(M)$, clearly the commutator $[ \rho ,
\rho^\prime ] = \rho \rho^\prime - \rho^\prime \rho$ obeys
\begin{equation}
  \label{eq:DPhiRho}
  {\hat \nabla} [ \rho , \rho^\prime ]  = [ {\hat \nabla} \rho ,
  \rho^\prime ] + [ \rho ,  {\hat \nabla} \rho^\prime ] \; , \quad [
  \varphi , [ \rho , \rho^\prime ] ] = [ [ \varphi ,\rho ] , \rho^\prime
  ] + [ \rho , [ \varphi , \rho^\prime ] ]~.
\end{equation}
Whence, any $\rho , \rho^\prime \in {\mathfrak{R}}$ have $[ \rho , \rho^\prime ] \in {\mathfrak{R}}$ and ${\mathfrak{R}}$ is clearly a Lie algebra with respect to the commutator of endomorphisms.

So ${\mathfrak{K}} \oplus {\mathfrak{R}}$ is certainly a Lie algebra if
we define $[ {\mathfrak{K}} , {\mathfrak{R}} ] =0$. Indeed, even if we
had defined $[ X , \rho ] = {\hat \nabla}_X \rho$, for any
$X \in {\mathfrak{K}}$ and $\rho \in {\mathfrak{R}}$, the condition
${\hat \nabla} \rho =0$ in \eqref{eq:RSymmetriesBackgroundPreserving}
would force us to take $[ {\mathfrak{K}} , {\mathfrak{R}} ] =0$.

The $[{\bar 0}{\bar 0}{\bar 1}]$ component of the Jacobi identity says
that $\fk_{\bar 0}$ must act on $\fk_{\bar 1}$ as a $\fk_{\bar
  0}$-module. For any $X,Y \in {\mathfrak{K}}(M)$ and $\varepsilon \in
{\mathfrak{S}}(M)$, one finds that the commutator of gauged spinorial
Lie derivatives in \eqref{eq:SpinorialLieDerivativeGauged} obeys the
identity
\begin{equation}
  \label{eq:LieXYSpinorial}
  [ {\hat \eL}_X , {\hat \eL}_Y ] \varepsilon = {\hat \eL}_{[X,Y]} \varepsilon + G(X,Y) \varepsilon~.
\end{equation}
Whence, for any $X \in {\mathfrak{K}}$ and $\varepsilon \in \fk_{\bar
  1}$, the bracket
\begin{equation}
\label{eq:XEpsilonBracket}
[ X , \varepsilon ] = {\hat \eL}_{X} \varepsilon~,
\end{equation}
defines $\fk_{\bar 1}$ as a ${\mathfrak{K}}$-module (since $\iota_X G
=0$). Moreover, for any $\rho \in {\mathfrak{R}}$ and $\varepsilon \in
\fk_{\bar 1}$, the bracket
\begin{equation}
  \label{eq:RhoEpsilonBracket}
[ \rho , \varepsilon ] = \rho \varepsilon~,
\end{equation}
clearly defines $\fk_{\bar 1}$ as an ${\mathfrak{R}}$-module by
restricting local $\fsp(1)$ endomorphisms of the spinor bundle. Finally,
combining \eqref{eq:XEpsilonBracket} and \eqref{eq:RhoEpsilonBracket},
we see that
\begin{equation}
  \label{eq:XRhoEpsilon}
  [ X , [ \rho , \varepsilon ] ] - [ \rho , [ X , \varepsilon ] ] = [ {\hat \eL}_X , \rho ] \varepsilon = ( {\hat \nabla}_X \rho ) \varepsilon = 0~,
\end{equation}
as required, for any $X \in {\mathfrak{K}}$, $\rho \in {\mathfrak{R}}$
and $\varepsilon \in \fk_{\bar 1}$ (since ${\hat \nabla} \rho = 0$).

This has established that $\fk_{\bar 1}$ is indeed a representation of
the Lie algebra ${\mathfrak{K}} \oplus {\mathfrak{R}}$ with respect to
the action defined by \eqref{eq:XEpsilonBracket} and
\eqref{eq:RhoEpsilonBracket}.

In order to check the remaining Jacobi identities for $\fk$, we must
first specify a bracket $[ \fk_{\bar 1} , \fk_{\bar 1} ] \subset
\fk_{\bar 0}$. Since the odd-odd bracket for $\fk$ is symmetric, it is
sufficient to define
\begin{equation}
  \label{eq:Epsilon2Bracket}
  [ \varepsilon , \varepsilon ] = ( \kappa (\varepsilon) , \vartheta (\varepsilon) )~,
\end{equation}
for all $\varepsilon \in \fk_{\bar 1}$, such that $\kappa (\varepsilon)
\in {\mathfrak{K}}$ and $\vartheta (\varepsilon) \in {\mathfrak{R}}$.
The bracket of two different $\varepsilon , \varepsilon^\prime \in
\fk_{\bar 1}$ is then defined by polarisation:
\begin{equation}
  \label{eq:EpsilonPrimeBracket}
[ \varepsilon , \varepsilon^\prime ] = \half ( [ \varepsilon +
\varepsilon^\prime , \varepsilon + \varepsilon^\prime ] - [ \varepsilon
, \varepsilon ] - [ \varepsilon^\prime , \varepsilon^\prime ] )~,
\end{equation}
and it is convenient to define $\kappa (\varepsilon ,
\varepsilon^\prime) = \half ( \kappa (\varepsilon + \varepsilon^\prime)
- \kappa (\varepsilon) - \kappa (\varepsilon^\prime) )$ and $\vartheta
(\varepsilon , \varepsilon^\prime ) = \half ( \vartheta (\varepsilon +
\varepsilon^\prime) - \vartheta (\varepsilon) - \vartheta
(\varepsilon^\prime) )$.

Guided again by Theorem~\ref{thm:spencer-R}, let us consider the following choices:
\begin{equation}
  \label{eq:Epsilon2Data}
\kappa (\varepsilon)^\mu = \epsilon_{AB} {\overline \varepsilon}^A
\Gamma^\mu \varepsilon^B \; , \quad \vartheta (\varepsilon)^{AB} =
\tfrac{2}{3} H^{\mu\nu\rho} \, {\overline \varepsilon}^A
\Gamma_{\mu\nu\rho} \varepsilon^B~.
\end{equation}
For a given $\varepsilon \in \fk_{\bar 1}$, it will sometimes be
convenient to drop the parenthetical $\varepsilon$ in
\eqref{eq:Epsilon2Data} and write $\kappa^\mu = \epsilon_{AB} {\overline
  \varepsilon}^A \Gamma^\mu \varepsilon^B$ and $\omega^{AB}_{\mu\nu\rho}
= {\overline \varepsilon}^A \Gamma_{\mu\nu\rho} \varepsilon^B$ for the
Killing spinor bilinears.

Clearly \eqref{eq:Epsilon2Data} defines $\kappa \in {\mathfrak{X}}(M)$
and $\vartheta \in C^\infty (M) \otimes \fsp(1)$. However, for $\kappa
\in {\mathfrak{K}}$ and $\vartheta \in {\mathfrak{R}}$, we require all
of the following conditions to be satisfied:
\begin{equation}
  \label{eq:KappaThetaBackgroundPreserving}
\eL_\kappa g = 0 \; , \quad \eL_\kappa H = 0 \; , \quad {\hat
  \eL}_\kappa \varphi = 0 \; ,\quad  \iota_\kappa G = 0 \; , \quad {\hat
  \nabla} \vartheta =0 \; , \quad [ \varphi , \vartheta ] = 0~.
\end{equation}
We shall return to the important matter of checking whether these
conditions are actually satisfied in a moment but first let us just
assume that they are and move on to confirm the remaining Jacobi
identities.

The $[{\bar 0}{\bar 1}{\bar 1}]$ component of the Jacobi identity says
that the odd-odd bracket on $\fk$ must define a $\fk_{\bar
  0}$-equivariant map $\fk_{\bar 1} \otimes \fk_{\bar 1} \rightarrow
\fk_{\bar 0}$. This means that the $\fk_{\bar 1} \otimes \fk_{\bar 1}
\rightarrow {\mathfrak{K}}$ part must be ${\mathfrak{K}}$-equivariant
and ${\mathfrak{R}}$-invariant (since $[ {\mathfrak{R}} , {\mathfrak{K}}
] =0$) while the $\fk_{\bar 1} \otimes \fk_{\bar 1} \rightarrow
{\mathfrak{R}}$ part must be ${\mathfrak{R}}$-equivariant and
${\mathfrak{K}}$-invariant (since $[ {\mathfrak{K}} , {\mathfrak{R}} ]
=0$).

For any $X \in {\mathfrak{K}}(M)$ and $\varepsilon \in
{\mathfrak{S}}(M)$, we have the identities
\begin{equation}
\label{eq:LieXEpsilon2K}
[ X , \kappa ]  = 2 \kappa ({\hat \eL}_X \varepsilon , \varepsilon) \; ,
\quad {\hat \nabla}_X \vartheta = 2 \vartheta ( {\hat \eL}_X
\varepsilon,\varepsilon) + \tfrac{2}{3} ( \eL_X H )_{\mu\nu\rho}
\omega^{\mu\nu\rho}~.
\end{equation}
The first identity above guarantees that $\fk_{\bar 1} \otimes \fk_{\bar
  1} \rightarrow {\mathfrak{K}}$ is ${\mathfrak{K}}$-equivariant.
Moreover, if $X \in {\mathfrak{K}}$ and $\vartheta \in {\mathfrak{R}}$
then $\eL_X H =0$ and ${\hat \nabla \vartheta} =0$, in which case the
second identity in \eqref{eq:LieXEpsilon2K} says that $\fk_{\bar 1}
\otimes \fk_{\bar 1} \rightarrow {\mathfrak{R}}$ is indeed
${\mathfrak{K}}$-invariant.

For any $\rho \in C^\infty (M) \otimes \fsp(1)$ and $\varepsilon \in
{\mathfrak{S}}(M)$, we also have the identities
\begin{equation}
  \label{eq:LieRhoEpsilon2K}
[ \rho , \vartheta ]  = 2 \vartheta ( \rho \varepsilon,\varepsilon) \; ,
\quad  \kappa ( \rho \varepsilon,\varepsilon) =0~.
\end{equation}
The first identity above shows that $\fk_{\bar 1} \otimes \fk_{\bar 1}
\rightarrow {\mathfrak{R}}$ is ${\mathfrak{R}}$-equivariant while the
second identity shows that $\fk_{\bar 1} \otimes \fk_{\bar 1}
\rightarrow {\mathfrak{K}}$ is ${\mathfrak{R}}$-invariant.

Whence, at least if $\kappa \in {\mathfrak{K}}$ and $\vartheta \in
{\mathfrak{R}}$, we have shown that the $[{\bar 0}{\bar 1}{\bar 1}]$
component of the Jacobi identity is satisfied.

The final $[{\bar 1}{\bar 1}{\bar 1}]$ component of the Jacobi identity
is equivalent (via polarisation) to the condition
\begin{equation}
  \label{eq:Epsilon3}
  [ [ \varepsilon , \varepsilon ] , \varepsilon] = 0~,
\end{equation}
for all $\varepsilon \in \fk_{\bar 1}$. To examine this condition more
closely, it is worth noting the identity
\begin{equation}
  \label{eq:NablaKappa}
\nabla_\mu \kappa_\nu = 2 H_{\mu\nu\rho} \kappa^\rho -2
\varphi^\rho_{AB} \omega^{AB}_{\mu\nu\rho}~,
\end{equation}
which can be derived from the definition of $\kappa$ in
\eqref{eq:Epsilon2Data} using the Killing spinor equation
\eqref{eq:KillingSpinorIndices}. Using the brackets defined by
\eqref{eq:XEpsilonBracket}, \eqref{eq:RhoEpsilonBracket},
\eqref{eq:Epsilon2Bracket} and \eqref{eq:Epsilon2Data}, the left hand
side of \eqref{eq:Epsilon3} reads
\begin{equation}
  \label{eq:Epsilon32}
( {\hat \eL}_\kappa \varepsilon + \vartheta \varepsilon )^A = \kappa^\mu
{\hat \nabla}_\mu \varepsilon^A + \tfrac{1}{4} ( \nabla_\mu \kappa_\nu )
\Gamma^{\mu\nu} \varepsilon^A + \tfrac{1}{3} ( H^{\mu\nu\rho} - {\tilde
  H}^{\mu\nu\rho} ) \omega_{\mu\nu\rho}{}^A{}_B \varepsilon^B ~.
\end{equation}
We have used the identity $H^{\mu\nu\rho} \omega^{AB}_{\mu\nu\rho} =
H^{\mu\nu\rho} {\tilde \omega}^{AB}_{\mu\nu\rho} = - {\tilde
  H}^{\mu\nu\rho}  \omega^{AB}_{\mu\nu\rho}$ to identify the
contribution of $\vartheta^A{}_B$ in the third term of
\eqref{eq:Epsilon32} with $\rho(s,s)^A{}_B$ in
Theorem~\ref{thm:spencer-R}. Moreover, notice that \eqref{eq:NablaKappa}
allows us to identify the contribution of $\nabla_\mu \kappa_\nu$ in the
second term of \eqref{eq:Epsilon32} with $\gamma(s,s)_{\mu\nu}$ in
Theorem~\ref{thm:spencer-R}. Finally, the Killing spinor equation
\eqref{eq:KillingSpinorIndices} allows us to identify the contribution
of ${\hat \nabla}_\mu \varepsilon^A$ in the first term of
\eqref{eq:Epsilon32} with $( \beta_\mu s)^A$ in
Theorem~\ref{thm:spencer-R}. The vanishing of \eqref{eq:Epsilon32} is
therefore precisely equivalent to the second cocycle condition that was
already established in the proof of Theorem~\ref{thm:spencer-R}.

In summary, $\fk$ is a indeed a Lie superalgebra with respect to the
brackets we have chosen above provided every Killing spinor has $\kappa$
and $\vartheta$ in \eqref{eq:Epsilon2Data} obeying all the conditions in
\eqref{eq:KappaThetaBackgroundPreserving}. Let us now return to resolve
these conditions. For simplicity, we shall assume henceforth that the
connection $C$ is flat (i.e. $G=0$).

The condition $\eL_\kappa g =0$ in
\eqref{eq:KappaThetaBackgroundPreserving} follows immediately from
\eqref{eq:NablaKappa} (since \eqref{eq:NablaKappa} implies $\nabla
\kappa^\flat = \half d \kappa^\flat)$.

The condition $[ \varphi_\mu , \vartheta ] =0$ in
\eqref{eq:KappaThetaBackgroundPreserving} implies that, at each point in
$M$, either $\vartheta = 0$ or else $\varphi_\mu$ must be proportional
to $\vartheta$. Having assumed that $C$ is flat, if the condition ${\hat
  \nabla}_\mu \vartheta =0$ in \eqref{eq:KappaThetaBackgroundPreserving}
is satisfied, then we can always fix a gauge (i.e. for an appropriate
local $\Sp(1)$ transformation) in which $\vartheta$ is constant. So
either $\vartheta$ is identically zero on $M$ or else $\varphi_\mu =
\psi_\mu \vartheta$, for some $\psi \in \Omega^1 (M)$.

To make further progress, we now require the identity 
\begin{equation}
\label{eq:NablaOmega}
\begin{split}
{\hat \nabla}_\mu \omega^{AB}_{\nu\rho\sigma} =\; & 6 H_{\mu
  [\nu}{}^\tau \omega^{AB}_{\rho\sigma ]\tau} - 6 \varphi_\mu^{(A}{}_C
\omega^{B)C}_{\nu\rho\sigma} - 6 \varphi_{[\nu}^{(A}{}_C
\omega^{B)C}_{\rho\sigma ]\mu} \\
&+ 6 g_{\mu [\nu} ( \varphi_\rho^{AB} \kappa_{\sigma ]} + \varphi^{\tau
  \, (A}{}_C \omega^{B)C}_{\rho\sigma ]\tau} ) -
\varepsilon_{\mu\nu\rho\sigma\tau\theta} \varphi^{\tau \, AB}
\kappa^\theta~, 
\end{split}
\end{equation}
which can be derived (with some effort) from the definition of $\omega$
below \eqref{eq:Epsilon2Data} using the Killing spinor equation
\eqref{eq:KillingSpinorIndices}. Parentheses around indices denote
symmetrisation while brackets denote skew-symmetrisation (with weight
one in both cases). Skew-symmetrising $[\mu\nu\rho\sigma]$ in
\eqref{eq:NablaOmega} gives a useful subsidiary identity
\begin{equation}
  \label{eq:DOmega}
  {\hat \nabla}_{[\mu} \omega^{AB}_{\nu\rho\sigma ]} = 6
  H^-_{[\mu\nu}{}^\tau \omega^{AB}_{\rho\sigma ]\tau} -
  \varepsilon_{\mu\nu\rho\sigma\tau\theta} \varphi^{\tau \, AB} \kappa^\theta~,  
\end{equation}
where $H^\pm = \half ( H \pm {\tilde H}) \in \Omega^3_\pm (M)$ denotes
the self-dual and anti-self-dual projections of $H$. The fact that only
$H^-$ appears in \eqref{eq:DOmega} is due to the identity
\begin{equation}
  \label{eq:SD1}
  X^\pm_{[ \mu\nu}{}^\tau Y^\pm_{\rho\sigma ]\tau} = 0~,  
\end{equation}
which holds for any $X^\pm , Y^\pm \in \Omega^3_\pm (M)$ with the same
chirality (c.f. Lemma~\ref{lem:sd2}).

The identity \eqref{eq:DOmega} defines $d^{\hat \nabla} \omega$. Having
assumed that $G=0$, acting with $\star d^{\hat \nabla}$ on $d^{\hat
  \nabla} \omega$ must give zero. Using \eqref{eq:DOmega} together with
\eqref{eq:NablaOmega} and \eqref{eq:NablaKappa} to evaluate this
operation yields (after some simplification) the following expression
for the (gauged) Lie derivative of $\varphi$ along $\kappa$:
\begin{equation}
  \label{eq:LKappaPhi}
  \begin{split}
    {\hat \eL}_\kappa \varphi^{AB}_{\mu} =\; & \kappa_\mu {\hat \nabla}^\nu \varphi_\nu^{AB} + 2 H^-_{\mu\nu\rho} \varphi^{\nu\, AB} \kappa^\rho + \varphi_\mu^{(A}{}_C \vartheta^{B)C} + H^{-\, \nu\rho\sigma} \varphi_\sigma^{(A}{}_C \omega^{B)C}_{\mu\nu\rho} \\
    &+\tfrac{1}{6} ( \nabla_\mu H^-_{\nu\rho\sigma} - 6 H^+_{\mu [\nu}{}^\tau H^-_{\rho\sigma ]\tau} ) \omega^{AB\, \nu\rho\sigma}  - g^{\sigma\tau} ( \nabla_\tau H^-_{\nu\rho\sigma} - 6 H^+_{\tau [\nu}{}^\theta H^-_{\rho\sigma ]\theta} ) \omega_\mu^{AB\, \nu\rho}~. 
  \end{split}
\end{equation}
Notice that the term $\varphi_\mu^{(A}{}_C \vartheta^{B)C}$ in the first
line vanishes as a consequence of the condition $[ \varphi , \vartheta ]
=0$ in \eqref{eq:KappaThetaBackgroundPreserving}. The remaining terms in
\eqref{eq:LKappaPhi} would vanish identically if $\varphi$ and $H$ obey
\begin{equation}
  \label{eq:DStarPhi}
  {\hat \nabla}^\nu \varphi_\nu^{AB} = 0 \; , \quad H^-_{\mu\nu\rho}
  \varphi^{\rho\, AB} =0 \; , \quad \nabla_\mu H^-_{\nu\rho\sigma} - 6
  H^+_{\mu [\nu}{}^\tau H^-_{\rho\sigma ]\tau} =0~.
\end{equation}
To articulate the third condition in \eqref{eq:DStarPhi} more easily,
let us define the connection $\nabla^+_X Y = \nabla_X Y + 2 h^+ (X,Y)$,
with skew-symmetric torsion defined by $g(h^+(X,Y),Z) = H^+ (X,Y,Z)$,
for all $X,Y,Z \in {\mathfrak{X}} (M)$. The third condition in
\eqref{eq:DStarPhi} just says that $\nabla^+ H^- =0$.

The action of ${\hat \nabla}$ on $\vartheta$ can be evaluated using the
definition \eqref{eq:Epsilon2Data} together with the identity
\eqref{eq:NablaOmega}. After some simplification, and applying the
condition $[ \varphi , \vartheta ] =0$, this gives
\begin{equation}
  \label{eq:DTheta}
  {\hat \nabla}_\mu \vartheta^{AB} = ( \nabla_\mu H^-_{\nu\rho\sigma} -
  6 H^+_{\mu [\nu}{}^\tau H^-_{\rho\sigma ]\tau} ) \omega^{AB\,
    \nu\rho\sigma} + 12 H^-_{\mu\nu\rho} \varphi^{\nu\, AB} \kappa^\rho
  -12 H^{-\, \nu\rho\sigma} \varphi_\sigma^{(A}{}_C
  \omega^{B)C}_{\mu\nu\rho}~.
\end{equation}
So ${\hat \nabla} \vartheta =0$ if $\varphi$ and $H$ obey the second two
conditions in \eqref{eq:DStarPhi}.

To summarise, thus far we have shown that the three conditions on
$\varphi$ and $H$ in \eqref{eq:DStarPhi}, together with $[ \varphi ,
\vartheta ] =0$ and $G=0$, are sufficient to guarantee that all the
conditions except $\eL_\kappa H =0$ in
\eqref{eq:KappaThetaBackgroundPreserving} are satisfied.

Taking the exterior derivative of the exact two-form defined by
\eqref{eq:NablaKappa} (i.e. $d^2 \kappa^\flat =0$) and using the
identity \eqref{eq:NablaOmega} provides us with the following expression
for the Lie derivative of $H$ along $\kappa$:
\begin{equation}
  \label{eq:LKappaH}
\eL_\kappa H_{\mu\nu\rho} = - 4 \kappa^\sigma \nabla_{[\mu}
H_{\nu\rho\sigma]} + 3 \omega^{AB}_{[\mu\nu}{}^\sigma ( {\hat
  \nabla}_{\rho ]} \varphi_\sigma + 2 [ \varphi_{\rho ]} ,
\varphi_\sigma ]  + 2 H_{\rho ]\sigma\tau} \varphi^\tau )_{AB}  +24
H^-_{[\mu\nu}{}^\tau \omega^{AB}_{\rho\sigma ] \tau} \varphi^\sigma_{AB}~.
\end{equation}
Even if we demand that $H$ is closed (so that the first term in
\eqref{eq:LKappaH} vanishes identically), in general the three
conditions on $\varphi$ and $H$ in \eqref{eq:DStarPhi} will not be
sufficient to guarantee that $\eL_\kappa H = 0$. In order to proceed, we
will now consider two special cases that yield distinct branches of
solutions of all the conditions in
\eqref{eq:KappaThetaBackgroundPreserving}. That is, of course, not to say that all solutions must necessarily lie on one of these two branches. It is merely a simplifying assumption we shall make in order to find interesting solutions. Having said that, in Section~\ref{sec:maxim-supersymm-back}, we will discover that all the maximally supersymmetric backgrounds do actually lie on one of these two branches of solutions.  

The first branch is defined by taking
\begin{equation}
\label{eq:BranchOne}
d^{\hat \nabla} {\star} \varphi = 0 \; , \quad H^- = 0 \; , \quad dH = 0~. 
\end{equation}
The first two conditions above ensure \eqref{eq:DStarPhi} are satisfied
and therefore all the conditions except $\eL_\kappa H = 0$ in
\eqref{eq:KappaThetaBackgroundPreserving} are guaranteed. Notice that
$H$ being self-dual implies that $\vartheta =0$ identically (i.e.
$[\fk_{\bar 1},\fk_{\bar 1}] \subset {\mathfrak{K}}$). From
\eqref{eq:LKappaH}, we see that
\begin{equation}
\label{eq:LKappaHBranchOne}
\eL_\kappa H_{\mu\nu\rho} = 3 \omega^{AB}_{[\mu\nu}{}^\sigma ( {\hat
  \nabla}_{\rho ]} \varphi_\sigma + 2 [ \varphi_{\rho ]} ,
\varphi_\sigma ]  + 2 H_{\rho ]\sigma\tau} \varphi^\tau )_{AB}~,
\end{equation}
and it is not obvious that the right hand side is zero. To prove that
this is in fact the case, notice that $\eL_\kappa \kappa = [ \kappa ,
\kappa ] =0$ and ${\hat \eL}_\kappa \omega = ( \iota_\kappa d^{\hat
  \nabla} + d^{\hat \nabla}  \iota_\kappa ) \omega = 0$ (using
$\iota_\kappa d^{\hat \nabla} \omega = 0$ from \eqref{eq:DOmega} and
$\iota_\kappa \omega =0$ from Lemma~\ref{lem:kwzero}). Therefore,
because we have already ensured that ${\hat \eL}_\kappa \varphi = 0$,
using \eqref{eq:NablaKappa} to evaluate $\eL_\kappa \nabla \kappa^\flat$
and \eqref{eq:NablaOmega} to evaluate ${\hat \eL}_\kappa {\hat \nabla}
\omega$, we deduce that
\begin{equation}
  \label{eq:LKappaHBranchOneConditions}
\kappa^\mu \eL_\kappa H_{\mu\nu\rho} = 0 \; , \quad \eL_\kappa H_{\mu [
  \nu}{}^\tau \omega^{AB}_{\rho\sigma ]\tau} = 0~.
\end{equation}
But since $\eL_\kappa H$ and $\omega^{AB}$ are self-dual three-forms,
the identity \eqref{eq:SD1} implies that the second condition above is
equivalent to
\begin{equation}
  \label{eq:LKappaHBranchOneConditionsEquivalent}
  \omega^{AB}_{\mu [ \nu}{}^\tau {\eL_\kappa H}_{\rho\sigma ]\tau} = 0~.
\end{equation}
So we have self-dual three-forms $\eL_\kappa H$  and $\omega^{AB}$
obeying \eqref{eq:LKappaHBranchOneConditionsEquivalent}, with
$\iota_\kappa \eL_\kappa H = 0$ (from
\eqref{eq:LKappaHBranchOneConditions}) and $\iota_\kappa \omega^{AB} =
0$ (from Lemma~\ref{lem:kwzero}). Furthermore, the Killing spinor
bilinears $\kappa$ and $\omega^{AB}$ are nowhere vanishing and $\kappa$
is everywhere null. From Lemma~\ref{lem:proportional} (identifying
$\alpha = \eL_\kappa H$, $\beta = \omega^{AB}$ and $N = \kappa$), it
therefore follows that $\eL_\kappa H$ must equal some locally defined
function multiplying $\omega^{AB}$, for any choice of $A$ and $B$. In
particular, we must have
\begin{equation}
  \label{eq:LKappaHBranchOneProportional}
\eL_\kappa H = f_1 \omega^{11} = f_2 \omega^{12} = f_3 \omega^{22}~,
\end{equation}
for some locally defined functions $f_1$, $f_2$ and $f_3$. However, at any point in $M$, we know from Lemma~\ref{lem:linindep} that $\omega^{11}$, $\omega^{12}$ and $\omega^{22}$ are linearly independent. Therefore \eqref{eq:LKappaHBranchOneProportional} can only be true if $\eL_\kappa H =0$, as required. So the
conditions \eqref{eq:BranchOne} for this branch do indeed imply
\eqref{eq:KappaThetaBackgroundPreserving} and therefore guarantee the
existence of a Killing superalgebra. Since $\vartheta = 0$ here, it is
possible to define a Killing superalgebra $\fk$ with $\fk_{\bar 0} =
{\mathfrak{K}}$, i.e. ignoring ${\mathfrak{R}}$ completely. The Killing
superalgebras in this case are therefore naturally associated with the
Spencer cohomology calculation that led to Theorem~\ref{thm:spencer},
where we ignored the R-symmetry.   

% However,
% making use of the Fierz identity \eqref{eq:Fierz}, it is a simple matter
% to deduce the identity
% \begin{equation}
%   \label{eq:Omega2Identity}
%   \omega^{AB}_{\mu [ \nu}{}^\tau \omega^{CD}_{\rho\sigma ]\tau} = \tfrac{1}{3} \kappa_\mu (  \epsilon^{A(C} \omega^{D)B}_{\nu\rho\sigma} + \epsilon^{B(C} \omega^{D)A}_{\nu\rho\sigma} )~.
% \end{equation}
% Consequently,
% \begin{equation}
% \label{eq:Omega2Identity111222}
% \omega^{11}_{\mu [ \nu}{}^\tau \omega^{12}_{\rho\sigma ]\tau} = -
% \tfrac{1}{3} \kappa_\mu \omega^{11}_{\nu\rho\sigma} \; , \quad
% \omega^{22}_{\mu [ \nu}{}^\tau \omega^{12}_{\rho\sigma ]\tau} =
% \tfrac{1}{3} \kappa_\mu \omega^{22}_{\nu\rho\sigma} \; , \quad
% \omega^{11}_{\mu [ \nu}{}^\tau \omega^{22}_{\rho\sigma ]\tau} = -
% \tfrac{2}{3} \kappa_\mu \omega^{12}_{\nu\rho\sigma}~,
% \end{equation}
% are all nowhere vanishing. Therefore substituting, say, the relation
% $\eL_\kappa H = f_2 \omega^{12}$ from
% \eqref{eq:LKappaHBranchOneProportional} back into
% \eqref{eq:LKappaHBranchOneConditionsEquivalent} with $A=B=1$ would imply
% (using the first identity in \eqref{eq:Omega2Identity111222}) that $f_2
% =0$. 

The second branch is defined by taking
\begin{equation}
\label{eq:BranchTwo}
\varphi = 0 \; , \quad \nabla^+ H^- = 0 \; , \quad dH = 0~. 
\end{equation}
The first two conditions above ensure that \eqref{eq:DStarPhi} are
satisfied while the third condition ensures that $\eL_\kappa H = 0$.
Thus, all the conditions in \eqref{eq:KappaThetaBackgroundPreserving}
are satisfied and the existence of a Killing superalgebra is guaranteed.
Of course, if $H^- \neq 0$, we could have $\vartheta \neq 0$ for one or
more Killing spinors on this branch, in which case ${\mathfrak{R}}$ must
be included in the Killing superalgebra, just as we would expect from
Theorem~\ref{thm:spencer-R}.

\section{Maximally supersymmetric backgrounds}
\label{sec:maxim-supersymm-back}

We now investigate which geometries admit the ``maximal number of Killing
spinors''; that is, for which the dimension of the space of solutions to
equation~\eqref{eq:KillingSpinor} is maximal.  This is equivalent to
demanding that the curvature of the connection
\begin{equation}
  \label{eq:superconnection}
  \eD_X = \nabla_X - \iota_X H + X \cdot \varphi + 2 \varphi \cdot X
\end{equation}
vanishes.  As usual $\varphi$ is a one-form with values in the Lie
algebra of the R-symmetry and $H$ a $3$-form.  We will not take
$H$ to be self-dual at this moment, but will comment on any
simplifications which result from that assumption.

\subsection{The curvature of the superconnection}
\label{sec:curv-superc}

Let us write the connection as $\eD = \nabla - \beta$, where
\begin{equation}
  (\beta_\mu \varepsilon)^A = \tfrac12
  H_{\mu\rho\sigma}\Gamma^{\rho\sigma} \varepsilon^A - 3 \varphi_\mu{}^A{}_B
  \varepsilon^B + \varphi^{\sigma\, A}{}_B \Gamma_{\mu\sigma} \varepsilon^B.
\end{equation}
The curvature $\eR$ of $\eD$ is given by
\begin{equation}
  \eR_{\mu\nu}{}^A{}_B = \tfrac14 R_{\mu\nu\alpha\beta}
  \Gamma^{\alpha\beta} \delta^A{}_B + \nabla_\mu \beta_\nu{}^A{}_B -
  \nabla_\nu\beta_\mu{}^A{}_B - [\beta_\mu,\beta_\nu]^A{}_B,
\end{equation}
which can be decomposed into components
\begin{equation}
  \label{eq:curvature}
  \eR_{\mu\nu}{}^A{}_B = \tfrac12 \eT_{\mu\nu\alpha\beta}
  \Gamma^{\alpha\beta} \delta^A{}_B + \eU_{\mu\nu}{}^A{}_B + \tfrac12
  \eV_{\mu\nu\alpha\beta}{}^A{}_B \Gamma^{\alpha\beta}.
\end{equation}
The tensor $\eT$ is skew-symmetric in the first and last pairs of
indices, so it is a section through a bundle associated to the
representation $\Lambda^2V \otimes \Lambda^2V$ of $\fso(V)$. Similarly
$\eU$ (resp. $\eV$) is a section through a bundle associated to the
representation $\Lambda^2 V\otimes\fsp(1)$ (resp.
$\Lambda^2 V\otimes\Lambda^2 V\otimes\fsp(1)$) of
$\fso(V)\oplus\fsp(1)$.

The explicit expressions of the components are
\begin{multline}
  \label{eq:T-component}
  \eT_{\mu\nu\alpha\beta} = \tfrac12 R_{\mu\nu\alpha\beta} + \nabla_\mu
  H_{\nu\alpha\beta} - \nabla_\nu H_{\mu\alpha\beta} + 2 \left(
    \left<H_{\mu\alpha}, H_{\nu\beta}\right> -
    \left<H_{\mu\beta},H_{\nu\alpha}\right>\right)\\
  + \left( g_{\mu\alpha}
    \left<\varphi_\beta,\varphi_\nu\right> - g_{\mu\beta}
    \left<\varphi_\alpha,\varphi_\nu\right> - g_{\nu\alpha}
    \left<\varphi_\beta,\varphi_\mu\right> + g_{\nu\beta}
    \left<\varphi_\alpha,\varphi_\mu\right>\right)\\
  - \left<\varphi^\lambda,\varphi_\lambda\right> (g_{\mu\alpha} g_{\nu\beta} -
  g_{\mu\beta} g_{\nu\alpha}),
\end{multline}
with $\left<H_{\mu\nu},H_{\alpha\beta}\right> := H_{\mu\nu}{}^\lambda
H_{\alpha\beta\lambda}$ and $\left<\varphi_\mu,\varphi_\nu\right> = \varphi_\mu{}^{AB}
\varphi_{\nu\, AB}$, and, omitting the R-symmetry indices, 
\begin{equation}
  \label{eq:U-component}
  \eU_{\mu\nu} = - 3 \left(\nabla_\mu \varphi_\nu - \nabla_\nu
    \varphi_\mu\right) - 8 [\varphi_\mu, \varphi_\nu],
\end{equation}
and
\begin{multline}
  \label{eq:V-component}
  \eV_{\mu\nu\alpha\beta} = \nabla_\mu \varphi_\beta g_{\nu\alpha} -
  \nabla_\mu \varphi_\alpha g_{\nu\beta} - \nabla_\nu \varphi_\beta g_{\mu\alpha} +
  \nabla_\nu \varphi_\alpha g_{\mu\beta} + 4 \left( H_{\mu\nu\alpha} \varphi_\beta -
    H_{\mu\nu\beta} \varphi_\alpha \right)\\
  - 2 \left( H_{\mu\beta\lambda} g_{\nu\alpha} - H_{\mu\alpha\lambda} g_{\nu\beta}
    - H_{\nu\beta\lambda} g_{\mu\alpha} + H_{\nu\alpha\lambda} g_{\mu\beta} \right) \varphi^\lambda -
  \epsilon_{\mu\nu\alpha\beta\rho\sigma} [\varphi^\rho,\varphi^\sigma]\\
  + 3\left( [\varphi_\mu,\varphi_\beta] g_{\nu\alpha} - [\varphi_\mu,\varphi_\alpha]
    g_{\nu\beta} -  [\varphi_\nu,\varphi_\beta] g_{\mu\alpha} + [\varphi_\nu,\varphi_\alpha]
    g_{\mu\beta} \right).
\end{multline}

\subsection{Zero curvature conditions}
\label{sec:zero-curv-cond}

We begin our analysis of the zero curvature conditions.  Let us start
with the condition $\eT_{\mu\nu\alpha\beta} = 0$.  The tensor $\eT$ is a section
through a bundle associated to $\Lambda^2V \otimes
\Lambda^2V$ and this representation decomposes as follows
into irreducible components under $\fso(V)$:
\begin{equation}
  \begin{split}
    \Lambda^2 V \otimes \Lambda^2 V &= \odot^2(\Lambda^2 V) \oplus
    \Lambda^2(\Lambda^2 V)\\
    &= \left( \Lambda^4 V \oplus \Lambda^0V \oplus \odot^2_0 V \oplus W
    \right) \oplus \left( \Lambda^2 V \oplus (V \otimes \Lambda^3_+V)_0
      \oplus (V \otimes \Lambda^3_- V)_0\right),
  \end{split}
\end{equation}
where $(V \otimes \Lambda^3_\pm V)_0$ is the kernel of the natural contraction $V
\otimes \Lambda^3_\pm V \to \Lambda^2 V$ and $W$ is the module of Weyl curvature tensors.

Setting the $\Lambda^2(\Lambda^2 V)$ component to zero, we find
\begin{equation}
  \begin{split}
    0 &= \eT_{\mu\nu\alpha\beta} - \eT_{\alpha\beta\mu\nu}\\
    &= \nabla_\mu H_{\nu\alpha\beta} - \nabla_\nu H_{\mu\alpha\beta} -
    \nabla_\alpha H_{\beta\mu\nu} + \nabla_\beta H_{\alpha\mu\nu},
  \end{split}
\end{equation}
which we can rewrite as
\begin{equation}
  \label{eq:1}
  \nabla_{[\mu} H_{\nu]\alpha\beta} =   \nabla_{[\alpha} H_{\beta]\mu\nu}.
\end{equation}

Skew-symmetrising in the last three indices we find
\begin{equation}
  \label{eq:T-component-skew-3}
  \begin{split}
    0 &= \eT_{\mu[\nu\alpha\beta]}\\
    &= \nabla_\mu H_{\nu\alpha\beta} - \tfrac13 \left( \nabla_\nu
      H_{\alpha\beta\mu}+ \nabla_\alpha
      H_{\beta\nu\mu} + \nabla_\beta
      H_{\nu\alpha\mu}\right) + \tfrac43
    \left(\left<H_{\mu\alpha},H_{\nu\beta}\right> -
      \left<H_{\mu\beta},H_{\nu\alpha}\right> -
      \left<H_{\mu\nu},H_{\alpha\beta}\right> \right),
  \end{split}
\end{equation}
whereas completely skew-symmetrising gives
\begin{equation}
  \label{eq:T-component-skew-all}
  \begin{split}
    0 &= \eT_{[\mu\nu\alpha\beta]}\\
    &= \nabla_{[\mu} H_{\nu]\alpha\beta} + \nabla_{[\alpha}
    H_{\beta]\mu\nu} + \tfrac43 \left(\left<H_{\mu\alpha},H_{\nu\beta}\right> -
      \left<H_{\mu\beta},H_{\nu\alpha}\right> -
      \left<H_{\mu\nu},H_{\alpha\beta}\right> \right).
  \end{split}
\end{equation}
Comparing the two equations  \eqref{eq:T-component-skew-3} and   \eqref{eq:T-component-skew-all} we see that
\begin{equation}
  \label{eq:eq3}
  \nabla_{[\mu} H_{\nu]\alpha\beta} + \nabla_{[\alpha} H_{\beta]\mu\nu}
  = 2 \nabla_\mu H_{\nu\alpha\beta},
\end{equation}
which together with equation~\eqref{eq:1} gives
\begin{equation}
  \label{eq:eq4}
  \nabla_\mu H_{\nu\alpha\beta} + \nabla_\nu H_{\mu\alpha\beta} = 0,
\end{equation}
which says that the covariant derivative of $H$ is a $4$-form. In other words $\nabla H=\tfrac14 dH$ and $H$ is a
coclosed conformal Killing $3$-form, or a Killing $3$-form in the
nomenclature of \cite{SemmelmanCKForms}.

The totally skew-symmetric component then finally gives
\begin{equation}
  \label{eq:eq5}
  \nabla_\mu H_{\nu\alpha\beta} = \tfrac23 \left(
    \left<H_{\mu\nu},H_{\alpha\beta}\right>  -
    \left<H_{\mu\alpha},H_{\nu\beta}\right> +
    \left<H_{\mu\beta},H_{\nu\alpha}\right> \right).
\end{equation}

The algebraic curvature tensor components ($\Lambda^0V \oplus \odot^2_0
V \oplus W$) of $\eT$ give the Riemann tensor in terms of $H$ and $\varphi$.

The vanishing of the $\eU$-component of the curvature gives the equation
\begin{equation}
  \label{eq:U-comp-zero}
  0 = \tfrac12\eU_{\mu\nu} = -3 \nabla_{[\mu} \varphi_{\nu]} - 4 [\varphi_\mu, \varphi_\nu].
\end{equation}

Next let us consider the vanishing of the $\eV$-component of the
curvature.  Totally skew-symmetrising and omitting the R-symmetry
indices, we obtain
\begin{equation}
  \label{eq:V-skew}
    0 = \eV_{[\mu\nu\alpha\beta]} = 8 H_{[\mu\nu\alpha} \varphi_{\beta]}
    - \epsilon_{\mu\nu\alpha\beta\rho\sigma}
    [\varphi^\rho,\varphi^\sigma],
\end{equation}
which can be rewritten as
\begin{equation}
  \label{eq:V-skew-too}
  [\varphi_\mu, \varphi_\nu] = \widetilde H_{\mu\nu\lambda} \varphi^\lambda,
\end{equation}
with $\widetilde H$ the Hodge dual of $H$.

The component along $\Lambda^2(\Lambda^2 V)$ gives
\begin{equation}
  \label{eq:l2l2v}
  \begin{split}
    0 &= \tfrac12 (\eV_{\mu\nu\alpha\beta} - \eV_{\alpha\beta\mu\nu})\\
    &= \nabla_{[\mu} \varphi_{\beta]} g_{\nu\alpha} - \nabla_{[\mu}
    \varphi_{\alpha]} g_{\nu\beta} - \nabla_{[\nu} \varphi_{\beta]} g_{\mu\alpha} +
    \nabla_{[\nu} \varphi_{\alpha]} g_{\mu\beta}\\
    & \quad {} + 2 \left( H_{\mu\nu\alpha} \varphi_\beta - H_{\mu\nu\beta} \varphi_\alpha
      - H_{\alpha\beta\mu} \varphi_\nu + H_{\alpha\beta\nu} \varphi_\mu\right)\\
    & \quad {} - 2 \left( H_{\mu\beta\lambda} g_{\nu\alpha} - H_{\mu\alpha\lambda} g_{\nu\beta}
    - H_{\nu\beta\lambda} g_{\mu\alpha} + H_{\nu\alpha\lambda} g_{\mu\beta} \right) \varphi^\lambda\\
  & \quad {} + 3\left( [\varphi_\mu,\varphi_\beta] g_{\nu\alpha} - [\varphi_\mu,\varphi_\alpha]
    g_{\nu\beta} -  [\varphi_\nu,\varphi_\beta] g_{\mu\alpha} + [\varphi_\nu,\varphi_\alpha]
    g_{\mu\beta} \right).
  \end{split}
\end{equation}
Using equations~\eqref{eq:U-comp-zero} and \eqref{eq:V-skew-too}, we may rewrite this equation as
\begin{equation}
  \label{eq:l2l2v-also}
  \begin{split}
    0 &= 2 \left( H_{\mu\nu\alpha} \varphi_\beta - H_{\mu\nu\beta} \varphi_\alpha
      - H_{\alpha\beta\mu} \varphi_\nu + H_{\alpha\beta\nu} \varphi_\mu\right)\\
    & \quad {} - 2 \left( H_{\mu\beta\lambda} g_{\nu\alpha} - H_{\mu\alpha\lambda} g_{\nu\beta}
    - H_{\nu\beta\lambda} g_{\mu\alpha} + H_{\nu\alpha\lambda} g_{\mu\beta}\right) \varphi^\lambda\\
  & \quad {} + \tfrac53 \left(\widetilde H_{\mu\beta\lambda} g_{\nu\alpha} - \widetilde H_{\mu\alpha\lambda} g_{\nu\beta}
    - \widetilde H_{\nu\beta\lambda} g_{\mu\alpha} + \widetilde
    H_{\nu\alpha\lambda} g_{\mu\beta} \right) \varphi^\lambda.
  \end{split}
\end{equation}
Contracting with $g^{\nu\alpha}$, we find
\begin{equation}
  \label{eq:trace}
  \tfrac53 \widetilde H_{\mu\beta\lambda} \varphi^\lambda = H_{\mu\beta\lambda} \varphi^\lambda,
\end{equation}
and reinserting this into equation~\eqref{eq:l2l2v-also}, we arrive at
\begin{equation}
  \label{eq:l2l2v-too}
    2 \left( H_{\mu\nu\alpha} \varphi_\beta - H_{\mu\nu\beta} \varphi_\alpha -
      H_{\alpha\beta\mu} \varphi_\nu + H_{\alpha\beta\nu} \varphi_\mu\right) =
    \left( H_{\mu\beta\lambda} g_{\nu\alpha} - H_{\mu\alpha\lambda}
      g_{\nu\beta} - H_{\nu\beta\lambda} g_{\mu\alpha} +
      H_{\nu\alpha\lambda} g_{\mu\beta}\right) \varphi^\lambda.
\end{equation}

\begin{lemma}
  \label{lem:HA}
  Equation~\eqref{eq:l2l2v-too} is equivalent to $H_{\mu\nu\alpha}
  \varphi_\beta = 0$.
\end{lemma}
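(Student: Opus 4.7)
The reverse direction is immediate, since both sides of \eqref{eq:l2l2v-too} are visibly linear in the tensor $H\otimes\varphi$. For the forward direction I plan to argue pointwise: as \eqref{eq:l2l2v-too} is linear in $\varphi$, I may work with a single R-symmetry component and treat $\varphi$ as a real-valued $1$-form. Fix $p\in M$ and suppose $\varphi|_p\neq 0$; the goal is to show $H|_p=0$.

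The first stage is to deduce the algebraic consequence $\imath_{\varphi^\sharp}H=0$, i.e.\ that $A_{\mu\nu}:=H_{\mu\nu\lambda}\varphi^\lambda$ vanishes. I would contract \eqref{eq:l2l2v-too} with $\varphi^\mu$; several terms vanish automatically because $H$ is totally antisymmetric while $\varphi^\mu\varphi^\lambda$ is symmetric (after a suitable cyclic rewriting of $H$), and what survives is a $4$-index identity linear in $H$, $A$ and $\varphi$. Antisymmetrising this identity in an appropriate pair of indices eliminates the $H$-term and leaves
\begin{equation*}
A_{\nu\beta}\varphi_\alpha+A_{\alpha\beta}\varphi_\nu=0.
\end{equation*}
A further contraction with $\varphi^\nu$, using once more that $\varphi^\nu A_{\nu\beta}=\varphi^\nu\varphi^\lambda H_{\nu\beta\lambda}=0$ by the same symmetric-antisymmetric mechanism, gives $|\varphi|^2 A=0$. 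When $\varphi|_p$ is non-null this immediately forces $A=0$; when $\varphi|_p$ is null I would pass to a Witt basis with $\varphi^\sharp=\be_+$ (so only $\varphi_-$ is non-zero), specialise the displayed identity first to $\nu=\alpha=-$ and then to $\nu=-$, and extract $A_{\alpha\beta}=0$ in a couple of lines.

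Once $\imath_{\varphi^\sharp}H=0$ is secured, every term on the RHS of \eqref{eq:l2l2v-too} is a contraction of $H$ with $\varphi^\lambda$ and so vanishes; the identity then collapses to the tensor equation
\begin{equation*}
T_{\mu\nu\alpha\beta}:=H_{\mu\nu\alpha}\varphi_\beta-H_{\mu\nu\beta}\varphi_\alpha-H_{\alpha\beta\mu}\varphi_\nu+H_{\alpha\beta\nu}\varphi_\mu=0.
\end{equation*}
Since $\varphi|_p\neq 0$, I choose $N\in T_pM$ with $\varphi(N)=1$ and evaluate $T=0$ with its fourth slot equal to $N$, which expresses $H(X,Y,Z)$ entirely in terms of $H(\cdot,\cdot,N)$ and $\varphi$. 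Specialising further to $X=N$ and using the total antisymmetry of $H$ together with the cyclic identity $H(Z,N,Y)=H(N,Y,Z)$, the resulting expression collapses to $H(N,Y,Z)=-H(N,Y,Z)$, so $\imath_N H=0$. Varying $N$ over the affine hyperplane $\{W:\varphi(W)=1\}$, the differences $N_1-N_2$ sweep out $\ker\varphi$, hence linearity of the interior product combined with $\imath_N H=0$ for a single $N\notin\ker\varphi$ yields $\imath_X H=0$ for every $X\in T_pM$, whence $H|_p=0$.

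The main technical obstacle I anticipate is the null case of the first stage: the naive contraction of \eqref{eq:l2l2v-too} with $\varphi^\mu\varphi^\nu$ is trivially satisfied, so one cannot avoid a short Witt-basis computation in order to force $A=0$ when $|\varphi|^2=0$. All the remaining steps are then clean and basis-independent.
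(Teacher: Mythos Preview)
Your argument is correct and follows essentially the same strategy as the paper's: work pointwise with a single R-symmetry component $a$ of $\varphi$, first establish $\iota_{a^\sharp}H=0$ by contracting \eqref{eq:l2l2v-too} with $a$ and exploiting the symmetric/antisymmetric clash, then reduce \eqref{eq:l2l2v-too} to the tensor equation $T=0$ and kill $H$ by contracting with a vector dual to $a$ (your $N$ plays exactly the role of the paper's $b$ with $a_\mu b^\mu=1$). One small slip: to eliminate the $H$-term from your three-index identity you must \emph{symmetrise} in a pair of indices (the surviving $H$-term is totally antisymmetric, so antisymmetrising leaves it intact); symmetrising in $(\nu,\alpha)$ does yield your displayed equation $A_{\nu\beta}\varphi_\alpha+A_{\alpha\beta}\varphi_\nu=0$ as claimed. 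The paper avoids your null/non-null case split by a slightly different chain of contractions (via $H_{\mu\nu\alpha}a^2=0$ and then equations of the form $a_\beta A_{\nu\alpha}=a_{[\nu}A_{\alpha]\beta}$) that handles both causal types uniformly, at the cost of one or two more intermediate identities; your Witt-basis treatment of the null case is perfectly valid and arguably more transparent. The final steps differ only cosmetically: the paper derives $H_{\mu\nu(\alpha}a_{\beta)}=0$ and contracts once with $b$, whereas you vary $N$ over the affine hyperplane $\{\varphi=1\}$; both extract $H=0$ in the same way.
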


\begin{proof}
  Let $p \in M$ and suppose that $\varphi|(p)\neq 0$, so that some component
  $\varphi^{AB}$ is different from zero at $p$.  We will let $a =\varphi^{AB}(p)$ and
  show that $H(p) = 0$.

  Equation~\eqref{eq:l2l2v-too} for the component $a$ becomes
  \begin{equation}
    \label{eq:HAHA}
    2 \left( H_{\mu\nu\alpha} a_\beta - H_{\mu\nu\beta} a_\alpha -
      H_{\alpha\beta\mu} a_\nu + H_{\alpha\beta\nu} a_\mu\right) =
    \left( H_{\mu\beta\lambda} g_{\nu\alpha} - H_{\mu\alpha\lambda}
      g_{\nu\beta} - H_{\nu\beta\lambda} g_{\mu\alpha} +
      H_{\nu\alpha\lambda} g_{\mu\beta}\right) a^\lambda.
  \end{equation}
  Skew-symmetrising in $[\mu\nu\alpha]$, we find
  \begin{equation}
    \label{eq:HAHA2}
    H_{\mu\nu\alpha} a_\beta + H_{\beta[\mu\nu} a_{\alpha]} =
    g_{\beta[\mu} H_{\nu\alpha]\lambda} a^\lambda.
  \end{equation}
  Contracting equation~\eqref{eq:HAHA} with $a^\beta$, we find that
  \begin{equation}
    \label{eq:HAA}
    H_{\mu\nu\alpha} a^2 = 0,
  \end{equation}
  whereas contracting equation~\eqref{eq:HAHA2} with $a^\mu$ and using
  equation~\eqref{eq:HAA}, we arrive at
  \begin{equation}
    \label{eq:AHA}
    a_\beta H_{\nu\alpha\lambda} a^\lambda = a_{[\nu}
    H_{\alpha]\beta\lambda} a^\lambda.
  \end{equation}
  Let's multiply this equation by $a_\mu$ and skew-symmetrise in
  $[\mu\nu\alpha]$ to obtain
  \begin{equation}
    a_\beta a_{[\mu} H_{\nu\alpha]\lambda}a^\lambda = 0,
  \end{equation}
  which, since $a\neq 0$, is equivalent to
  \begin{equation}
    \label{eq:AHA2}
    a_{[\mu} H_{\nu\alpha]\lambda}a^\lambda = 0.
  \end{equation}
  Adding equations~\eqref{eq:AHA} and \eqref{eq:AHA2}, we arrive at
  \begin{equation}
    \label{eq:AcontractH}
    H_{\nu\alpha\lambda} a^\lambda = 0.
  \end{equation}
  Inserting this into equation~\eqref{eq:HAHA2}, we get
  \begin{equation}
    \label{eq:HAplusHA}
    H_{\mu\nu\alpha} a_\beta + H_{\beta[\mu\nu} a_{\alpha]} = 0,
  \end{equation}
  and into equation~\eqref{eq:HAHA},
  \begin{equation}
    \label{eq:HAplusHAplusHAplusHA}
    H_{\mu\nu\alpha} a_\beta - H_{\mu\nu\beta} a_\alpha -
    H_{\alpha\beta\mu} a_\nu + H_{\alpha\beta\nu} a_\mu = 0.
  \end{equation}
  Subtracting the two equations, we find
  \begin{equation}
    \label{eq:HAsym}
    H_{\mu\nu(\alpha} a_{\beta)} = 0.
  \end{equation}
  Now let's contract this equation with $b^\beta$, where $b$ is a vector such that
  $b^\mu a_\mu = 1$, to find
  \begin{equation}
    H_{\mu\nu\alpha} + H_{\mu\nu\beta} a_\alpha b^\beta = 0.
  \end{equation}
  Contracting this equation with $b^\alpha$ now, we find
  $H_{\mu\nu\alpha}b^\alpha = 0$, which inserted in the previous
  equation, gives $H_{\mu\nu\alpha} = 0$, as desired.
\end{proof}

We have proved most of the following.

\begin{proposition}\label{prop:branchespreliminary}
  Let $(M,g,H,\varphi)$ be a (connected) lorentzian six-dimensional spin
  manifold endowed with a $3$-form $H$ and a $1$-form $\varphi$ with
  values in $\fsp(1)$. Assume $(M,g,H,\varphi)$ is maximally
  supersymmetric, that is, the dimension of the space of solutions to
  equation~\eqref{eq:KillingSpinor} is maximal. Then either (1) $H=0$ or
  (2) $\varphi=0$. Moreover:
  \begin{enumerate}
  \item In the first case $\varphi=a\otimes R$ is decomposable, where
    $a\in\Omega^1(M)$ is a parallel $1$-form on $M$ and $R\in\fsp(1)$ a
    fixed element of the R-symmetry algebra;
  \item In the second case $H$ is a coclosed conformal Killing $3$-form
    (i.e., $\nabla H=\tfrac14 dH$) with covariant derivative given by
    \begin{equation}
      \nabla_\mu H_{\nu\alpha\beta} = \tfrac23 \left(
        \left<H_{\mu\nu},H_{\alpha\beta}\right>  -
        \left<H_{\mu\alpha},H_{\nu\beta}\right> +
        \left<H_{\mu\beta},H_{\nu\alpha}\right> \right).
    \end{equation}
  \end{enumerate}
  In both cases, the algebraic curvature tensor components ($\Lambda^0V
  \oplus \odot^2_0 V \oplus W$) of the tensor \eqref{eq:T-component}
  give the Riemann curvature tensor in terms of $\varphi$ and,
  respectively, $H$.
\end{proposition}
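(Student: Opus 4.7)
The proof assembles the machinery of Sections~\ref{sec:curv-superc} and \ref{sec:zero-curv-cond}. The vanishing of the curvature $\eR$ decomposes, via~\eqref{eq:curvature}, into three pieces $\eT$, $\eU$, $\eV$; the plan is first to establish the global dichotomy ``$H \equiv 0$ or $\varphi \equiv 0$'' and then analyse the two branches in turn, reading off the Riemann tensor from the algebraic components of $\eT$ at the end.

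To establish the dichotomy I would combine two ingredients. Pointwise, Lemma~\ref{lem:HA} gives $H_{\mu\nu\alpha} \varphi_\beta = 0$, so at each $p \in M$ either $H|_p = 0$ or $\varphi|_p = 0$. The delicate point is to upgrade this to a global alternative, and for this I exploit the fact that the derived identity~\eqref{eq:eq5} is free of $\varphi$: it is obtained by totally antisymmetrising $\eT_{\mu\nu\alpha\beta} = 0$, and every $\varphi$-contribution to $\eT$ in~\eqref{eq:T-component} has the schematic form $T_{\text{pair}}\, g_{\text{pair}}$ with both pairs symmetric, so $[\mu\nu\alpha\beta]$-antisymmetrisation annihilates them. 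Restricted to any geodesic, \eqref{eq:eq5} becomes a first-order ODE for $H$ with quadratic (hence locally Lipschitz) right-hand side; uniqueness forces $H$ to vanish along the entire geodesic once it vanishes at a single point, and by connectedness $H|_p = 0$ at some $p$ implies $H \equiv 0$ on $M$. Consequently, if $\varphi \not\equiv 0$ we pick $p$ with $\varphi|_p \neq 0$, deduce $H|_p = 0$ from Lemma~\ref{lem:HA}, and propagate to $H \equiv 0$; otherwise $\varphi \equiv 0$.

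On the branch $H \equiv 0$, equation~\eqref{eq:V-skew-too} collapses to $[\varphi_\mu, \varphi_\nu] = 0$, which combined with~\eqref{eq:U-comp-zero} yields $\nabla_{[\mu} \varphi_{\nu]} = 0$. The $\Lambda^2(\Lambda^2 V)$-component of $\eV = 0$, namely~\eqref{eq:l2l2v}, then reduces to a purely algebraic identity on the now symmetric tensor $S_{\mu\nu} := \nabla_\mu \varphi_\nu$; one trace produces $4 S_{\mu\nu} + S g_{\mu\nu} = 0$ with $S = S_\mu{}^\mu$, and a second trace forces $S = 0$, whence $\nabla \varphi = 0$. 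Since $\{\varphi_\mu\}_\mu$ is a pairwise commuting family in $\fsp(1)$, and the centraliser of any non-zero element there is one-dimensional, either $\varphi \equiv 0$ or all $\varphi_\mu$ lie along a common line $\RR R \subset \fsp(1)$; writing $\varphi = a \otimes R$ the identity $\nabla \varphi = 0$ becomes $\nabla a = 0$. On the branch $\varphi \equiv 0$, equations~\eqref{eq:U-comp-zero} and~\eqref{eq:V-component} are automatic, \eqref{eq:eq4} shows that $\nabla H$ is a $4$-form (so $H$ is a coclosed conformal Killing $3$-form with $\nabla H = \tfrac14 dH$), and~\eqref{eq:eq5} is exactly the formula claimed.

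Finally, in either branch the algebraic-curvature part ($\Lambda^0 V \oplus \odot^2_0 V \oplus W$) of the equation $\eT = 0$ isolates the Riemann tensor in terms of the surviving background field, yielding the last assertion of the proposition. The single delicate step in this plan is the $\varphi$-independence of~\eqref{eq:eq5}; once that is secured, the rest is bookkeeping on the three components of $\eR$ combined with standard algebraic facts about $\fsp(1)$.
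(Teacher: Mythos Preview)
Your argument is essentially sound, but there is one mislabelling that, taken literally, leaves a gap. On the branch $H\equiv 0$, once you have $[\varphi_\mu,\varphi_\nu]=0$ and $\nabla_{[\mu}\varphi_{\nu]}=0$, the $\Lambda^2(\Lambda^2V)$-component \eqref{eq:l2l2v} becomes the tautology $0=0$: every surviving term in \eqref{eq:l2l2v} carries either an $H$, a commutator $[\varphi,\varphi]$, or an antisymmetrised derivative $\nabla_{[\mu}\varphi_{\nu]}$. The trace identity $4S_{\mu\nu}+S\,g_{\mu\nu}=0$ that you quote comes instead from the \emph{symmetric} component of $\eV=0$ (equation~\eqref{eq:s2l2v} in the paper, the $\odot^2(\Lambda^2V)$-part), or equivalently from tracing the full $\eV_{\mu\nu\alpha\beta}=0$ in~\eqref{eq:V-component} after setting $H=0$ and $[\varphi,\varphi]=0$. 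With this correction your deduction of $\nabla\varphi=0$ goes through.

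Your route to the global dichotomy is genuinely different from the paper's and worth noting. The paper first proves that $\varphi$ is parallel \emph{everywhere} (using that Lemma~\ref{lem:HA} forces $H\varphi=0$ pointwise, hence $[\varphi,\varphi]=0$ and $\nabla_{[\mu}\varphi_{\nu]}=0$ hold globally, and then \eqref{eq:s2l2v} gives $\nabla_{(\mu}\varphi_{\nu)}=0$ globally); since a parallel section is determined by its value at one point, $\varphi$ nonzero somewhere forces $\varphi$ nonzero everywhere, whence $H\equiv 0$ by Lemma~\ref{lem:HA}. You instead observe that the $\varphi$-contributions to $\eT$ in~\eqref{eq:T-component} are built from products of symmetric two-tensors and hence drop out under $[\nu\alpha\beta]$- and $[\mu\nu\alpha\beta]$-antisymmetrisation, so that \eqref{eq:eq5} holds unconditionally; then uniqueness for the resulting ODE along geodesics propagates $H|_p=0$ to $H\equiv 0$. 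Both arguments are correct. The paper's has the virtue of yielding $\nabla\varphi=0$ as a by-product valid on all of $M$ before the case split, while yours cleanly isolates the $\varphi$-independence of the $H$-evolution equation~\eqref{eq:eq5}, which is a useful structural observation in its own right.
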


\begin{proof}
  Lemma~\ref{lem:HA} shows that any point $p \in M$, either $\varphi=0$ or $H=0$ (or
  possibly both).  In particular, this means that at all points,
  $H_{\mu\nu\lambda}\varphi^\lambda = 0$, so that also $\widetilde
  H_{\mu\nu\lambda}\varphi^\lambda = 0$, $[\varphi_\mu, \varphi_\nu] = 0$ and $\nabla_{[\mu}
  \varphi_{\nu]} = 0$.

  The component of $\eV$ along $\odot^2(\Lambda^2V)$ gives, after using
  equation~\eqref{eq:V-skew},
  \begin{equation}
    \label{eq:s2l2v}
    \nabla_{(\mu} \varphi_{\beta)} g_{\nu\alpha} - \nabla_{(\mu}
    \varphi_{\alpha)} g_{\nu\beta} - \nabla_{(\nu} \varphi_{\beta)} g_{\mu\alpha} +
    \nabla_{(\nu} \varphi_{\alpha)} g_{\mu\beta} = 0,
  \end{equation}
  which upon contraction with $g^{\nu\alpha}$ gives
  \begin{equation}
    4 \nabla_{(\mu} \varphi_{\beta)} + \nabla^\lambda \varphi_\lambda g_{\mu\beta} = 0.
  \end{equation}
  Contracting with $g^{\mu\beta}$ we find $\nabla^\lambda \varphi_\lambda = 0$
  and inserting back into the previous equation,
  \begin{equation}
    \nabla_{(\mu} \varphi_{\beta)} = 0.
  \end{equation}
  Together with $\nabla_{[\mu} \varphi_{\nu]} = 0$, we conclude that $\varphi$ is
  parallel.  Parallel sections of vector bundles are determined by their
  value at any given point, hence if $\varphi=0$ at any point, it is
  identically zero.  In other words, either $\varphi$ is identically zero or, by
  Lemma \ref{lem:HA}, $H$ is identically zero.  (Of course, it is possible that both
  are identically zero, which corresponds to the trivial (flat)
  background.)

  In the first case $\varphi=a\otimes R$ for some parallel
  $a\in\Omega^1(M)$ and constant $R\in\fsp(1)$, since
  $\varphi:TM\to\fsp(1)$ has a $1$-dimensional range at any point $p\in
  M$ (due to $[\varphi_\mu, \varphi_\nu] = 0$) and it is parallel. The
  rest is clear.
\end{proof}
In summary, we have two branches of nontrivial backgrounds, which we
will analyse in turn.

\subsection{First branch: $H = 0$}
\label{sec:branch-Hzero}

Here $\varphi\neq 0$ and $\varphi_\mu{}^A{}_B = a_\mu R^A{}_B$, where
$R$ is a fixed element of the R-symmetry algebra $\fsp(1)$ and the
one-form $a$ is parallel. Without loss of generality we can normalise
$R$ so that $R^{AB}R_{AB} = 1$ or, equivalently, that $\tr(R^2) = -1$.
The fact that the $1$-form $a$ is parallel can be seen also from the
vanishing of the $\eT$-component~\eqref{eq:T-component} of the curvature
of the spinor connection, which in this branch becomes
\begin{equation}
  \label{eq:riemann-branch-1}
  R_{\mu\nu\alpha\beta} = 2 a^2 (g_{\mu\alpha} g_{\nu\beta} -
  g_{\mu\beta} g_{\nu\alpha}) - 2 (g_{\mu\alpha} a_\beta a_\nu -
  g_{\mu\beta} a_\alpha a_\nu - g_{\nu\alpha} a_\beta a_\mu +
  g_{\nu\beta} a_\alpha a_\mu),
\end{equation}
and noticing that $R_{\mu\nu\alpha\beta} a^\beta = 0$.  

The causal type of a parallel vector is constant, so we may distinguish
between three cases depending on whether $a$ is null, spacelike or
timelike. The discussion breaks up naturally into two cases, depending
on whether or not the squared norm $a^2$ of $a$ vanishes. In all cases,
it follows from the expression~\eqref{eq:riemann-branch-1} of the
Riemann tensor that the Weyl tensor vanishes and hence that all
geometries are conformally flat.

\subsubsection{$a^2\neq 0$}
\label{sec:a-nonnull}

Since $a$ is parallel, nowhere-vanishing and $a^2 \neq 0$, the de~Rham
decomposition theorem says that $(M,g)$ is locally isometric to a
product: $M = N \times \RR$, where $\RR$ is either timelike or spacelike
according to the causal type of $a$. To understand the geometry of $N$,
we rewrite the Riemann tensor in equation~\eqref{eq:riemann-branch-1} as
follows:
\begin{equation}
  \label{eq:riemann-branch-1-time}
  R_{\mu\nu\alpha\beta} =
 % 2 a^2 \left(\left(g_{\mu\alpha} -
 %      \frac{a_\alpha a_\mu}{a^2}\right) \left(g_{\nu\beta} -
 %      \frac{a_\beta a_\nu}{a^2}\right) - \left(g_{\mu\beta} -
 %      \frac{a_\beta a_\mu}{a^2}\right) \left(g_{\nu\alpha} -
 %      \frac{a_\alpha a_\nu}{a^2}\right) \right).
  2 a^2 \left( h_{\mu\alpha} h_{\nu\beta} - h_{\mu\beta} h_{\nu\alpha} \right),
\end{equation}
where the tensor $h_{\mu\nu} := g_{\mu\nu} - \frac{a_\mu a_\nu}{a^2}$.
Note that $h$ coincides with the induced metric on the distribution
$a^\perp$ perpendicular to $a$, which is the tangent bundle of $N$.  The
above form of the Riemann tensor makes it evident that $N$ has constant
sectional curvature, with Ricci tensor
\begin{equation}
  R_{\mu\beta} := R_\mu{}^\nu{}_{\nu\beta} = -8 a^2 h_{\mu\beta}.
\end{equation}
Therefore if $a$ is timelike, so that $a^2<0$, $(M,g)$ is locally
isometric to $(\RR,-dt^2) \times \Sph^5$, where $\Sph^5$ is a round
$5$-sphere with scalar curvature $-40 a^2$; whereas if $a$ is spacelike,
$(M,g)$ is locally isometric to $(\RR,dt^2) \times \AdS_5$, with
$\AdS_5$ the anti-de~Sitter spacetime with scalar curvature $-40 a^2$.

\subsubsection{$a$ is null}
\label{sec:a-null}

If $a$ is non-zero and null, $(M,g)$ is a Brinkmann space, with  Riemann curvature tensor given by
\begin{equation}
  \label{eq:riemann-branch-1-null}
  R_{\mu\nu\alpha\beta} = -2(g_{\mu\alpha} a_\beta a_\nu -
  g_{\mu\beta} a_\alpha a_\nu - g_{\nu\alpha} a_\beta a_\mu +
  g_{\nu\beta} a_\alpha a_\mu).
\end{equation}
It is clear by inspection of the above expression for the Riemann
curvature tensor, that the metric is both conformally flat and scalar
flat.  Furthermore, $R(a^\perp,a^\perp)=0$, so that the transverse
geometry is flat and since $a$ and $g$ are both parallel, so is the Riemann
tensor. Hence $(M,g)$ is locally isometric to a (possibly
decomposable) Cahen--Wallach plane wave \cite{CahenWallach}, with
metric 
\begin{equation}
  g = 2dx^+dx^- + \sum_{i,j=1}^4 B_{ij} x^i x^j (dx^-)^2 +
  \sum_{i=1}^4 (dx^i)^2,
\end{equation}
where the parallel null vector is $a = \partial_+$.  The only
nonzero components of the Weyl tensor of the metric $g$ are
\begin{equation}
  W_{-ij-} = B_{ij} - \tfrac14 (\tr B) \delta_{ij},
\end{equation}
so that $g$ is conformally flat if and only if $B$ is a scalar matrix.
From the explicit form  of the Riemann curvature tensor
\eqref{eq:riemann-branch-1-null} we see that $B$ is nonzero and up to a
local diffeomorphism we may write the metric down as
\begin{equation}
  \label{eq:ppwave}
  g_{\pm} = 2 dx^+dx^- \pm \tfrac14 \sum_{i=1}^4(x^i)^2(dx^-)^2 +
  \sum_{i=1}^4 (dx^i)^2.
\end{equation}
Comparing the Riemann tensor of $g_\pm$ with
\eqref{eq:riemann-branch-1-null} precisely selects the metric $g_-$ (see
also equation (13) in \cite{CFOSchiral}). The metric in this background
is (locally) isometric to the plane wave in \cite{Meessen}; but the
Killing spinors here obey a different equation than those of $d=6$
$(1,0)$ supergravity. In other words, we are in the curious situation
where the same geometry is maximally supersymmetric with respect to two
different notions of Killing spinors.

\subsection{Second branch: $\varphi=0$}
\label{sec:branch-Azero}

In the second branch, $\varphi=0$ and the Killing spinors satisfy
\begin{equation}
  \label{eq:KSbranch2}
  \eD_X \varepsilon = \nabla_X \varepsilon - \iota_X H \cdot \varepsilon
  = 0.
\end{equation}
We may understand this equation as saying that Killing spinors are
parallel with respect to (the spin lift of) the metric connection 
\begin{equation}
  D_X Y = \nabla_X Y + 2 h(X,Y)
\end{equation} 
with skew-symmetric torsion $g(h(X,Y),Z)=H(X,Y,Z)$. Since the
representation of $\fso(V)$ on $\Sigma_+$ is faithful, maximal
supersymmetry exactly amounts to the flatness of $D$ and, using
Proposition \ref{prop:branchespreliminary}, this condition is equivalent
to
\begin{equation}
\label{eq:flatnessD}
\begin{split}
  \nabla_\mu H_{\nu\alpha\beta} &= \tfrac23 \left(
    \left<H_{\mu\nu},H_{\alpha\beta}\right>  -
    \left<H_{\mu\alpha},H_{\nu\beta}\right> +
    \left<H_{\mu\beta},H_{\nu\alpha}\right> \right)\\
  R_{\mu\nu\alpha\beta} &=-2\nabla_\mu
  H_{\nu\alpha\beta} +2 \nabla_\nu H_{\mu\alpha\beta} -4 \left(
    \left<H_{\mu\alpha}, H_{\nu\beta}\right> -
    \left<H_{\mu\beta},H_{\nu\alpha}\right>\right)\\
    &=-\tfrac83\left<H_{\mu\nu},H_{\alpha\beta}\right>
-\tfrac43\left<H_{\mu\alpha},H_{\nu\beta}\right>
+\tfrac43\left<H_{\mu\beta},H_{\nu\alpha}\right>.
\end{split}
\end{equation}
If the torsion $3$-form $H$ is, in addition, parallel with respect to
$\nabla$ (equivalently it is closed) then $H$ obeys the Jacobi identity.
More precisely, it is possible to use \eqref{eq:flatnessD} to see that
in this case $(M,g)$ is locally isometric to a Lie group with a
bi-invariant lorentzian metric, $H$ is the Cartan $3$-form of the group
and $D$ the parallelising connection (see \cite[\S 2.3]{MR2436237}).
Furthermore $H$ is also $D$-parallel. We recall that the general
classification of lorentzian Lie groups is due to Medina
\cite{MedinaLorentzian}.

Now, it is a deep result of Cahen and Parker that a simply connected,
complete indecomposable \emph{lorentzian} manifold $(M,g)$ with a flat
metric connection with skew-torsion $H$ satisfies $\nabla H=0$
automatically \cite{MR0461388}. They also show that the assumption of
indecomposability can be relaxed.

In summary, a maximally supersymmetric background $(M,g,H)$ in the
second branch of our classification is (up to local isometry) a Lie
group with a bi-invariant lorentzian metric. The Lie algebra of such a
Lie group is a six-dimensional Lie algebra with a lorentzian
ad-invariant inner product and these have been listed in
\cite{CFOSchiral, MR2436237}. The corresponding backgrounds are:
\begin{enumerate}
\item $\RR^{5,1}$,
\item $\RR^{2,1}\times \Sph^3$,
\item $\RR^3\times \AdS_3$,
\item $\AdS_3\times \Sph^3$,
\item the plane wave \eqref{eq:ppwave} in \cite{Meessen}.
\end{enumerate}
We emphasise that the Cartan $3$-form $H$ may be chosen self-dual or
antiself-dual in cases (1), (4) and (5) above. Solutions with self-dual
Cartan $3$-form correspond to the maximally supersymmetric backgrounds
of $d=6$ $(1,0)$ supergravity.

We have proved the following classification result. We recall that in
our conventions $S$ is an irreducible representation of $\Spin(V)$ of
quaternionic dimension $2$.

\begin{theorem}
  \label{thm:maxsusy}
  Let $(M,g)$ be a lorentzian six-dimensional spin manifold, with
  associated spinor bundle $\SS\to M$ with typical fiber $S$. Let
  $H\in\Omega^3(M)$ be a $3$-form and $\varphi$ a $1$-form on $M$ with
  values in $\fsp(1)$. Let also
  \begin{equation}
    \eD_X \varepsilon:= \nabla_X \varepsilon - \iota_X H \cdot
    \varepsilon + 3\varphi(X)\cdot\varepsilon-X\wedge
    \varphi\cdot\varepsilon\;,
  \end{equation}
  be the associated spinor connection.
  If $\eD$ is flat, then $(M,g,H,\varphi)$ is locally isometric to one of the following:
  \begin{enumerate}[label=(\roman*)]
  \item $\varphi = H = 0$ and $(M,g)$ Minkowski spacetime;
  \item $H=0$ and $\varphi = a\otimes R$ for some parallel $1$-form $a$
    and fixed element $R $ of the R-symmetry algebra $\fsp(1)$ with
    $\tr(R^2) = -1$. Depending on the causal type of $a$:
    \begin{enumerate}[label=(\alph*)]
    \item $(M,g)$ the product $(\RR,-dt^2) \times \Sph^5$, with $\Sph^5$
      the round $5$-sphere of scalar curvature $-40 a^2$, for $a^2<0$;
    \item $(M,g)$ the product $(\RR,dt^2) \times \AdS_5$, with $\AdS_5$
      the five-dimensional anti-de~Sitter spacetime with scalar
      curvature $-40 a^2$, for $a^2>0$;
    \item and if $a^2 =0$, then $(M,g)$ a conformally flat lorentzian
      symmetric plane wave with metric
      \begin{equation}
        g = 2dx^+dx^- - \tfrac14 \sum_{i=1}^4(x^i)^2(dx^-)^2 +
        \sum_{i=1}^4 (dx^i)^2;
      \end{equation}
    \end{enumerate}
  \item  $\varphi=0$ and $H$ is the parallel Cartan $3$-form of a
    six-dimensional Lie group with bi-invariant lorentzian metric. If
    $H$ is in addition self-dual, these are the maximally supersymmetric
    backgrounds of $d=6$ $(1,0)$ supergravity.
  \end{enumerate}
\end{theorem}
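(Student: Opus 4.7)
The plan is to compute the curvature of the super-connection
\[
\eD_X = \nabla_X - \iota_X H + X\wedge\varphi - 3\varphi(X)
\]
on the spinor bundle $\SS$ and to translate the flatness condition $\eR(\eD)=0$ into a system of algebraic--differential constraints on $(g,H,\varphi)$. Writing $\eR_{\mu\nu}{}^A{}_B = \tfrac12 \eT_{\mu\nu\alpha\beta}\Gamma^{\alpha\beta}\delta^A{}_B + \eU_{\mu\nu}{}^A{}_B + \tfrac12 \eV_{\mu\nu\alpha\beta}{}^A{}_B \Gamma^{\alpha\beta}$ and decomposing each of $\eT$, $\eU$, $\eV$ into $\fso(V)\oplus\fsp(1)$-irreducibles, one gets: from $\eT=0$ an algebraic expression for the Riemann tensor in terms of $H$ and $\varphi$ together with the differential identities $\nabla H = \tfrac14 dH$ and a quadratic formula for $\nabla H$ in terms of $H\otimes H$; from $\eU=0$ an expression for $d\varphi$ in terms of $[\varphi,\varphi]$; and from $\eV=0$ both the bracket identity $[\varphi_\mu,\varphi_\nu]=\widetilde H_{\mu\nu\lambda}\varphi^\lambda$ and a bilinear coupling between $H$ and $\varphi$.

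The crucial step is a pointwise alternative: I would isolate the $\Lambda^2(\Lambda^2 V)$ component of $\eV=0$, use $\eU=0$ to eliminate $\nabla\varphi$, and then, after tracing once, derive the identity $2(H_{\mu\nu\alpha}\varphi_\beta - \text{cyclic}) = (H_{\mu\beta\lambda}g_{\nu\alpha} - \text{cyclic})\varphi^\lambda$. By fixing an $\fsp(1)$-component $a=\varphi^{AB}(p)$ at a point where $\varphi\ne 0$, a sequence of symmetrisations (first skew in $[\mu\nu\alpha]$, then contractions with $a^\beta$ and $a^\mu$) should force $H_{\mu\nu\lambda}a^\lambda=0$ and then $H_{\mu\nu(\alpha}a_{\beta)}=0$, and finally, contracting with any dual vector $b$ with $b\cdot a=1$, the conclusion $H(p)=0$. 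This is the delicate algebraic heart of the argument (Lemma~\ref{lem:HA}) and is where I expect the main technical obstacle to lie. Once this is in hand, $H_{\mu\nu\lambda}\varphi^\lambda\equiv 0$ globally, hence also $[\varphi_\mu,\varphi_\nu]=0$ and $\nabla_{[\mu}\varphi_{\nu]}=0$; together with the $\odot^2(\Lambda^2 V)$ component of $\eV=0$, which gives $\nabla_{(\mu}\varphi_{\nu)}=0$ after a trace, we conclude that $\varphi$ is parallel. Parallel sections are determined by their value at one point, so either $\varphi\equiv 0$ everywhere or $H\equiv 0$ everywhere, yielding the two branches.

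On the branch $H=0$, the parallel $\fsp(1)$-valued one-form $\varphi$ must be decomposable, $\varphi=a\otimes R$, because $[\varphi,\varphi]=0$ forces the image of $\varphi$ to lie in a one-dimensional subalgebra of $\fsp(1)$; normalising $R$ by $\tr R^2=-1$ the Riemann tensor reads
\[
R_{\mu\nu\alpha\beta} = 2a^2(g_{\mu\alpha}g_{\nu\beta}-g_{\mu\beta}g_{\nu\alpha}) - 2(g_{\mu\alpha}a_\beta a_\nu - \cdots).
\]
Depending on the causal type of $a$ (which is constant, as $a$ is parallel) I will invoke the de~Rham decomposition theorem in the non-null cases, recognising the transverse Riemann tensor in equation as that of a space of constant sectional curvature $-8a^2$ orthogonal to $a$, yielding $\RR\times\Sph^5$ and $\RR\times\AdS_5$; in the null case the manifold is a Brinkmann wave with parallel Riemann tensor, hence a Cahen--Wallach symmetric plane wave, and comparing with $B_{ij}=-\tfrac14\delta_{ij}$ picks out $g_-$ in \eqref{eq:ppwave}.

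On the branch $\varphi=0$, the Killing spinor equation reduces to parallelism with respect to the metric connection $D=\nabla+2h$ with skew torsion $H$, and flatness of $\eD$ is equivalent to flatness of $D$ since $\Sigma_+$ is a faithful $\fso(V)$-module. The already-derived identities reduce to $\nabla H = \tfrac23(\langle H,H\rangle\text{-terms})$ and the Riemann formula in \eqref{eq:flatnessD}. Assuming $H$ is closed — which I would first verify as a consequence of $\nabla H = \tfrac14 dH$ combined with the trace-free part of the symmetrisation of the quadratic formula for $\nabla H$ — it follows that $\nabla H = 0$ and that $H$ satisfies the Jacobi identity, so $(M,g)$ is locally a Lie group with bi-invariant lorentzian metric, $H$ its Cartan three-form, and $D$ its parallelising connection, as in \cite[\S 2.3]{MR2436237}. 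The deep theorem of Cahen--Parker~\cite{MR0461388} allows one to drop the closedness assumption. Medina's classification~\cite{MedinaLorentzian} of six-dimensional Lie algebras with an ad-invariant lorentzian inner product then gives the explicit list $\RR^{5,1}$, $\RR^{2,1}\times\Sph^3$, $\RR^3\times\AdS_3$, $\AdS_3\times\Sph^3$, and the symmetric plane wave \eqref{eq:ppwave}, completing the classification.
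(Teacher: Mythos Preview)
Your proposal is correct and follows essentially the same route as the paper's proof: the same curvature decomposition into $\eT$, $\eU$, $\eV$, the same pointwise alternative via Lemma~\ref{lem:HA}, parallelism of $\varphi$ forcing the dichotomy, and the same case-by-case analysis (de~Rham/Cahen--Wallach on the first branch, Cahen--Parker and Medina on the second). One minor caveat: on the $\varphi=0$ branch, closedness of $H$ does \emph{not} follow purely algebraically from the zero-curvature identities as you tentatively suggest --- the paper invokes the Cahen--Parker theorem precisely to obtain $\nabla H=0$ rather than deriving it from the quadratic formula --- so your fallback on that result is in fact the essential step, not an optional shortcut.
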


\subsection{Killing superalgebras and filtered deformations}
\label{sec:KSAfd}

A natural question is whether the Killing spinors in the maximally
supersymmetric backgrounds of Theorem~\ref{thm:maxsusy} generate always
a Lie superalgebra and if different backgrounds have different
associated Killing superalgebras (in particular, if the plane wave that
is maximally supersymmetric in two different senses has different
Killing superalgebras for the different types of Killing spinors).

Concerning existence, there is clearly nothing to check for the trivial
Minkowski background. In the first branch, where $H=0$ and
$\varphi=a\otimes R$ is parallel (in particular coclosed), this follows
from Theorem \ref{thm:KSAI}. In the second branch, the existence is
guaranteed from the fact that $H=H^++H^-$ is closed and $D$-parallel, so
that Theorem \ref{thm:KSAII} applies. When $H$ is self-dual, Theorem
\ref{thm:KSAI} is actually sufficient, giving rise to an ideal
$\fk=\fk_{\bar 0}\oplus\fk_{\bar 1}$ of the general extended Killing
superalgebra $\fkhat=\fkhat_{\bar 0}\oplus\fkhat_{\bar 1}$.

The fact that maximally supersymmetric backgrounds are distinguished by
their Killing superalgebras is a consequence of the general theory of
filtered deformations (of subalgebras) of the Poincaré superalgebra,
possibly extended by R-symmetries. The full line of arguments is as for
the eleven-dimensional case \cite{Figueroa-OFarrill:2015rfh,
  Figueroa-O'Farrill:2015utu, Figueroa-OFarrill:2016khp} and
four-dimensional case \cite{deMedeiros:2016srz}, and we will not
replicate it here. We will review the main ingredients in the simpler
case of "maximally supersymmetric" filtered deformations, which is
enough for the purpose of this paper, and comment on any adjustment
which results from the presence of R-symmetries.

Let 
\begin{equation}
\label{eq:gradedA}
\fa=\fa_{0}\oplus\fa_{-1}\oplus\fa_{-2}
\end{equation}
be a $\ZZ$-graded Lie subalgebra of the $d{=}6$ $(1,0)$ Poincaré superalgebra 
\begin{equation}
\begin{split}
  \fp &= \fp_{-2} \oplus \fp_{-1} \oplus \fp_0\\
 &= V \oplus S \oplus \fso(V)\;,
  \end{split}
\end{equation}
which satisfies $\fa_{-1}=S$ and $\fa_{-2}=V$. The fact that
$\fa_{-1}=S$ means we have maximal supersymmetry, whereas $\fa_{-2}=V$
(which is forced by the local homogeneity theorem in
\cite{Figueroa-O'Farrill:2013aca}) means we are
describing (locally) homogeneous geometries. 

\begin{definition}
  A \emph{filtered deformation} of $\fa$ is a Lie superalgebra $\fg$
  supported on the same underlying vector space of $\fa$ whose Lie
  brackets have nonnegative total degree, with the zero-degree
  components coinciding with the Lie brackets of $\fa$.
\end{definition}

If we do not wish to mention the subalgebra $\fa$ of $\fp$ explicitly,
we simply say that $\fg$ is a (maximally supersymmetric) filtered
subdeformation of $\fp$. The notion of a filtered subdeformation $\fg$
of the extended Poincaré superalgebra $\fphat$ can be introduced in a
completely analogous way. We note that the spin group $\Spin(V)$
naturally acts on $\fp$ by $0$-degree Lie superalgebra automorphisms, so
that any element $g\in\Spin(V)$ sends a graded subalgebra of $\fp$ into
an (isomorphic) graded subalgebra of $\fp$ and filtered subdeformations
into filtered subdeformations. A similar observation holds for the
action of $\Spin(V)\times\Sp(1)$ on $\fphat$.

The $\ZZ_2$-grading of $\fa$ is compatible with the $\ZZ$-grading, in
that $\fa_{\overline 0}=\fa_0\oplus\fa_{-2}$ and
$\fa_{\overline 1}=\fa_{-1}$. In particular the components of the Lie
brackets of a filtered subdeformation $\fg$ of $\fp$ or $\fphat$ have
even (nonnegative) degree, which is at most four.

First, we wish to localise the Killing superalgebra associated to a
maximally supersymmetric background $(M,g,H,\varphi)$ of Theorem
\ref{thm:maxsusy} at a point $p\in M$. The construction is parallel to
that given in four-dimensions \cite[\S 3.3]{deMedeiros:2016srz} and
eleven-dimensions \cite[\S 3.1, \S
3.2]{Figueroa-OFarrill:2016khp}.

First of all, elements of the Killing superalgebra may be identified
with parallel sections of the supervector bundle
$\eE = \eE_{\bar 0} \oplus \eE_{\bar 1}$,
\begin{equation}
  \eE_{\bar 0} = TM \oplus \fso(TM)\qquad\text{and}\qquad\eE_{\bar 1} =
  \SS\;.
\end{equation}
For extended Killing superalgebras, one needs to set
$\eE_{\bar 0}=TM\oplus\fso(TM)\oplus \mathfrak{sp}(\mathcal H)$. It is
clear that Killing spinors and R-symmetries are parallel w.r.t. the
connection $\eD$, whereas it is a well-known fact that any Killing
vector is identified with a parallel section of $TM\oplus\fso(TM)$ by
the so-called Killing transport.

Hence, any element of the (extended) Killing superalgebra is determined
by the value at $p\in M$ of the corresponding parallel section of $\eE$
and the Killing superalgebra itself defines a graded subspace $\fa$ of
the (extended) Poincaré superalgebra. It is not difficult to see that
\begin{equation}
  \fa=\begin{cases}
    V\oplus S\oplus\fh\quad\text{if}\;\;H^{-}=0\;, \\
    V\oplus S\oplus (\fh\oplus\fr)\quad\text{if}\;\;H^{-}\neq0\;,
  \end{cases}
\end{equation} 
for some subalgebra $\fh$ of $\fso(V)$.  Tracking back the Lie algebra
structure of the Killing superalgebra yields the following Lie brackets
on $\fa$: 
\begin{equation}
  \label{eq:KSAasFDII}
  \begin{split}
    [L,M]&= LM - ML\\
    [L,A]&=0\\
    [A,B]&=AB-BA\\
    [L,s]&=\tfrac12 \omega_L\cdot s\\
    [A,s]&=As\\
    [L,v]&=Lv+\underbrace{[L,X_v]-X_{Lv}}_{\text{element of}\;\fh}\\
    [A,v]&=0\\
    [s,s]&=\kappa(s,s) + \underbrace{\gamma(s,s) -
      X_{\kappa(s,s)}}_{\text{element of}\; \fh}+\underbrace{\rho(s,s)}_{\text{element of}\; \fr}\\
    [v,s]&=\underbrace{\beta_v s + 
      \tfrac12 \omega_{X_v}\cdot s}_{\text{element of}\; S}\\
    [v,w]&=\underbrace{X_v w - X_w v}_{\text{element of}\; V} +
    \underbrace{[X_v,X_w] + R(v,w) - X_{X_v w- X_w v}}_{\text{element of}\; \fh},
  \end{split}
\end{equation}
for all $L,M\in\fso(V)$, $s\in S$, $v,w\in V$ and $A,B\in\fr$. Here
$X:V\to\fso(V)$ is a linear map which geometrically corresponds to the
choice of a basis of $T_pM$ consisting of (the values at the point of)
some Killing vectors, $R$ is the Riemann curvature tensor and the maps
$\beta$, $\gamma$ and $\rho$ are as determined in Theorems
\ref{thm:spencer} and \ref{thm:spencer-R}. If $H^-=0$ then $\fr$ is not
included in $\fa$ and the Lie brackets in \eqref{eq:KSAasFDII} involving
elements $A,B\in\fr$ need to be disregarded. We also note that $\rho=0$
in this case. It is clear from \eqref{eq:KSAasFDII} that the Killing
superalgebra (resp. the extended Killing superalgebra) is isomorphic to
a filtered subdeformation $\fg$ of $\fp$ (resp. $\fphat$), where the
components of positive degree of the Lie brackets are given by the
underbraced elements and the underlying graded Lie superalgebra is $\fa$
with its natural Lie brackets.

Now, one can easily check (compare e.g. with Lemma 3 of
\cite{Figueroa-OFarrill:2016khp}) that the space of cocycles
$Z^{d,2}(\fa_-,\fa)=0$ for all even $d\geq 4$ and that $\fa$ is a full
prolongation of degree $2$ in the sense of \cite{Cheng:1999cy}, that is
$H^{d,1}(\fa_-,\fa)=0$ for all $d\geq 2$ (see also Lemma \ref{lem:h21}). It follows that the
infinitesimal filtered deformation is completely determined by (the
orbit under $\Spin(V)\times \Sp(1)$ of) the corresponding cohomology
class in $H^{2,2}(\fa_-,\fa)$. Associated to the natural inclusion
$\imath$ of $\fa$ into $\fp$ or $\fphat$ there is a map in cohomology
\begin{equation}
  \label{eq:mapadm}
  \begin{split}
    \imath_*&:H^{2,2}(\fa_-,\fa)\to \Lambda^3 V \oplus \left(V \otimes \odot^2\Delta\right)\;,
  \end{split}
\end{equation}
which is easily seen to be injective (remember that
$\fa_-=\fp_-=\fphat_-=V\oplus S$ by maximal supersymmetry).

Hence, the infinitesimal filtered deformation is completely determined
by (the orbit of) $\varphi|_p\in V \otimes \odot^2\Delta$ in the first
branch and $H|_p\in \Lambda^3 V$ in the second branch. With some more
effort, we may show that this data do actually allow to recover the
whole filtered deformation, up to isomorphisms of filtered
subdeformations of $\fp$ or $\fphat$. The claim relies on
$Z^{4,2}(\fa_-,\fa)=0$, the fact that $\fa$ is a full prolongation of
degree $2$ and the general results of \cite{Cheng:1999cy} -- we refer to
\cite[Proposition 10]{deMedeiros:2016srz} for more details.

We have seen that the maximally supersymmetric backgrounds
$(M,g,H,\varphi)$ of Theorem \ref{thm:maxsusy} have different associated
Killing superalgebras, up to isomorphisms of filtered subdeformations of
$\fp$ if $H^{-}=0$ and $\fphat$ if $H^{-}\neq 0$. It is certainly
possible to give an explicit description of these Lie superalgebras
using \eqref{eq:KSAasFDII}, in the spirit of
\cite[\S5]{deMedeiros:2016srz} and
\cite[\S4]{Figueroa-OFarrill:2015rfh}, but we don't do that for the sake of brevity.

\section*{Acknowledgments}

The research of JMF is supported in part by the grant ST/L000458/1
``Particle Theory at the Higgs Centre'' from the UK Science and
Technology Facilities Council. The research of AS is supported by the
project "Lie superalgebra theory and its applications" of the University
of Bologna and partly supported by the Project Prin 2015 "Moduli spaces
and Lie Theory" and Project "RFO14CANTN RFO 2014".

This work was initiated during a visit of JMF to the University of
Stavanger in July 2016 (supported by the Research Council of Norway,
Toppforsk grant no.~250367, held by Sigbjørn Hervik) and it is his
pleasure to thank Paul and Sigbjørn for the invitation and the
hospitality. The third author AS would like to thank José and the
University of Edinburgh for hospitality in November 2017 during the
final stages of this work.

\providecommand{\href}[2]{#2}\begingroup\raggedright\endgroup

\end{document}